\journal{Journal of Computational Physics}
\newtheorem{define}{Definition}
\newtheorem{lemma}{Lemma}
\newtheorem{rem}{Remark}
\newtheorem{theorem}{Theorem}
\begin{document}

\begin{frontmatter}

\title{Realizability-Preserving DG-IMEX Method for the Two-Moment Model of Fermion Transport \tnoteref{support}\tnoteref{copyright}}
\tnotetext[support]{
This research is sponsored, in part, by the Laboratory Directed Research and Development Program of Oak Ridge National Laboratory (ORNL), managed by UT-Battelle, LLC for the U. S. Department of Energy under Contract No. De-AC05-00OR22725.  
This research was supported by the Exascale Computing Project (17-SC-20-SC), a collaborative effort of the U.S. Department of Energy Office of Science and the National Nuclear Security Administration.  
This material is based, in part, upon work supported by the U.S. Department of Energy, Office of Science, Office of Advanced Scientific Computing Research.  
Eirik Endeve was supported in part by NSF under Grant No. 1535130.}
\tnotetext[copyright]{
This manuscript has been authored by UT-Battelle, LLC under Contract No. DE-AC05-00OR22725 with the U.S. Department of Energy. The United States Government retains and the publisher, by accepting the article for publication, acknowledges that the United States Government retains a non-exclusive, paid-up, irrevocable, world-wide license to publish or reproduce the published form of this manuscript, or allow others to do so, for United States Government purposes. The Department of Energy will provide public access to these results of federally sponsored research in accordance with the DOE Public Access Plan(http://energy.gov/downloads/doe-public-access-plan).}

\author[utk-phys]{Ran Chu}
\ead{rchu@vols.utk.edu}

\author[ornl,utk-phys,jics]{Eirik Endeve\corref{cor}}
\ead{endevee@ornl.gov}

\author[ornl,utk-math]{Cory D. Hauck}
\ead{hauckc@ornl.gov}

\author[utk-phys,jics]{Anthony Mezzacappa}
\ead{mezz@utk.edu}

\cortext[cor]{Corresponding author. Tel.:+1 865 576 6349; fax:+1 865 241 0381}

\address[ornl]{Computational and Applied Mathematics Group, Oak Ridge National Laboratory, Oak Ridge, TN 37831 USA }

\address[utk-phys]{Department of Physics and Astronomy, University of Tennessee Knoxville, TN 37996-1200}

\address[jics]{Joint Institute for Computational Sciences, Oak Ridge National Laboratory, Oak Ridge, TN 37831-6354}

\address[utk-math]{Department of Mathematics, University of Tennessee Knoxville, TN 37996-1320}

\begin{abstract}
Building on the framework of Zhang \& Shu \cite{zhangShu_2010a,zhangShu_2010b}, we develop a realizability-preserving method to simulate the transport of particles (fermions) through a background material using a two-moment model that evolves the angular moments of a phase space distribution function $f$.  
The two-moment model is closed using algebraic moment closures; e.g., as proposed by Cernohorsky \& Bludman \cite{cernohorskyBludman_1994} and Banach \& Larecki \cite{banachLarecki_2017a}.  
Variations of this model have recently been used to simulate neutrino transport in nuclear astrophysics applications, including core-collapse supernovae and compact binary mergers.  
We employ the discontinuous Galerkin (DG) method for spatial discretization (in part to capture the asymptotic diffusion limit of the model) combined with implicit-explicit (IMEX) time integration to stably bypass short timescales induced by frequent interactions between particles and the background.  
Appropriate care is taken to ensure the method preserves strict algebraic bounds on the evolved moments (particle density and flux) as dictated by Pauli's exclusion principle, which demands a bounded distribution function (i.e., $f\in[0,1]$).  
This realizability-preserving scheme combines a suitable CFL condition, a realizability-enforcing limiter, a closure procedure based on Fermi-Dirac statistics, and an IMEX scheme whose stages can be written as a convex combination of forward Euler steps combined with a backward Euler step.  
The IMEX scheme is formally only first-order accurate, but works well in the diffusion limit, and --- without interactions with the background --- reduces to the optimal second-order strong stability-preserving explicit Runge-Kutta scheme of Shu \& Osher \cite{shuOsher_1988}.  
Numerical results demonstrate the realizability-preserving properties of the scheme.  
We also demonstrate that the use of algebraic moment closures not based on Fermi-Dirac statistics can lead to unphysical moments in the context of fermion transport.  
\end{abstract}

\begin{keyword}
Boltzmann equation, 
Radiation transport, 
Hyperbolic conservation laws, 
Discontinuous Galerkin, 
Implicit-Explicit, 
Moment Realizability
\end{keyword}

\end{frontmatter}

\section{Introduction}
\label{sec:intro}

In this paper we design numerical methods to solve a two-moment model that governs the transport of particles obeying Fermi-Dirac statistics (e.g., neutrinos), with the ultimate target being nuclear astrophysics applications (e.g., neutrino transport in core-collapse supernovae and compact binary mergers).  
The numerical method is based on the discontinuous Galerkin (DG) method for spatial discretization and implicit-explicit (IMEX) methods for time integration, and it is designed to preserve certain physical constraints of the underlying model.  
The latter property is achieved by considering the spatial and temporal discretization together with the closure procedure for the two-moment model.  

In many applications, the particle mean free path is comparable to or exceeds other characteristic length scales in the system under consideration, and non-equilibrium effects may become important.  
In these situations, a kinetic description based on a particle distribution function may be required.  
The distribution function, a phase space density $f$ depending on momentum $\vect{p}\in\bbR^{3}$ and position $\vect{x}\in\bbR^{3}$, is defined such that $f(\vect{p},\vect{x},t)$ gives at time $t\in\bbR^{+}$ the number of particles in the phase space volume element $d\vect{p}\,d\vect{x}$ (i.e., $d\cN=f\,d\vect{p}\,d\vect{x}$).  
The evolution of the distribution function is governed by the Boltzmann equation, which states a balance between phase space advection and particle collisions (see, e.g., \cite{braginskii_1965,chapmanCowling_1970,lifshitzPitaevskii_1981}).  

Solving the Boltzmann equation numerically for $f$ is challenging, in part due to the high dimensionality of phase space.  
To reduce the dimensionality of the problem and make it more computationally tractable, one may instead solve (approximately) for a finite number of angular moments $\vect{m}_{N}=(m^{(0)},m^{(1)},\ldots,m^{(N)})^{T}$ of the distribution function, defined as
\begin{equation}
  m^{(k)}(\varepsilon,\vect{x},t)=\f{1}{4\pi}\int_{\bbS^{2}}f(\omega,\varepsilon,\vect{x},t)\,g^{(k)}(\omega)\,d\omega,
\end{equation}
where $\varepsilon=|\vect{p}|$ is the particle energy, $\omega$ is a point on the unit sphere $\bbS^{2}$ indicating the particle propagation direction, and $g^{(k)}$ are momentum space angular weighing functions.  
In problems where collisions are sufficiently frequent, solving a \emph{truncated moment problem} can provide significant reductions in computational cost since only a few moments are needed to represent the solution accurately.  
On the other hand, in problems where collisions do not sufficiently isotropize the distribution function, more moments may be needed.  
In the two-moment model considered here ($N=1$), angular moments representing the particle density and flux (or energy density and momentum) are solved for.  
Two-moment models for relativistic systems appropriate for nuclear astrophysics applications have been discussed in, e.g., \cite{lindquist_1966,andersonSpiegel_1972,thorne_1981,shibata_etal_2011,cardall_etal_2013a}.  
However, in this paper, for simplicity (and clarity), we consider a non-relativistic model, leaving extensions to relativistic systems for future work.  

In a truncated moment model, the equation governing the evolution of the \mbox{$N$-th} moment $m^{(N)}$ contains higher moments $\{m^{(k)}\}_{k=N+1}^{M}$ ($M>N$), which must be specified in order to form a closed system of equations.  
For the two-moment model, the symmetric rank-two Eddington tensor (proportional to the pressure tensor) must be specified.  
Approaches to this \emph{closure problem} include setting $m^{(k)}=0$, for $k>N$ ($P_N$ equations \cite{brunnerHolloway_2005} and filtered versions thereof \cite{mcclarrenHauck_2010,laboure_etal_2016}), Eddington approximation (when $N=0$) \cite{mihalasMihalas_1999}, Kershaw-type closure \cite{kershaw_1976}, and maximum entropy closure \cite{minerbo_1978,cernohorskyBludman_1994,olbrant_etal_2013}.  
The closure procedure often results in a system of nonlinear hyperbolic conservation laws, which can be solved using suitable numerical methods (e.g., \cite{leveque_1992}).  

One challenge in solving the closure problem is constructing a sequence of moments that are consistent with a positive distribution function, which typically implies algebraic constraints on the moments \cite{kershaw_1976,levermore_1984}.  
Moments satisfying these constraints are called \emph{realizable moments} (e.g., \cite{levermore_1996}).  
When evolving a truncated moment model numerically, maintaining realizable moments is challenging, but necessary in order to ensure the well-posedness of the closure procedure \cite{levermore_1996,junk_1998,hauck_2008}.  
In addition to putting the validity of the numerical results into question, failure to maintain moment realizability in a numerical model may, in order to continue a simulation, require ad hoc post-processing steps with undesirable consequences such as loss of conservation.  

Here we consider a two-moment model for particles governed by Fermi-Dirac statistics.  
It is well known from the two-moment model for particles governed by Maxwell-Boltzmann statistics (``classical'' particles with $f\ge0$), that the particle density is nonnegative and the magnitude of the flux vector is bounded by the particle density.  
(There are further constraints on the components of the Eddington tensor \cite{levermore_1984}.)  
Furthermore, the set of realizable moments generated by the particle density and flux vector constitutes a convex cone \cite{olbrant_etal_2012}.  
In the fermionic case, there is also an upper bound on the distribution function (e.g., $f\le1$) because Pauli's exclusion principle prevents particles from occupying the same microscopic state.  
The fermionic two-moment model has recently been studied theoretically in the context of maximum entropy closures \cite{lareckiBanach_2011,banachLarecki_2013,banachLarecki_2017b} and Kershaw-type closures \cite{banachLarecki_2017a}.  
Because of the upper bound on the distribution function, the algebraic constraints on realizable moments differ from the classical case with no upper bound, and can lead to significantly different dynamics when the occupancy is high (i.e., when $f$ is close to its upper bound).  
In the fermionic case, the set of realizable moments generated by the particle density and flux vector is also convex. 
It is ``eye-shaped'' (as will be shown later; cf. Figure~\ref{fig:RealizableSetFermionic} in Section~\ref{sec:realizability}) and tangent to the classical realizability cone on the end representing low occupancy, but is much more restricted for high occupancy.  

In this paper, the two-moment model is discretized in space using high-order Discontinuous Galerkin (DG) methods (e.g., \cite{cockburnShu_2001,hesthavenWarburton_2008}).  
DG methods combine elements from both spectral and finite volume methods and are an attractive option for solving hyperbolic partial differential equations (PDEs).  
They achieve high-order accuracy on a compact stencil; i.e., data is only communicated with nearest neighbors, regardless of the formal order of accuracy, which can lead to a high computation to communication ratio, and favorable parallel scalability on heterogeneous architectures has been demonstrated \cite{klockner_etal_2009}.  
Furthermore, they can easily be applied to problems involving curvilinear coordinates (e.g., beneficial in numerical relativity \cite{teukolsky_2016}).  
Importantly, DG methods exhibit favorable properties when collisions with a background are included, as they recover the correct asymptotic behavior in the diffusion limit, characterized by frequent collisions (e.g., \cite{larsenMorel_1989,adams_2001,guermondKanschat_2010}).  
The DG method was introduced in the 1970s by Reed \& Hill \cite{reedHill_1973} to solve the neutron transport equation, and has undergone remarkable developments since then (see, e.g., \cite{shu_2016} and references therein).  

We are concerned with the development and application of DG methods for the fermionic two-moment model that can preserve the aforementioned algebraic constraints and ensure realizable moments, provided the initial condition is realizable.  
Our approach is based on the constraint-preserving (CP) framework introduced in \cite{zhangShu_2010a}, and later extended to the Euler equations of gas dynamics in \cite{zhangShu_2010b}.  
(See, e.g., \cite{xing_etal_2010,zhangShu_2011,olbrant_etal_2012,cheng_etal_2013,zhang_etal_2013,endeve_etal_2015,wuTang_2015} for extensions and applications to other systems.)  
The main ingredients include (1) a realizability-preserving update for the cell averaged moments based on forward Euler time stepping, which evaluates the polynomial representation of the DG method in a finite number of quadrature points in the local elements and results in a Courant-Friedrichs-Lewy (CFL) condition on the time step; (2) a limiter to modify the polynomial representation to ensure that the algebraic constraints are satisfied point-wise without changing the cell average of the moments; and (3) a time stepping method that can be expressed as a convex combination of Euler steps and therefore preserves the algebraic constraints (possibly with a modified CFL condition).  
As such, our method is an extension of the realizability-preserving scheme developed by Olbrant el al. \cite{olbrant_etal_2012} for the classical two-moment model.  

The DG discretization leaves the temporal dimension continuous.  
This semi-discretization leads to a system of ordinary differential equations (ODEs), which can be integrated with standard ODE solvers (i.e., the method of lines approach to solving PDEs).  
We use implicit-explicit (IMEX) Runge-Kutta (RK) methods \cite{ascher_etal_1997,pareschiRusso_2005} to integrate the two-moment model forward in time.  
This approach is motivated by the fact that we can resolve time scales associated with particle streaming terms in the moment equations, which will be integrated with explicit methods, while terms associated with collisional interactions with the background induce fast time scales that we do not wish to resolve, and will be integrated with implicit methods.  
This splitting has some advantages when solving kinetic equations since the collisional interactions may couple across momentum space, but are local in position space, and are easier to parallelize than a fully implicit approach.  

The CP framework of \cite{zhangShu_2010a} achieves high-order (i.e., greater than first-order) accuracy in time by employing strong stability-preserving explicit Runge-Kutta (SSP-RK) methods \cite{shuOsher_1988,gottlieb_etal_2001}, which can be written as a convex combination of forward Euler steps.  
Unfortunately, this strategy to achieve high-order temporal accuracy does not work as straightforwardly for standard IMEX Runge-Kutta (IMEX-RK) methods because implicit SSP Runge-Kutta methods with greater than first-order accuracy have time step restrictions similar to explicit methods \cite{gottlieb_etal_2001}.  
To break this ``barrier,'' recently proposed IMEX-RK schemes \cite{chertock_etal_2015,hu_etal_2018} have resorted to first-order accuracy in favor of the SSP property in the standard IMEX-RK scheme, and recover second-order accuracy with a correction step.  

We consider the application of the correction approach to the two-moment model.  
However, with the correction step from \cite{chertock_etal_2015} we are unable to prove the realizability-preserving property without invoking an overly restrictive time step.  
With the correction step from \cite{hu_etal_2018} the realizability-preserving property is guaranteed with a time step comparable to that of the forward Euler method applied to the explicit part of the scheme, but the resulting scheme performs poorly in the asymptotic diffusion limit.  
Because of these challenges, we resort to first-order temporal accuracy, and propose IMEX-RK schemes that are convex-invariant with a time step equal to that of forward Euler on the explicit part, perform well in the diffusion limit, and reduce to a second-order SSP-RK scheme in the streaming limit (no collisions with the background material).  

The realizability-preserving property of the DG-IMEX scheme depends sensitively on the adopted closure procedure.  
The explicit update of the cell average can, after employing the simple Lax-Friedrichs flux and imposing a suitable CFL condition on the time step, be written as a convex combination.  
Realizability of the updated cell average is then guaranteed from convexity arguments \cite{zhangShu_2010a}, provided all the elements in the convex combination are realizable.  
Realizability of individual elements in the convex combination is conditional on the closure procedure (components of the Eddington tensor must be computed to evaluate numerical fluxes).  
We prove that each element in the convex combination is realizable provided the moments involved in expressing the elements are moments of a distribution function satisfying the bounds implied by Fermi-Dirac statistics (i.e., $0\le f \le 1$).  
For algebraic two-moment closures, which we consider, the so-called Eddington factor is given by an algebraic expression depending on the evolved moments and completely determines the components of the Eddington tensor.  
Realizable components of the Eddington tensor demand that the Eddington factor satisfies strict lower and upper bounds (e.g., \cite{levermore_1984,lareckiBanach_2011}).  
We discuss algebraic closures derived from Fermi-Dirac statistics that satisfy these bounds, and demonstrate with numerical experiments that the DG-IMEX scheme preserves realizability of the moments when these closures are used.  
We also demonstrate that further approximations to algebraic two-moment closures for modeling particle systems governed by Fermi-Dirac statistics may give results that are incompatible with a bounded distribution and, therefore, unphysical.  
The example we consider is the Minerbo closure \cite{minerbo_1978}, which can be obtained as the low occupancy limit of the maximum entropy closure of Cernohorsky \& Bludman \cite{cernohorskyBludman_1994}.  

The paper is organized as follows.  
In Section~\ref{sec:model} we present the two-moment model.  
In Section~\ref{sec:realizability} we discuss moment realizability for the fermionic two-moment model, while algebraic moment closures are discussed in Section~\ref{sec:algebraicClosure}.  
In Section~\ref{sec:dg} we briefly introduce the DG method for the two-moment model, while the (convex-invariant) IMEX time stepping methods we use are discussed in Section~\ref{sec:imex}.  
The main results on the realizability-preserving DG-IMEX method for the fermionic two-moment model are worked out in Sections~\ref{sec:realizableDGIMEX} and \ref{sec:limiter}.  
In Section~\ref{sec:limiter} we also discuss the realizability-enforcing limiter.  
Numerical results are presented in Section~\ref{sec:numerical}, and summary and conclusions are given in Section~\ref{sec:conclusions}.  
Additional details on the IMEX schemes are provided in Appendices.  
\section{Mathematical Model}
\label{sec:model}

In this section we give a summary of the mathematical model.  

\subsection{Boltzmann Equation}

We consider approximate solutions to the Boltzmann equation for the transport of massless particles through a static material in Cartesian geometry, which, after scaling to dimensionless units, can be written as
\begin{equation}
  \pd{f}{t}+\vect{\ell}\cdot\nabla f
  =\f{1}{\tau}\,\cC(f),
  \label{eq:boltzmann}
\end{equation}
where the distribution function $f\colon(\omega,\varepsilon,\vect{x},t)\in\bbS^{2}\times\bbR^{+}\times\bbR^{3}\times\bbR^{+}\to\bbR^{+}$ gives the number of particles propagating in the direction $\omega\in\bbS^{2}:=\{\,\omega=(\thetaNu,\phiNu)~|~\thetaNu\in[0,\pi],\phiNu\in[0,2\pi)\,\}$, with energy $\varepsilon\in\bbR^{+}$, at position $\vect{x}\in\bbR^{3}$ and time $t\in\bbR^{+}$.  
Here we use spherical momentum space coordinates $(\varepsilon,\omega)$, and the unit vector $\vect{\ell}(\omega)\in\bbR^{3}$ (independent of $\varepsilon$ and $\vect{x}$) is parallel to the particle three-momentum $\vect{p}=\varepsilon\,\vect{\ell}$.  
We also define the energy-position coordinates $\vect{z}:=\{\varepsilon,\vect{x}\}\in\bbR^{+}\times\bbR^{3}$.  
On the right-hand side of Eq.~\eqref{eq:boltzmann}, $\tau$ is the ratio of the particle mean-free path (due to interactions with a background) to some characteristic length scale of the problem.  
In opaque regions, $\tau\ll1$, while for free streaming particles, $\tau\gg1$.  
The collision operator, which models emission, absorption, and isotropic and elastic scattering, is given by
\begin{equation}
  \cC(f)=\xi\,\big(\,f_{0}-f\,\big)
  +(1-\xi)\,\big(\,\f{1}{4\pi}\int_{\bbS^{2}}f\,d\omega-f\,\big),
  \label{eq:collisionTerm}
\end{equation}
where $\xi=\sigma_{\Ab}/\sigma_{\Tot}\in[0,1]$ is the ratio of the absorption opacity $\sigma_{\Ab}\,(\ge0)$ to the total opacity $\sigma_{\Tot}=\sigma_{\Ab}+\sigma_{\Scatt}$, and $\sigma_{\Scatt}\,(\ge0)$ is the scattering opacity.  
In particular, $\xi=1$ models pure emission and absorption, while $\xi=0$ models pure scattering.  
In general, $\sigma_{\Ab}$ and $\sigma_{\Scatt}$ (and $\tau$ and $\xi$) depend on $\vect{z}$.  
The equilibrium distribution function is denoted by $f_{0}(\vect{z})$.  
Here, we consider transport of Fermions (e.g., neutrinos), so the equilibrium distribution function takes the form
\begin{equation}
  f_{0}(\vect{z})=\f{1}{e^{(\varepsilon-\mu(\vect{x}))/T(\vect{x})}+1},  
  \label{eq:fermiDirac}
\end{equation}
where the temperature $T$ and the chemical potential $\mu$ depend on properties of the background.  

\subsection{Angular Moment Equations: Two-Moment Model}

The Boltzmann equation is often too expensive to solve directly.  
Instead, approximate equations for angular moments of the distribution function are solved.  
To this end, we define the angular moments of the distribution function
\begin{equation}
  \big\{\,\cJ,\vect{\cH},\vect{\cK}\,\big\}(\vect{z},t)
  =\f{1}{4\pi}\int_{\bbS^{2}}f(\omega,\vect{z},t)\,\{\,1,\vect{\ell},\vect{\ell}\otimes\vect{\ell}\,\}\,d\omega.  
  \label{eq:angularMoments}
\end{equation}
We refer to $\cJ$ (zeroth moment) as the particle density, $\vect{\cH}$ (first moment) as the particle flux, and $\vect{\cK}$ (second moment) as the stress tensor.  
Note that the moments defined in Eq.~\eqref{eq:angularMoments} are \emph{spectral moments} (depending on energy as well as position and time).  
The \emph{grey moments} (depending only on position and time) are obtained by integration over energy:
\begin{equation}
  \big\{\,J,\vect{H},\vect{K}\,\big\}(\vect{x},t)
  =\int_{\bbR^{+}}\big\{\,\cJ,\vect{\cH},\vect{\cK}\,\big\}(\varepsilon,\vect{x},t)\,\varepsilon^{2}d\varepsilon.  
\end{equation}

Taking the zeroth and first moments of Eq.~\eqref{eq:boltzmann} gives the two-moment model, comprising a system of conservation laws with sources
\begin{equation}
  \pd{\vect{\cM}}{t}+\nabla\cdot\vect{\cF}=\f{1}{\tau}\,\vect{\cC}(\vect{\cM}),
  \label{eq:momentEquations}
\end{equation}
where $\vect{\cM}=(\cJ,\vect{\cH})^{T}$ and $\vect{\cF}=(\vect{\cH},\vect{\cK})^{T}$.  
Components of the fluxes in each coordinate direction are $\vect{\cF}^{i}=\vect{e}_{i}\cdot\vect{\cF}=(\vect{e}_{i}\cdot\vect{\cH},\vect{e}_{i}\cdot\vect{\cK})^{T}$, where $\vect{e}_{i}$ is the unit vector parallel to the $i$th coordinate direction.  
On the right-hand side of Eq.~\eqref{eq:momentEquations}, the source term is
\begin{equation}
  \vect{\cC}(\vect{\cM})=\vect{\eta}-\vect{\cD}\,\vect{\cM}, 
  \label{eq:collisionTermMoments}
\end{equation}
where $\vect{\eta}=(\xi\,f_{0},\vect{0})^{T}$ and $\vect{\cD}=\mbox{diag}(\xi,\vect{I})$, with $\vect{I}$ the identity matrix.  

In order to close the system given by Eq.~\eqref{eq:momentEquations}, the components of the stress tensor $\vect{\cK}$ must be related to the lower moments through a closure procedure.  
To this end, Levermore \cite{levermore_1984} defined the Eddington tensor $\vect{k}=\vect{\cK}/\cJ$ and assumed that the radiation field is symmetric about a preferred direction $\widehat{\vect{h}}=\vect{\cH}/|\vect{\cH}|$ so that
\begin{equation}
  \vect{k}=\f{1}{2}\big[\,\big(1-\chi\big)\,\vect{I}+\big(3\,\chi-1\big)\,\widehat{\vect{h}}\otimes\widehat{\vect{h}}\,\big],
  \label{eq:eddingtonTensor}
\end{equation}
where $\chi=\chi(\cJ,|\vect{\cH}|)$ is the Eddington factor.  
The two-moment model is then closed once the Eddington factor is determined from $\cJ$ and $\vect{\cH}$.  
We will return to the issue of determining the Eddington factor in Section~\ref{sec:algebraicClosure}.  
\section{Moment Realizability for the Fermionic Two-Moment Model}
\label{sec:realizability}

Our goal is to simulate massless fermions (e.g., neutrinos) and study their interactions with matter.  
The principal objective is to obtain the fermionic distribution function $f$ (or moments of $f$ as in the two-moment model employed here).  
The Pauli exclusion principle requires the distribution function to satisfy the condition $0 \le f \le 1$, which puts restrictions on the admissible values for the moments of $f$.  
In this paper, we seek to design a numerical method for solving the system of moment equations given by Eq.~\eqref{eq:momentEquations} that preserves realizability of the moments; i.e., the moments evolve within the set of admissible values as dictated by Pauli's exclusion principle.  
(Since we are only concerned with the angular dependence of $f$ in this section, we simplify the notation by suppressing the $\vect{z}$ and $t$ dependence and write $f(\omega,\vect{z},t)=f(\omega)$.)  

We begin with the following definition of moment realizability.  
\begin{define}
  The moments $\vect{\cM}=\big(\cJ,\vect{\cH}\big)^{T}$ are realizable if they can be obtained from a distribution function satisfying $0 < f(\omega) < 1~\forall~\omega\in\bbS^{2}$.  
  The set of all realizable moments $\cR$ is
  \begin{equation}
    \cR:=\big\{\,\vect{\cM}=\big(\cJ,\vect{\cH}\big)^{T}~|~\cJ\in(0,1)~\text{and}~\gamma(\vect{\cM}) > 0\,\big\},
    \label{eq:realizableSet}
  \end{equation}
  where we have defined the concave function $\gamma(\vect{\cM})\equiv(1-\cJ)\cJ-|\vect{\cH}|$.  
  \label{def:momentRealizability}
\end{define}
\begin{rem}
  Following \cite{lareckiBanach_2011}, in Definition~\ref{def:momentRealizability}, and in the rest of this paper, we exclude the cases $f=0$ and $f=1$ almost everywhere (a.e.) on $\bbS^{2}$, which would give $\cJ=0$, $\vect{\cH}=0$ and $\cJ=1$, $\vect{\cH}=0$, respectively.  
\end{rem}

The algebraic constraints in Eq.~\eqref{eq:realizableSet} are proven in \cite{banachLarecki_2017a} (see also \cite{lareckiBanach_2011,banachLarecki_2013}).  


\begin{lemma}
  The realizable set $\cR$ is convex.  
\end{lemma}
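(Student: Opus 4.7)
The plan is to exploit the remark already embedded in Definition~\ref{def:momentRealizability} that $\gamma$ is concave, and to write $\cR$ as an intersection of two convex sets. Specifically, I would observe that $\cR = \cR_1 \cap \cR_2$, where
\begin{equation*}
  \cR_1 := \big\{\,\vect{\cM}~|~\cJ \in (0,1)\,\big\}
  \quad\text{and}\quad
  \cR_2 := \big\{\,\vect{\cM}~|~\gamma(\vect{\cM}) > 0\,\big\}.
\end{equation*}
The first set is the intersection of two open half-spaces $\{\cJ > 0\}$ and $\{\cJ < 1\}$ in the moment space $\bbR \times \bbR^{3}$, hence is trivially convex. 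The second is the strict superlevel set of the concave function $\gamma$, and is therefore convex as well. The intersection of two convex sets is convex, and this is what I would conclude.

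The one substantive step that deserves a line or two is confirming that $\gamma(\vect{\cM}) = \cJ - \cJ^{2} - |\vect{\cH}|$ is indeed concave on $\bbR\times\bbR^{3}$. I would split it as a sum $\gamma(\vect{\cM}) = (\cJ - \cJ^{2}) + (-|\vect{\cH}|)$. The map $\cJ \mapsto \cJ - \cJ^{2}$ is concave since its second derivative equals $-2$, and the map $\vect{\cH} \mapsto -|\vect{\cH}|$ is concave since the Euclidean norm is convex. A sum of concave functions is concave, which gives the claim; the superlevel set $\{\gamma > \alpha\}$ of a concave function is then convex for any $\alpha \in \bbR$, and in particular for $\alpha = 0$.

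As an alternative (and perhaps more illuminating) route, one can argue directly from the definition: given $\vect{\cM}_{0},\vect{\cM}_{1} \in \cR$ realized by distributions $f_{0},f_{1}$ with $0 < f_{i}(\omega) < 1$ for a.e.\ $\omega \in \bbS^{2}$, the convex combination $f_{\theta} := (1-\theta) f_{0} + \theta f_{1}$ satisfies $0 < f_{\theta}(\omega) < 1$ pointwise for every $\theta \in [0,1]$, and by linearity of the moment map in Eq.~\eqref{eq:angularMoments} it realizes $(1-\theta)\vect{\cM}_{0} + \theta \vect{\cM}_{1}$. This approach has the advantage of not invoking the algebraic characterization at all, relying only on Definition~\ref{def:momentRealizability}.

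I do not anticipate any real obstacle; the only minor subtlety is being careful that the constraints defining $\cR$ are \emph{strict} inequalities, so that $\cR$ is open and one verifies that strict inequalities are preserved under convex combinations when at least one endpoint is strict — which is automatic here since both endpoints satisfy the strict inequalities.
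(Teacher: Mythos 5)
Your primary argument is essentially the paper's own proof: the paper likewise takes a convex combination $\vect{\cM}_{c}=\theta\,\vect{\cM}_{a}+(1-\theta)\,\vect{\cM}_{b}$, notes $\cJ_{c}\in(0,1)$ by linearity, and invokes concavity of $\gamma$ to get $\gamma(\vect{\cM}_{c})\ge\theta\,\gamma(\vect{\cM}_{a})+(1-\theta)\,\gamma(\vect{\cM}_{b})>0$; your phrasing as an intersection of a slab with a strict superlevel set is the same computation packaged slightly more abstractly, and your explicit verification that $\gamma(\vect{\cM})=\cJ-\cJ^{2}-|\vect{\cH}|$ is concave (sum of a concave quadratic in $\cJ$ and the negative of a norm) fills in a step the paper simply asserts. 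Your alternative route --- pushing the convex combination back to the distribution functions, $f_{\theta}=(1-\theta)f_{0}+\theta f_{1}\in(0,1)$, and using linearity of the moment map --- is genuinely different and arguably more fundamental: it needs no algebraic characterization of $\cR$ at all and is the same mechanism the paper later uses in Lemmas~\ref{lem:explicitStep} and \ref{lem:implicitStep}. Its only caveat is that it establishes convexity of the set of moments realized by admissible distributions, which is identified with the algebraic set in Eq.~\eqref{eq:realizableSet} only via the equivalence the paper cites to \cite{banachLarecki_2017a}; the algebraic route works directly with the set as written. Both arguments are correct.
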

\begin{proof}
  Let $\vect{\cM}_{a}=\big(\cJ_{a},\vect{\cH}_{a}\big)^{T}$ and $\vect{\cM}_{b}=\big(\cJ_{b},\vect{\cH}_{b}\big)^{T}$ be two arbitrary elements in $\cR$, and let $\vect{\cM}_{c} = \theta\,\vect{\cM}_{a} + (1-\theta)\,\vect{\cM}_{b}$, with $0\leq\theta\leq1$.
  The first component of $\vect{\cM}_{c}$ is
  \begin{equation*}
    \cJ_{c} = \theta\,\cJ_{a} + (1-\theta)\,\cJ_{b}.
  \end{equation*}
  Since $\cJ_{a},\cJ_{b} \in (0,1)$, it follows that $\cJ_{c} \in (0,1)$.  
  Concavity of $\gamma$ implies that
  \begin{equation*}
  \gamma(\vect{\cM}_{c}) \geq \theta\,\gamma(\vect{\cM}_{a}) + (1-\theta)\,\gamma(\vect{\cM}_{b}) > 0.
  \end{equation*}
  Hence, $\vect{\cM}_{c}\in\cR$.
\end{proof}

Figure~\ref{fig:RealizableSetFermionic} illustrates the geometry of the convex set $\cR$ in the $(\cH,\cJ)$-plane (light blue region).  
The boundary $\partial\cR$ (black curves) is given by $\gamma(\vect{\cM})=0$.  
The realizable domain of positive distribution functions, $\cR^{+}$ (no upper bound on $f$), which is a convex cone defined by
\begin{equation}
  \cR^{+}:=\big\{\,\vect{\cM}=\big(\cJ,\vect{\cH}\big)^{T}~|~\cJ > 0~\text{and}~\cJ > |\vect{\cH}|\,\big\}, 
  \label{eq:realizableSetPositive}
\end{equation}
is partially shown as the light red region above the red lines, which mark the boundary of $\cR^{+}$ (denoted $\partial\cR^{+}$).  
The realizable set $\cR$ is a bounded subset of $\cR^{+}$.  

\begin{figure}[H]
  \centering
  \includegraphics[width=1.0\linewidth]{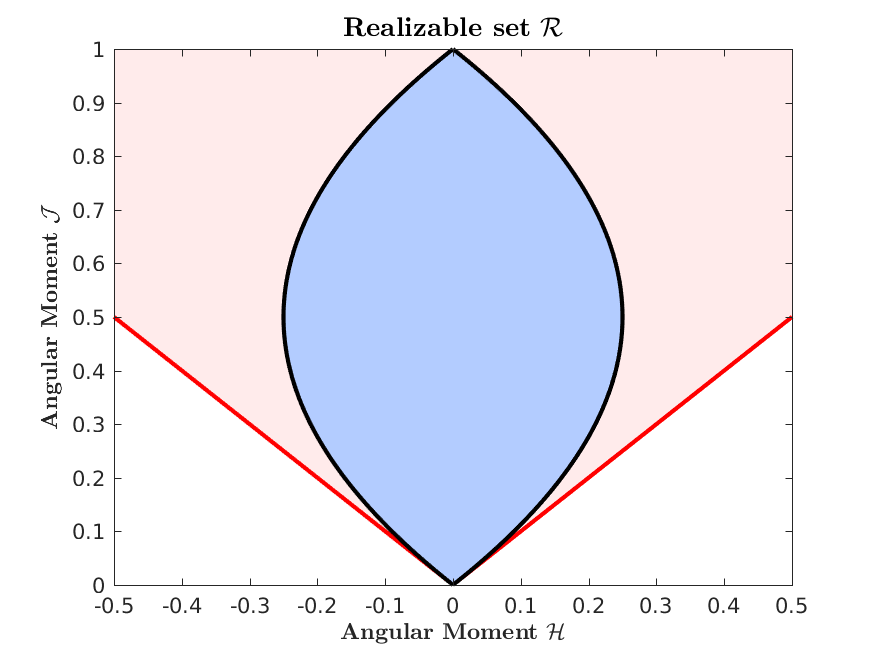}
  \caption{Illustration of the realizable set $\cR$ (light blue region) defined in Eq.~\eqref{eq:realizableSet}.  
  The black lines define the boundary $\partial\cR$, while the red lines indicate the boundary of the realizable set $\cR^{+}$ (light red region) defined in Eq.~\eqref{eq:realizableSetPositive}.}
  \label{fig:RealizableSetFermionic}
\end{figure}

For the realizability-preserving scheme developed in Section~\ref{sec:realizableDGIMEX}, we state some additional results.  
Lemma~\ref{lem:explicitStep} is used to help prove the realizability-preserving property of explicit steps in the IMEX scheme, while Lemmas~\ref{lem:implicitStep} and \ref{lem:correctionStep} are used to prove realizability-preserving properties of implicit steps.  
\begin{lemma}
  Let $\big\{\cJ_{a},\vect{\cH}_{a},\vect{\cK}_{a}\big\}$ and $\big\{\cJ_{b},\vect{\cH}_{b},\vect{\cK}_{b}\big\}$ be moments defined as in Eq.~\eqref{eq:angularMoments} with distribution functions $f_{a}$ and $f_{b}$, respectively, such that $f_{a}(\omega),f_{b}(\omega)\in(0,1)\,\forall\,\omega\in\bbS^{2}$.  
  Let $\Phi^{\pm}(\vect{\cM},\vect{\cK})=\f{1}{2}\big(\vect{\cM}\pm\widehat{\vect{e}}\cdot\vect{\cF}\big)$, where $\widehat{\vect{e}}\in\bbR^{3}$ is an arbitrary unit vector, and $\widehat{\vect{e}}\cdot\vect{\cF}=\big(\widehat{\vect{e}}\cdot\vect{\cH},\widehat{\vect{e}}\cdot\vect{\cK}\big)^{T}$.  
  Then
  \begin{equation*}
    \vect{\cM}_{ab} \equiv \Phi^{+}(\vect{\cM}_{a},\vect{\cK}_{a})+\Phi^{-}(\vect{\cM}_{b},\vect{\cK}_{b})\in\cR.
  \end{equation*}
  \label{lem:explicitStep}
\end{lemma}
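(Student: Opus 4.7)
The plan is to recognize $\Phi^{+}(\vect{\cM}_{a},\vect{\cK}_{a})$ and $\Phi^{-}(\vect{\cM}_{b},\vect{\cK}_{b})$ each as the zeroth-plus-first angular moment pair of a suitable nonnegative reweighting of $f_{a}$ and $f_{b}$, and then invoke Definition~\ref{def:momentRealizability} directly. Since $\widehat{\vect{e}}$ is independent of $\omega$, I can pull it inside the angular integrals defining the moments and combine the pieces:
\begin{equation*}
\Phi^{+}(\vect{\cM}_{a},\vect{\cK}_{a})
= \f{1}{4\pi}\int_{\bbS^{2}} \f{1}{2}\bigl(1+\widehat{\vect{e}}\cdot\vect{\ell}(\omega)\bigr)\,f_{a}(\omega)\,\bigl(1,\vect{\ell}(\omega)\bigr)^{T}\,d\omega,
\end{equation*}
and analogously for $\Phi^{-}(\vect{\cM}_{b},\vect{\cK}_{b})$ with $f_{b}$ and weight $\f{1}{2}(1-\widehat{\vect{e}}\cdot\vect{\ell})$.

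By linearity, $\vect{\cM}_{ab}$ is the moment pair of the single distribution
\begin{equation*}
f_{ab}(\omega) := \alpha(\omega)\,f_{a}(\omega)+\bigl(1-\alpha(\omega)\bigr)\,f_{b}(\omega), \qquad \alpha(\omega):=\f{1}{2}\bigl(1+\widehat{\vect{e}}\cdot\vect{\ell}(\omega)\bigr).
\end{equation*}
Because $|\widehat{\vect{e}}\cdot\vect{\ell}(\omega)|\le 1$, the weight $\alpha(\omega)$ lies in $[0,1]$, so at each $\omega$ the value $f_{ab}(\omega)$ is a pointwise convex combination of $f_{a}(\omega),f_{b}(\omega)\in(0,1)$. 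Away from the measure-zero subset of $\bbS^{2}$ on which $\alpha\in\{0,1\}$, this immediately gives $f_{ab}(\omega)\in(0,1)$, and on that exceptional set $f_{ab}(\omega)$ coincides with either $f_{a}(\omega)$ or $f_{b}(\omega)$ and so still lies in $(0,1)$. Hence $f_{ab}\in(0,1)$ a.e.\ on $\bbS^{2}$, and therefore $\vect{\cM}_{ab}\in\cR$ by Definition~\ref{def:momentRealizability}.

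The one nontrivial step is the initial viewpoint shift: note that $\vect{\cK}$ enters $\Phi^{\pm}$ only through the scalar contraction $\widehat{\vect{e}}\cdot\vect{\cK}$, which is exactly what is needed for $\vect{\cH}\pm\widehat{\vect{e}}\cdot\vect{\cK}$ to be recognized as $\int f\,\vect{\ell}\,(1\pm\widehat{\vect{e}}\cdot\vect{\ell})\,d\omega/(4\pi)$, so that $\f{1}{2}(1\pm\widehat{\vect{e}}\cdot\vect{\ell})$ emerges as a nonnegative partition of unity on $\bbS^{2}$. Once this is spotted the argument is essentially definitional, and no direct manipulation of the bound $|\vect{\cH}|<(1-\cJ)\cJ$ is required on either side of the sum.
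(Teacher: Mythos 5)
Your proposal is correct and follows essentially the same route as the paper's proof: both identify $\vect{\cM}_{ab}$ as the moments of the pointwise convex combination $f_{ab}=\vartheta f_{a}+(1-\vartheta)f_{b}$ with $\vartheta=\tfrac{1}{2}(1+\widehat{\vect{e}}\cdot\vect{\ell})\in[0,1]$, conclude $f_{ab}\in(0,1)$ on $\bbS^{2}$, and invoke Definition~\ref{def:momentRealizability}. Your extra care at the points where $\vartheta\in\{0,1\}$ is a harmless refinement of the same argument.
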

\begin{proof}
  The components of $\vect{\cM}_{ab}$ are
  \begin{equation*}
    \cJ_{ab}=\f{1}{4\pi}\int_{\bbS^{2}}f_{ab}(\omega)\,d\omega
    \quad\text{and}\quad
    \vect{\cH}_{ab}=\f{1}{4\pi}\int_{\bbS^{2}}f_{ab}(\omega)\,\vect{\ell}(\omega)\,d\omega,
  \end{equation*}
  where $f_{ab}(\omega)=\vartheta\,f_{a}(\omega)+(1-\vartheta)\,f_{b}(\omega)$ and $\vartheta(\omega)=(1+\widehat{\vect{e}}\cdot\vect{\ell}(\omega))/2\in[0,1]$.  
  Then, since $f_{ab}(\omega)\in(0,1)\,\forall\,\omega\in\bbS^{2}$, it follows that $\vect{\cM}_{ab}\in\cR$.  
\end{proof}

\begin{lemma}
  Let $\vect{\cM}_{a}=(\cJ_{a},\vect{\cH}_{a})^{T}\in\cR$ and $\alpha>0$.  
  Let $\vect{\cM}_{b}=(\cJ_{b},\vect{\cH}_{b})^{T}$ satisfy
  \begin{equation}
    \vect{\cM}_{b}=\vect{\cM}_{a}+\alpha\,\vect{\cC}(\vect{\cM}_{b}), 
    \label{eq:implicitStep}
  \end{equation}
  where $\vect{\cC}(\vect{\cM})=\vect{\eta}-\vect{\cD}\,\vect{\cM}$ is the collision term in Eq.~\eqref{eq:collisionTermMoments}.  
  Then $\vect{\cM}_{b}\in\cR$.  
  \label{lem:implicitStep}
\end{lemma}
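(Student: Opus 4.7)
The plan is to produce a distribution function $f_{b}\colon\bbS^{2}\to(0,1)$ whose zeroth and first angular moments coincide with $\cJ_{b}$ and $\vect{\cH}_{b}$; Definition~\ref{def:momentRealizability} then immediately yields $\vect{\cM}_{b}\in\cR$. This bypasses a direct algebraic verification of $\gamma(\vect{\cM}_{b})>0$, which is possible but tedious because the effective implicit ``weights'' in front of the density and of the flux do not agree.

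First, since $\vect{\cC}$ is affine in $\vect{\cM}$, Eq.~\eqref{eq:implicitStep} decouples componentwise and can be solved explicitly to give
\begin{equation*}
\cJ_{b}=\f{\cJ_{a}+\alpha\,\xi\,f_{0}}{1+\alpha\,\xi},
\qquad
\vect{\cH}_{b}=\f{\vect{\cH}_{a}}{1+\alpha}.
\end{equation*}
Because $\cJ_{a}\in(0,1)$ and $f_{0}\in(0,1)$ (the latter from the Fermi--Dirac form in Eq.~\eqref{eq:fermiDirac}), with $\alpha>0$ and $\xi\in[0,1]$, the formula for $\cJ_{b}$ is a convex combination of two numbers in $(0,1)$ when $\xi>0$ and equals $\cJ_{a}$ when $\xi=0$; in either case $\cJ_{b}\in(0,1)$.

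Next, I fix any $f_{a}\colon\bbS^{2}\to(0,1)$ whose moments reproduce $\cJ_{a}$ and $\vect{\cH}_{a}$; such an $f_{a}$ exists because $\vect{\cM}_{a}\in\cR$. The ansatz, obtained by applying the full collision operator in Eq.~\eqref{eq:collisionTerm} implicitly at the distribution level, is
\begin{equation*}
f_{b}(\omega)\equiv\f{f_{a}(\omega)+\alpha\,\xi\,f_{0}+\alpha\,(1-\xi)\,\cJ_{b}}{1+\alpha}.
\end{equation*}
Integrating against $1$ and $\vect{\ell}(\omega)$ over $\bbS^{2}$ confirms that the zeroth and first moments of $f_{b}$ are exactly $\cJ_{b}$ and $\vect{\cH}_{b}$; the two angle-independent terms in the numerator contribute to the zeroth moment but drop out of the first because $\f{1}{4\pi}\int_{\bbS^{2}}\vect{\ell}(\omega)\,d\omega=\vect{0}$. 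Pointwise positivity $f_{b}>0$ is immediate from positivity of each term in the numerator; the upper bound $f_{b}<1$ follows from
\begin{equation*}
1+\alpha-\bigl[f_{a}(\omega)+\alpha\,\xi\,f_{0}+\alpha\,(1-\xi)\,\cJ_{b}\bigr]
=\bigl(1-f_{a}(\omega)\bigr)+\alpha\,\xi\,(1-f_{0})+\alpha\,(1-\xi)\,(1-\cJ_{b})>0,
\end{equation*}
using the strict upper bounds on $f_{a}$, $f_{0}$, and $\cJ_{b}$.

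The one conceptual obstacle I anticipate is guessing the ansatz for $f_{b}$: the naive candidate $(f_{a}+\alpha\xi f_{0})/(1+\alpha\xi)$ reproduces $\cJ_{b}$ but not $\vect{\cH}_{b}$, because the implicit equation damps the density with timescale $1/(\alpha\xi)$ while damping the flux with the shorter timescale $1/\alpha$ (scattering relaxes the flux but preserves the density). Absorbing this mismatch by adding the isotropic term $\alpha(1-\xi)\,\cJ_{b}$ to the numerator and replacing the denominator by $1+\alpha$ is the essential trick; once the ansatz is in hand, both the moment-matching identities and the pointwise bounds on $f_{b}$ fall out by inspection.
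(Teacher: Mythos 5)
Your proof is correct, and it reaches the conclusion by a slightly different and arguably cleaner route than the paper. The paper also argues via an underlying distribution, but it uses the simpler ansatz $f_{b}=\zeta f_{a}+(1-\zeta)f_{0}$ with $\zeta=1/(1+\alpha\xi)$, which reproduces $\cJ_{b}$ but yields an auxiliary flux $\widetilde{\vect{\cH}}_{b}=\vect{\cH}_{a}/(1+\alpha\xi)$ rather than the true $\vect{\cH}_{b}=\vect{\cH}_{a}/(1+\alpha)$; realizability of $(\cJ_{b},\widetilde{\vect{\cH}}_{b})^{T}$ then gives $\gamma>0$ for the true pair via the contraction $|\vect{\cH}_{b}|=\frac{1+\alpha\xi}{1+\alpha}|\widetilde{\vect{\cH}}_{b}|\le|\widetilde{\vect{\cH}}_{b}|$. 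You instead add the isotropic term $\alpha(1-\xi)\cJ_{b}$ to the numerator and divide by $1+\alpha$, which is precisely the backward-Euler solve of the kinetic equation $f_{b}=f_{a}+\alpha\,\cC(f_{b})$ at the distribution-function level; your $f_{b}$ therefore reproduces \emph{both} moments exactly and no separate flux-shrinking inequality is needed. What the paper's route buys is a marginally simpler ansatz (a two-term convex combination); what yours buys is a one-shot argument that makes the physical origin of the two different damping rates ($1/(1+\alpha\xi)$ for the density, $1/(1+\alpha)$ for the flux) transparent, and it extends verbatim to the second moment, which is relevant for Lemma~\ref{lem:correctionStep}. Your verification of the pointwise bounds $0<f_{b}<1$ and of the moment identities is accurate, and the use of the strict bound $f_{0}\in(0,1)$ from Eq.~\eqref{eq:fermiDirac} is legitimate; there is no gap.
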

\begin{proof}
  Solving Eq.~\eqref{eq:implicitStep} for $\vect{\cM}_{b}$ gives $\vect{\cM}_{b} = \big(\vect{I}+\alpha\,\vect{\cD}\big)^{-1}\big(\vect{\cM}_{a}+\alpha\,\vect{\eta}\big)$.  
  The first component of $\vect{\cM}_{b}$ can be written as
  \begin{equation*}
    \cJ_{b}=\f{1}{4\pi}\int_{\bbS}f_{b}(\omega)\,d\omega,
  \end{equation*}
  where $f_{b}(\omega)=\zeta\,f_{a}(\omega)+(1-\zeta)\,f_{0}$, $\zeta=1/(1+\alpha\,\xi)\in[0,1]$, and $f_{0}$ and $\xi$ are defined in Eq.~\eqref{eq:collisionTerm} in Section~\ref{sec:model}.  
  Then, since $f_{b}(\omega)\in(0,1)\,\forall\,\omega\in\bbS^{2}$, $\cJ_{b}\in(0,1)$.  
  Meanwhile, 
  \begin{equation*}
    \vect{\cH}_{b}=\f{(1+\alpha\,\xi)}{(1+\alpha)}\,\widetilde{\vect{\cH}}_{b},
    \quad\text{where}\quad
    \widetilde{\vect{\cH}}_{b}=\f{1}{4\pi}\int_{\bbS^{2}}f_{b}(\omega)\,\vect{\ell}(\omega)\,d\omega.  
  \end{equation*}
  It follows that $\widetilde{\vect{\bcM}}_{b}=(\cJ_{b},\widetilde{\vect{\cH}}_{b})^{T}\in\cR$.  
  Then, since $0\le\xi\le1$, $|\vect{\cH}_{b}|\le|\widetilde{\vect{\cH}}_{b}| < (1-\cJ_{b})\,\cJ_{b}$.  
\end{proof}

\begin{lemma}
  Let $\vect{\cM}_{a}=(\cJ_{a},\vect{\cH}_{a})^{T}\in\cR$ and $\alpha>0$.  
  Let $\vect{\cM}_{b}$ satisfy
  \begin{equation*}
    \vect{\cM}_{b}=\vect{\cM}_{a}+\alpha\,\vect{\cD}\,\vect{\cC}(\vect{\cM}_{b}),    
  \end{equation*}
  where $\vect{\cD}$ and $\vect{\cC}(\vect{\cM})$ are given by Eq.~\eqref{eq:collisionTermMoments}.  
  Then $\vect{\cM}_{b}\in\cR$.  
  \label{lem:correctionStep}
\end{lemma}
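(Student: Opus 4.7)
The plan is to exploit the same ``witness distribution'' technique used in Lemma~\ref{lem:implicitStep}, after first making the implicit update explicit. Substituting $\vect{\cC}(\vect{\cM}_{b})=\vect{\eta}-\vect{\cD}\,\vect{\cM}_{b}$ rearranges the defining relation to $(\vect{I}+\alpha\,\vect{\cD}^{2})\,\vect{\cM}_{b}=\vect{\cM}_{a}+\alpha\,\vect{\cD}\,\vect{\eta}$, and since $\vect{\cD}=\mbox{diag}(\xi,\vect{I})$ and $\vect{\eta}=(\xi\,f_{0},\vect{0})^{T}$, the components read
\begin{equation*}
  \cJ_{b}=\f{\cJ_{a}+\alpha\,\xi^{2}\,f_{0}}{1+\alpha\,\xi^{2}},
  \qquad
  \vect{\cH}_{b}=\f{\vect{\cH}_{a}}{1+\alpha}.
\end{equation*}
Note the mismatch that drives the whole argument: the zeroth moment is damped by $1/(1+\alpha\,\xi^{2})$ while the first moment is damped by the strictly smaller factor $1/(1+\alpha)$.

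For the density bound I would rewrite $\cJ_{b}=\f{1}{4\pi}\int_{\bbS^{2}}\widetilde{f}_{b}(\omega)\,d\omega$ with $\widetilde{f}_{b}(\omega)=\zeta\,f_{a}(\omega)+(1-\zeta)\,f_{0}$ and $\zeta=1/(1+\alpha\,\xi^{2})\in[0,1]$. Since $f_{a}(\omega),f_{0}\in(0,1)$, this convex combination lies in $(0,1)$ pointwise, so $\cJ_{b}\in(0,1)$. Next I would introduce the auxiliary flux $\widetilde{\vect{\cH}}_{b}=\f{1}{4\pi}\int_{\bbS^{2}}\widetilde{f}_{b}(\omega)\,\vect{\ell}(\omega)\,d\omega$. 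Because $f_{0}$ is isotropic it contributes nothing to the first moment, giving $\widetilde{\vect{\cH}}_{b}=\zeta\,\vect{\cH}_{a}=\vect{\cH}_{a}/(1+\alpha\,\xi^{2})$; by construction $(\cJ_{b},\widetilde{\vect{\cH}}_{b})^{T}$ are the moments of a distribution in $(0,1)$, so they lie in $\cR$ and hence $|\widetilde{\vect{\cH}}_{b}|<(1-\cJ_{b})\,\cJ_{b}$.

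The argument closes by comparing $|\vect{\cH}_{b}|$ with $|\widetilde{\vect{\cH}}_{b}|$: since $\xi\in[0,1]$ implies $1+\alpha\,\xi^{2}\le 1+\alpha$, we obtain $|\vect{\cH}_{b}|=|\vect{\cH}_{a}|/(1+\alpha)\le|\vect{\cH}_{a}|/(1+\alpha\,\xi^{2})=|\widetilde{\vect{\cH}}_{b}|<(1-\cJ_{b})\,\cJ_{b}$, so $\vect{\cM}_{b}\in\cR$. The main obstacle to a direct copy of Lemma~\ref{lem:implicitStep} is precisely the differential damping between the zeroth and first moments caused by the extra $\vect{\cD}$ factor in the correction step: the witness distribution $\widetilde{f}_{b}$ only directly reproduces $\cJ_{b}$ and the auxiliary flux $\widetilde{\vect{\cH}}_{b}$, not $\vect{\cH}_{b}$ itself, so one must separately verify that the stronger flux damping does not overshoot. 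Fortunately this reduces to the elementary inequality $\xi^{2}\le 1$, which makes the actual flux no larger than the auxiliary one and preserves the realizability bound.
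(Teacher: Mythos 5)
Your proposal is correct and is precisely the argument the paper intends: the paper omits this proof, stating only that it ``follows along the same lines as the proof of Lemma~\ref{lem:implicitStep},'' and your write-up is exactly that adaptation, with the witness distribution $\zeta\,f_{a}+(1-\zeta)\,f_{0}$ now using $\zeta=1/(1+\alpha\,\xi^{2})$ and the final comparison resting on $\xi^{2}\le1$ instead of $\xi\le1$. The algebra for $\cJ_{b}$ and $\vect{\cH}_{b}$ checks out, so nothing further is needed.
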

The proof of Lemma~\ref{lem:correctionStep} follows along the same lines as the proof of Lemma~\ref{lem:implicitStep} and is omitted.  

\section{Algebraic Moment Closures}
\label{sec:algebraicClosure}

The two-moment model given by Eq.~\eqref{eq:angularMoments} is not closed because of the appearance of the second moments $\vect{\cK}$ (the normalized pressure tensor).  
Algebraic moment closures for the two-moment model are computationally efficient as they provide the Eddington factor in Eq.~\eqref{eq:eddingtonTensor} in closed form as a function of the density $\cJ$ and the flux factor $h=|\vect{\cH}|/\cJ$.  
For this reason they are used in applications where transport plays an important role, but where limited computational resources preclude the use of higher fidelity models.  
Examples include simulation of neutrino transport in core-collapse supernovae \cite{roberts_etal_2016} and compact binary mergers \cite{foucart_etal_2015}.  
Algebraic moment closures in the context of these aforementioned applications have also been discussed elsewhere (e.g., \cite{janka_etal_1992,pons_etal_2000,smit_etal_2000,just_etal_2015,murchikova_etal_2017}).  
Here we focus on properties of the algebraic closures that are critical to the development of numerical methods for the two-moment model of fermion transport.  
For the algebraic closures we consider, the Eddington factor in Eq.~\eqref{eq:eddingtonTensor} can be written in the following form \cite{cernohorskyBludman_1994}
\begin{equation}
  \chi(\cJ,h)=\f{1}{3}+\f{2\,(1-\cJ)\,(1-2\cJ)}{3}\,\Theta\Big(\f{h}{1-\cJ}\Big),
  \label{eq:eddingtonFactor}
\end{equation}
where the \emph{closure function} $\Theta(x)$ depends on the specifics of the closure procedure.  
We will consider two basic closure procedures in more detail below: the maximum entropy (ME) closure and the Kershaw (K) closure.  

In the low occupancy limit ($\cJ\ll1$), the Eddington factor in Eq.~\eqref{eq:eddingtonFactor} depends solely on $h$; i.e.,
\begin{equation}
  \chi(\cJ,h)\to\chi_{0}(h)=\f{1}{3}+\f{2}{3}\,\Theta\big(h\big).  
  \label{eq:eddingtonFactorLow}
\end{equation}
This for of $\chi$ yields a moment closure that is suitable for particle systems obeying Maxwell-Boltzmann statistics.  

\subsection{Maximum Entropy (ME) Closure}

The ME closure constructs an approximation of the angular distribution as a function of $\cJ$ and $\vect{\cH}$ \cite{cernohorskyBludman_1994,lareckiBanach_2011}.  
The ME distribution $f_{\mbox{\tiny ME}}$ is found by maximizing the entropy functional, which for particles obeying Fermi-Dirac statistics is given by
\begin{equation}
  S[f_{\mbox{\tiny ME}}] 
  = \int_{\bbS^{2}}\big[\,(1-f_{\mbox{\tiny ME}})\log(1-f_{\mbox{\tiny ME}}) + f_{\mbox{\tiny ME}}\log f_{\mbox{\tiny ME}}\,]\,d\omega,
  \label{eq:entropyFunctional}
\end{equation} 
subject to the constraints
\begin{equation}
  \f{1}{4\pi}\int_{\bbS^{2}}f_{\mbox{\tiny ME}}(\omega)\,d\omega=\cJ
  \quad\text{and}\quad
  \f{1}{4\pi}\int_{\bbS^{2}}f_{\mbox{\tiny ME}}(\omega)\,\vect{\ell}(\omega)\,d\omega=\vect{\cH}.  
  \label{eq:closureConstraints}
\end{equation}
The solution that maximizes Eq.~\eqref{eq:entropyFunctional} takes the general form \cite{cernohorskyBludman_1994}
\begin{equation}
  f_{\mbox{\tiny ME}}(\omega;a,\vect{b})=\f{1}{e^{a + \vect{b}\cdot\vect{\ell}(\omega)}+1}, 
  \label{eq:fME}
\end{equation}
where the Lagrange multipliers $a$ and $\vect{b}$ are implicit functions of $\cJ$ and $\vect{\cH}$.  
The ME distribution function satisfies $0 < f_{\mbox{\tiny ME}} < 1$, but $a$ and $\vect{b}$ are unconstrained.  
Specification of $a$ and $\vect{b}$ from $\vect{\cM}=(\cJ,\vect{\cH})^{T}$ gives $f_{\mbox{\tiny ME}}$, and any number of moments can in principle be computed.  
Importantly, for the maximum entropy problem to be solvable, we must have $\vect{\cM}\in\cR$ \cite{lareckiBanach_2011}.  

To arrive at an algebraic form of the ME closure, Cernohorsky \& Bludman \cite{cernohorskyBludman_1994} postulate (but see \cite{lareckiBanach_2011}) that, as a function of the flux saturation
\begin{equation}
  x := h/(1-\cJ),
  \label{eq:fluxSaturation}
\end{equation} 
the closure function $\Theta$ is independent of $\cJ$ and can be written explicitly in terms of the inverse Langevin function.  
To avoid inverting the Langevin function for $\Theta$, they provide a polynomial fit (accurate to $2\%$) given by
\begin{equation}
  \Theta_{\mbox{\tiny ME}}^{\mbox{\tiny CB}}(x)
  =\f{1}{5}\,\big(\,3-x+3\,x^{2}\,\big)\,x^{2}.
  \label{eq:closureMECB}
\end{equation}
More recently, Larecki \& Banach \cite{lareckiBanach_2011} have shown that the explicit expression given in \cite{cernohorskyBludman_1994} is not exact and provide another approximate expression
\begin{equation}
  \Theta_{\mbox{\tiny ME}}^{\mbox{\tiny BL}}(x)
  =\f{1}{8}\,\big(\,9\,x^{2}-5+\sqrt{33\,x^{4}-42\,x^{2}+25}\,\big),
  \label{eq:closureMEBL}
\end{equation}
which is accurate to within $0.35\%$.  
On the interval $x\in[0,1]$, the curves given by Eqs.~\eqref{eq:closureMECB} and \eqref{eq:closureMEBL} lie practically on top of each other.  
The closure functions given by Eqs.~\eqref{eq:closureMECB} and \eqref{eq:closureMEBL}, together with the Eddington factor in Eq.~\eqref{eq:eddingtonFactor} and the pressure tensor in Eq~\eqref{eq:eddingtonTensor}, constitute the algebraic maximum entropy closures for fermionic particle systems considered in this paper.  
We will refer to the ME closures with $\Theta_{\mbox{\tiny ME}}^{\mbox{\tiny CB}}$ and $\Theta_{\mbox{\tiny ME}}^{\mbox{\tiny BL}}$ as the CB (Cernohorsky \& Bludman) and BL (Banach \& Larecki) closures, respectively.  

We also note that using the closure function given by Eq.~\eqref{eq:closureMECB} with the low occupancy Eddington factor in Eq~\eqref{eq:eddingtonFactorLow} results in the algebraic maximum entropy closure attributed to Minerbo \cite{minerbo_1978}, which is currently in use in simulation of neutrino (fermion) transport in the aforementioned nuclear astrophysics applications.  
In a recent comparison of algebraic (or analytic) closures for the two-moment model applied to neutrino transport around proto-neutron stars, Murchikova et al. \cite{murchikova_etal_2017} obtained nearly identical results when using the closures of CB and Minerbo.  
For these reasons, we include the Minerbo closure in the subsequent discussion and in the numerical tests in Section~\ref{sec:numerical}.  

\subsection{Kershaw (K) Closure}

Another algebraic closure we consider is a Kershaw-type closure \cite{kershaw_1976}, developed for fermion particle systems in \cite{banachLarecki_2017a}.  
The basic principle of the Kershaw closure for the two-moment model is derived from the fact that the realizable set generated by the triplet of scalar moments
\begin{equation}
  \{\cJ,\cH,\cK\}=\f{1}{2}\int_{-1}^{1}f(\mu)\,\mu^{\{0,1,2\}}\,d\mu,
  \label{eq:scalarMoments}
\end{equation} 
is convex.  
For the moments in Eq.~\eqref{eq:scalarMoments}, the realizable set is the set of moments obtained from distribution functions satisfying $0<f(\mu)<1,\,\forall\mu\in[-1,1]$.  
(The moments in Eq.~\eqref{eq:scalarMoments} are the unique moments obtained from the moments in Eq.~\eqref{eq:angularMoments} under the assumption that the distribution function is isotropic about a preferred direction, and $\mu$ is the cosine of the angle between this preferred direction and the particle propagation direction given by $\vect{\ell}$.)  

For a bounded distribution $0<f<1$, it is possible to show (e.g., \cite{banachLarecki_2013}) that the second moment satisfies
\begin{equation}
  \cK_{\mbox{\tiny L}}(\cJ,h) < \cK < \cK_{\mbox{\tiny U}}(\cJ,h),
\end{equation}
where $\cK_{\mbox{\tiny L}}=\cJ\,\big(\,\f{1}{3}\,\cJ^{2}+h^{2}\,\big)$, $\cK_{\mbox{\tiny U}}=\cK_{\mbox{\tiny L}} + \cJ\,(1-\cJ)\,(1-x^{2})$, and $x$ is the flux saturation defined in Eq.~\eqref{eq:fluxSaturation}.  
By convexity of the realizable set generated by the moments in Eq.~\eqref{eq:scalarMoments}, the convex combination
\begin{equation}
  \cK(\beta,\cJ,h)=\beta\,\cK_{\mbox{\tiny L}}(\cJ,h)+(1-\beta)\,\cK_{\mbox{\tiny U}}(\cJ,h),
  \label{eq:kershawAnsatz}
\end{equation}
with $\beta\in[0,1]$, is realizable whenever $(\cJ,\cH)^{T}\in\cR$.  
The Kershaw closure for the two-moment model is then obtained from Eq.~\eqref{eq:kershawAnsatz} with the additional requirement that it be correct in the limit of isotropic distribution functions; i.e., $\cK(\beta,\cJ,0)=\cJ/3$.  
One choice for $\beta$, which leads to a strictly hyperbolic and causal two-moment model (and a particularly simple closure function) \cite{banachLarecki_2017a}, is $\beta=(2-\cJ)/3$, so that $\cK_{\mbox{\tiny K}}(\cJ,h)=\chi_{\mbox{\tiny K}}(\cJ,h)\,\cJ$, where
\begin{equation}
  \chi_{\mbox{\tiny K}}(\cJ,h)=\f{1}{3}+\f{2\,(1-\cJ)\,(1-2\cJ)}{3}\,\Theta_{\mbox{\tiny K}}\Big(\f{h}{1-\cJ}\Big),
  \label{eq:eddingtonFactorKershaw}
\end{equation}
and the Kershaw closure function is given by
\begin{equation}
  \Theta_{\mbox{\tiny K}}(x)=x^{2}.  
\end{equation}
For multidimensional problems, the Kershaw closure is obtained by using the Eddington factor in Eq.~\eqref{eq:eddingtonFactorKershaw} in Eq.~\eqref{eq:eddingtonTensor}.  
Finally, we point out that for the two-moment Kershaw closure (see \cite{banachLarecki_2017a} for details), a distribution function $f_{\mbox{\tiny K}}(\omega,\cJ,\vect{\cH})$, satisfying $0 < f_{\mbox{\tiny K}} < 1$, and reproducing the moments $\cJ$, $\vect{\cH}$, and $\vect{\cK}$, can be written explicitly in terms of Heaviside functions.  

\subsection{Realizability of Algebraic Moment Closures}

It is not immediately obvious that all the algebraic moment closures discussed above are suitable for designing realizability-preserving methods for the two-moment model of fermion transport.  
In particular, the realizability-preserving scheme developed in this paper is based on the result in Lemma~\ref{lem:explicitStep}, which must hold for the adapted closure.  
The Kershaw closure is consistent with a bounded distribution, $f_{_{\mbox{\tiny K}}}\in(0,1)$, and should be well suited, but the algebraic ME closures are based on approximations to the closure function, and we need to consider if these approximate closures remain consistent with the assumed bounds on the underlying distribution function.  
To this end, we rely on results in \cite{levermore_1984,lareckiBanach_2011} (see also \cite{kershaw_1976,shohatTamarkin_1943}), which state that realizability of the moment triplet $\{\cJ,\vect{\cH},\vect{\cK}\}$ (with $\vect{\cK}$ given by Eq.~\eqref{eq:eddingtonTensor}), is equivalent to the following requirement for the Eddington factor
\begin{equation}
  \chi_{\mbox{\tiny min}}
  =\max\big(1-\f{2}{3\cJ},h^{2}\big)
  <\chi<\min\big(1,\f{1}{3\cJ}-\f{\cJ}{1-\cJ}h^{2}\big)=\chi_{\mbox{\tiny max}}.  
  \label{eq:eddingtonFactorBounds}
\end{equation}
Fortunately, these bounds are satisfied by the algebraic closures based on Fermi-Dirac statistics.  
(Note that for $\cJ\ll1$ the bounds in Eq.~\eqref{eq:eddingtonFactorBounds} limit to the bounds for positive distributions given by Levermore \cite{levermore_1984}; i.e., $h^{2}<\chi<1$.)

In Figure~\ref{fig:EddingtonFactorsWithDifferentClosure}, we plot the Eddington factor $\chi$ versus the flux factor $h$ for the various algebraic closures discussed above and for different values of $\cJ\in(0,1)$: $0.01$ (upper left panel), $0.4$ (upper right panel), $0.6$ (lower left panel), and $0.99$ (lower right panel).  
The lower and upper bounds on the Eddington factor for realizable closures ($\chi_{\mbox{\tiny min}}$ and $\chi_{\mbox{\tiny max}}$, respectively) are also plotted.  
We note that for all the closures, the Eddington factor $\chi\to1/3$ as $h\to0^{+}$.  

When $\cJ=0.01$, the maximum entropy closures (CB, BL, and Minerbo) are practically indistinguishable, while the Eddington factor of the Kershaw closure is larger than that of the other closures over most of the domain.  
When $\cJ=0.4$, the Eddington factor for the closures based on Fermi-Dirac statistics (CB, BL, and Kershaw) remain close together, while the Eddington factor for the Minerbo closure is larger than the other closures for $h\gtrsim0.2$.  
The Eddington factor for all closures remain between $\chi_{\mbox{\tiny min}}$ and $\chi_{\mbox{\tiny max}}$ when $\cJ=0.01$ and $\cJ=0.4$.  

When $\cJ=0.6$, the Eddington factor for the closures based on Fermi-Dirac statistics remain close together and within the bounds in Eq.~\eqref{eq:eddingtonFactorBounds}.  
The dependence of the Eddington factor on $h$ for the Minerbo closure differs from the other closures (i.e., increases vs decreases with increasing $h$), and exceeds $\chi_{\mbox{\tiny max}}$ for $h\gtrsim0.34$.  
When $\cJ=0.99$, the Eddington factor of the CB and BL closures (indistinguishable) and the Kershaw closure remain within the bounds given in Eq.~\eqref{eq:eddingtonFactorBounds}.  
The Eddington factor of the Minerbo closure is nearly flat, and exceeds $\chi_{\mbox{\tiny max}}$ for $h\gtrsim0.006$.  

\begin{figure}[H]
  \centering
  \begin{tabular}{cc}
    \includegraphics[width=0.5\textwidth]{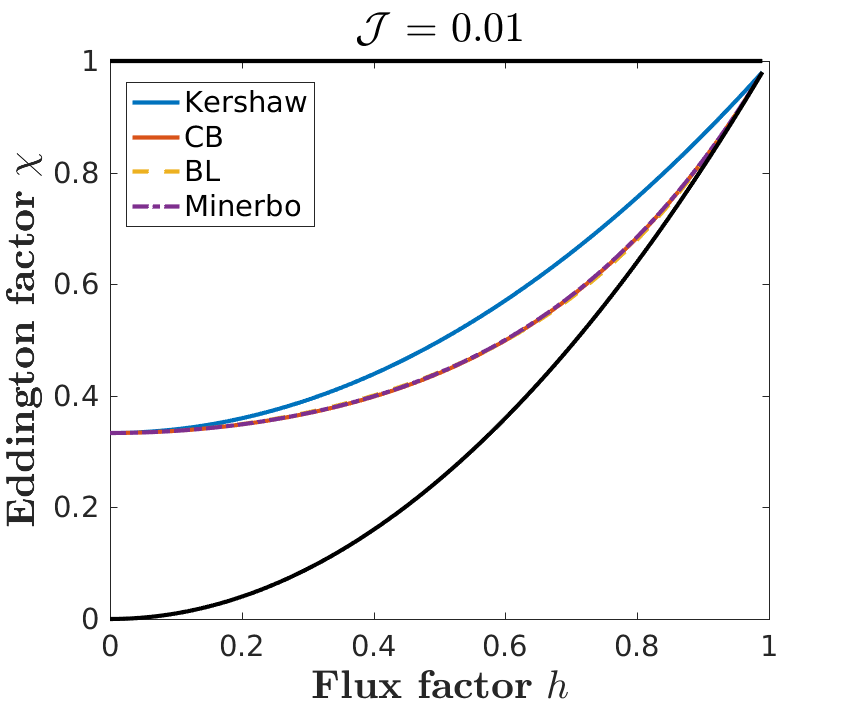}
    \includegraphics[width=0.5\textwidth]{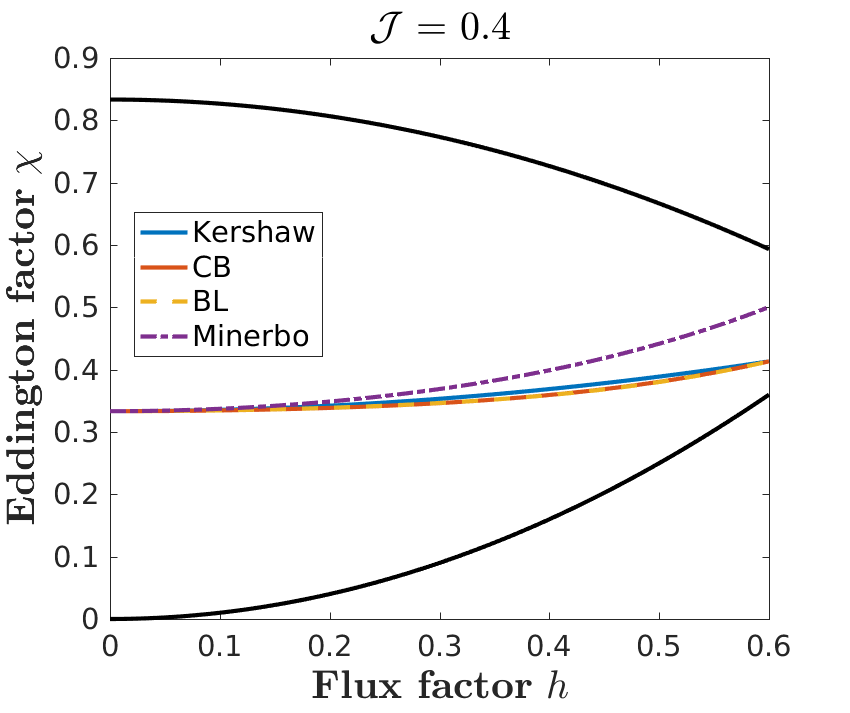} \\
    \includegraphics[width=0.5\textwidth]{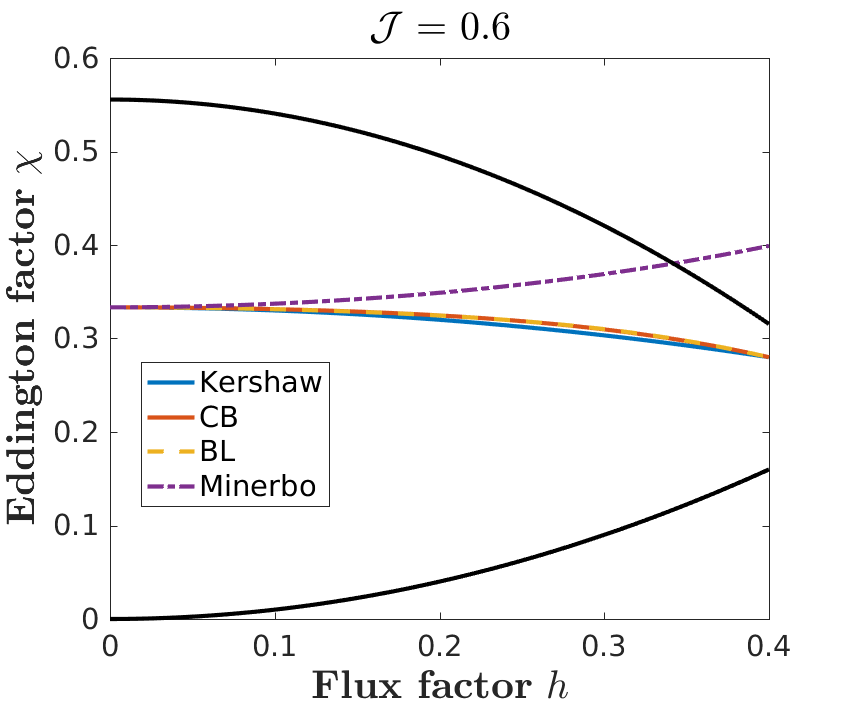}
    \includegraphics[width=0.5\textwidth]{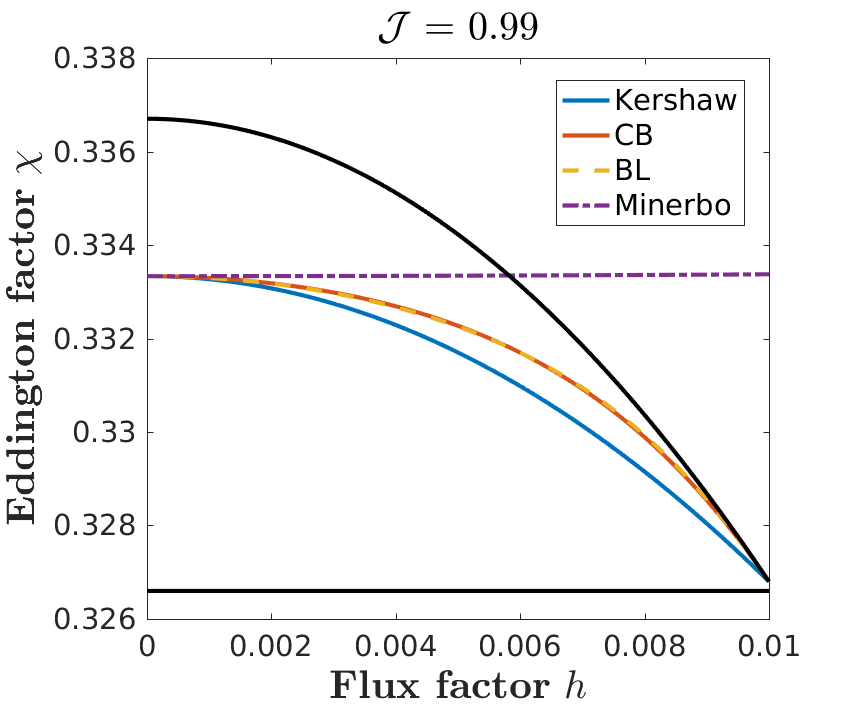}
  \end{tabular}
   \caption{Plot of Eddington factors $\chi$ versus flux factor $h$ for different values of $\cJ$ for various algebraic closures: $\cJ=0.01$ (upper left panel), $\cJ=0.4$ (upper right panel), $\cJ=0.6$ (lower left panel), and $\cJ=0.99$ (lower right panel).  In each panel we plot the Eddington factors of Kershaw (solid blue lines), Cernohorsky \& Bludman (CB, solid red lines), Banach \& Larecki (BL, dashed orange lines), and Minerbo (dash-dot purple lines).  We also plot $\chi_{\mbox{\tiny min}}$ and $\chi_{\mbox{\tiny max}}$ defined in Eq.~\eqref{eq:eddingtonFactorBounds} (lower and upper solid black lines, respectively).}
  \label{fig:EddingtonFactorsWithDifferentClosure}
\end{figure}
We have also checked numerically that for all the algebraic closures based on Fermi-Dirac statistics (CB, BL, and Kershaw), the bounds on the Eddington factor in Eq.~\eqref{eq:eddingtonFactorBounds} holds for all $\vect{\cM}\in\cR$.  
Thus, we conclude that these closures are suited for development of realizability-preserving numerical methods for the two-moment model of fermion transport.  

In Figure~\ref{fig:MabWithDifferentClosure}, we further illustrate properties of the algebraic closures by plotting $\vect{\cM}_{ab}$ as defined in Lemma~\ref{lem:explicitStep} for the maximum entropy closures of CB and Minerbo.  
In both panels, we plot $\vect{\cM}_{ab}$ constructed from randomly selected pairs $\vect{\cM}_{a},\vect{\cM}_{b}\in\cR$ (each blue dot represents one realization of $\vect{\cM}_{ab}$).  
Results for the maximum entropy closure of CB are plotted in the left panel, while results for the Minerbo closure are plotted in the right panel.  
As expected for the closure consistent with moments of Fermi-Dirac distributions (CB), we find $\vect{\cM}_{ab}\in\cR$.  
For the Minerbo closure, which is consistent with positive distributions, $\vect{\cM}_{ab}$ is not confined to $\cR$.  
\begin{figure}[H]
  \centering
  \begin{tabular}{cc}
    \includegraphics[width=0.5\textwidth]{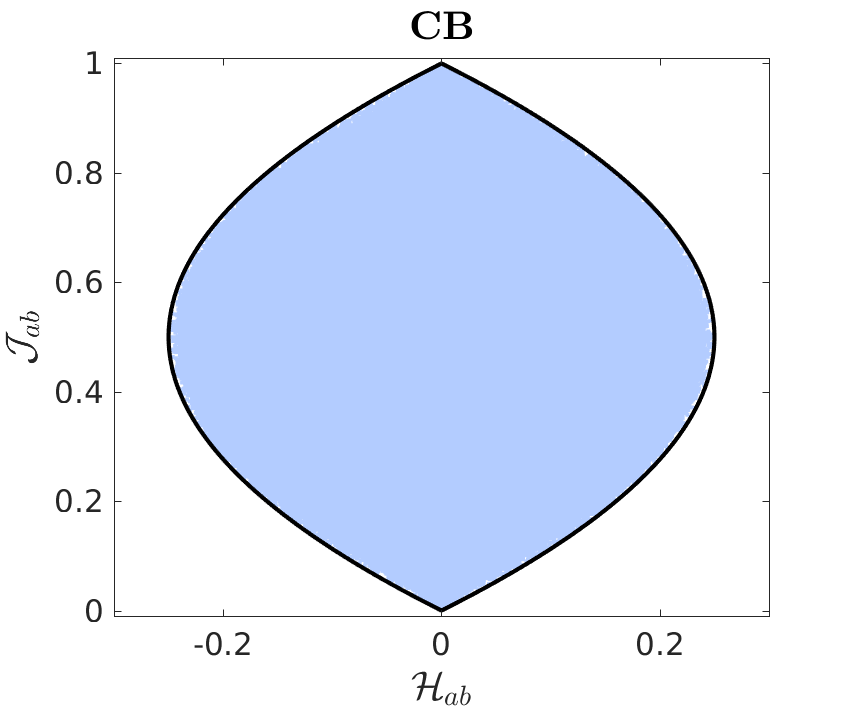}
    \includegraphics[width=0.5\textwidth]{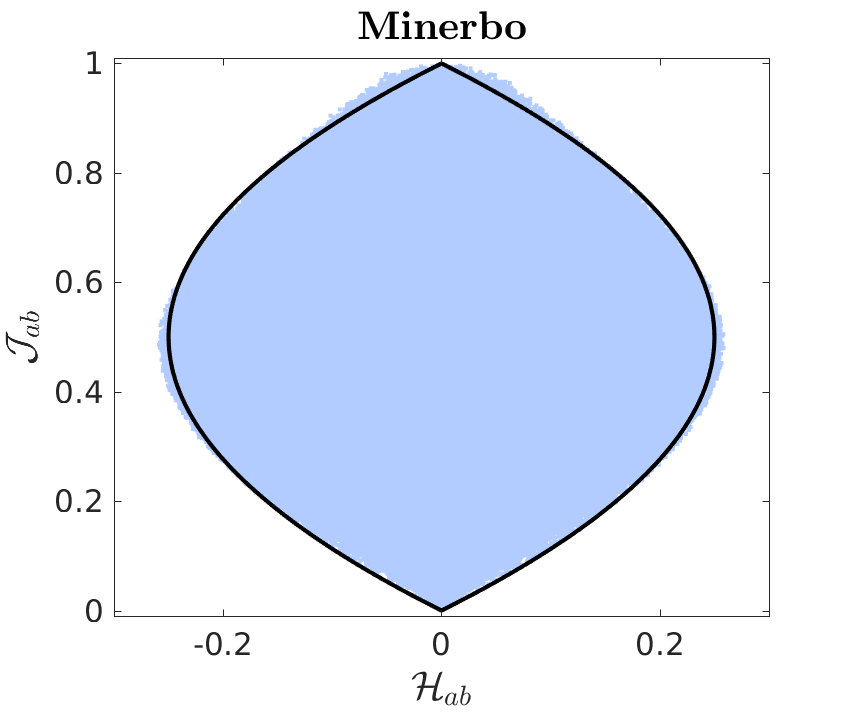}
  \end{tabular}
   \caption{Illustration of $\vect{\cM}_{ab}$, as defined in Lemma~\ref{lem:explicitStep}, computed with algebraic maximum entropy closures of Cernohorsky \& Bludman (left) and Minerbo (right).  In each panel, $\vect{\cM}_{ab}$ was computed using the respective closure, using $10^{6}$ random pairs ($\vect{\cM}_{a},\vect{\cM}_{b}\in\cR$), and plotted as a light-blue point.  The solid black lines mark the boundary of $\cR$: $\gamma(\vect{\cM}) = 0$.}
  \label{fig:MabWithDifferentClosure}
\end{figure}
\section{Discontinuous Galerkin Method}
\label{sec:dg}

Here we briefly outline the DG method for the moment equations.  
(See, e.g., \cite{cockburnShu_2001}, for a comprehensive review on the application of DG methods to solve hyperbolic conservation laws.)  
Since we do not include any physics that couples the energy dimension, the particle energy $\epsilonNu$ is simply treated as a parameter.  
For notational convenience, we will suppress explicit energy dependence of the moments.  
Employing Cartesian coordinates, we write the moment equations in $d$ spatial dimensions as
\begin{equation}
  \pd{\vect{\cM}}{t}+\sum_{i=1}^{d}\pderiv{}{x^{i}}\big(\,\vect{\cF}^{i}(\vect{\cM})\,\big)
  =\frac{1}{\tau}\,\vect{\cC}(\vect{\cM}),
  \label{eq:angularMomentsCartesian}
\end{equation}
where $x^{i}$ is the coordinate along the $i$th coordinate dimension.  
We divide the spatial domain $D$ into a disjoint union $\mathscr{T}$ of open elements $\bK$, so that $D = \cup_{\bK \in \mathscr{T}}\bK$.  
We require that each element is a $d$-dimensional box in the logical coordinates; i.e.,
\begin{equation}
  \bK=\{\,\vect{x} : x^{i} \in K^{i} := (\xL^{i},\xH^{i}),~|~i=1,\ldots,d\,\}, 
\end{equation}
with surface elements denoted $\tilde{\bK}^{i}=\times_{j\ne i}K^{j}$.  
We let $|\bK|$ denote the volume of an element
\begin{equation}
  |\bK| = \int_{\bK}d\vect{x}, \quad\text{where}\quad d\vect{x} = \prod_{i=1}^{d}dx^{i}.  
\end{equation}
We also define $\tilde{\vect{x}}^{i}$ as the coordinates orthogonal to the $i$th dimension, so that as a set $\vect{x}=\{\tilde{\vect{x}}^{i},x^{i}\}$.  
The width of an element in the $i$th dimension is $|K^{i}|=\xH^{i}-\xL^{i}$.  

We let the approximation space for the DG method, $\mathbb{V}^{k}$, be constructed from the tensor product of one-dimensional polynomials of maximal degree $k$.  
Note that functions in $\mathbb{V}^{k}$ can be discontinuous across element interfaces.  
The semi-discrete DG problem is to find $\vect{\cM}_{h}\in\mathbb{V}^{k}$ (which approximates $\vect{\cM}$ in Eq.~\eqref{eq:angularMomentsCartesian}) such that
\begin{align}
  &\pd{}{t}\int_{\bK}\vect{\cM}_{h}\,v\,d\vect{x}
  +\sum_{i=1}^{d}\int_{\tilde{\bK}^{i}}
  \big(\,
    \widehat{\bcF}^{i}(\vect{\cM}_{h})\,v\big|_{\xH^{i}}
    -\widehat{\bcF}^{i}(\vect{\cM}_{h})\,v\big|_{\xL^{i}}
  \,\big)\,d\tilde{\bx}^{i} \nonumber \\
  &\hspace{24pt}
  -\sum_{i=1}^{d}\int_{\bK}\bcF^{i}(\vect{\cM}_{h})\,\pderiv{v}{x^{i}}\,d\vect{x}
  =\f{1}{\tau}\int_{\bK}\bcC(\vect{\cM}_{h})\,v\,d\vect{x},
  \label{eq:semidiscreteDG}
\end{align}
for all $v\in\mathbb{V}^{k}$ and all $\bK\in\mathscr{T}$.  

In Eq.~\eqref{eq:semidiscreteDG}, $\widehat{\bcF}^{i}(\vect{\cM}_{h})$ is a numerical flux, approximating the flux on the surface of $\bK$ with unit normal along the $i$th coordinate direction.  
It is evaluated with a flux function $\vect{\mathscr{F}}^{i}$ using the DG approximation from both sides of the element interface; i.e.,
\begin{equation}
  \widehat{\bcF}^{i}(\vect{\cM}_{h})\big|_{x^{i}}=\vect{\mathscr{F}}^{i}(\vect{\cM}_{h}(x^{i,-},\tilde{\bx}^{i}),\vect{\cM}_{h}(x^{i,+},\tilde{\bx}^{i})),
\end{equation}
where superscripts $-/+$ in the arguments of $\vect{\cM}_{h}$ indicate that the function is evaluated to the immediate left/right of $x^{i}$.  
In this paper we use the simple Lax-Friedrichs (LF) flux given by
\begin{equation}
  \vect{\mathscr{F}}_{\mbox{\tiny LF}}^{i}(\vect{\cM}_{a},\vect{\cM}_{b})
  =\f{1}{2}\,\big(\,\bcF^{i}(\vect{\cM}_{a})+\bcF^{i}(\vect{\cM}_{b})-\alpha^{i}\,(\,\vect{\cM}_{b}-\vect{\cM}_{a}\,)\,\big),
  \label{eq:fluxFunctionLF}
\end{equation}
where $\alpha^{i}$ is the largest eigenvalue (in absolute value) of the flux Jacobian $\partial\bcF^{i}/\partial\vect{\cM}$.  
For particles propagating at the speed of light, we can simply take $\alpha^{i}=1$ (i.e., the global LF flux).  

\begin{rem}
For simplicity, in Eq.~\eqref{eq:semidiscreteDG}, we have approximated the opacities $\sigma_{\Ab}$ and $\sigma_{\Scatt}$ (and thus $\xi$ and $\tau$) on the right-hand side of Eq.~\eqref{eq:angularMomentsCartesian} with constants in each element; i.e., $\sigma_{\Ab},\sigma_{\Scatt}\in\bbV^{0}$.  
\end{rem}
\section{Convex-Invariant IMEX Schemes}
\label{sec:imex}

In this section we discuss the class of IMEX schemes that are used for the realizability-preserving DG-IMEX method developed in Section~\ref{sec:realizableDGIMEX} (see also \ref{app:butcherTables} for additional details).  
The semi-discretization of the moment equations with the DG method given by Eq.~\eqref{eq:semidiscreteDG} results in a system of ordinary differential equations (ODEs) of the form
\begin{equation}
  \dot{\vect{u}}
  =\vect{\cT}(\vect{u})+\f{1}{\tau}\,\vect{\cQ}(\vect{u}),
  \label{eq:ode}
\end{equation}
where $\vect{u}=\{\vect{u}_{\bK}\}_{\bK\in\mathscr{T}}$ are the degrees of freedom evolved with the DG method; i.e., for a test space spanned by $\{\phi_{i}(\vect{x})\}_{i=1}^{N}\in\bbV^{k}$, we let
\begin{equation}
  \vect{u}_{\bK}=\f{1}{|\bK|}\Big(\,\int_{\bK}\vect{\cM}_{h}\,\phi_{1}\,d\vect{x},\int_{\bK}\vect{\cM}_{h}\,\phi_{2}\,d\vect{x},\ldots,\int_{\bK}\vect{\cM}_{h}\,\phi_{N}\,d\vect{x}\,\Big)^{T}.
\end{equation}
Thus, for $\phi_{1}=1$, the first components of $\vect{u}_{\bK}$ are the cell averaged moments.  
In Eq.~\eqref{eq:ode}, the transport operator $\vect{\cT}$ corresponds to the second and third term on the left-hand side of Eq.~\eqref{eq:semidiscreteDG}, while the collision operator $\vect{\cQ}$ corresponds to the right-hand side of Eq.~\eqref{eq:semidiscreteDG}.  

Eq.~\eqref{eq:ode} is to be integrated forward in time with an ODE integrator.  
Since the realizable set $\cR$ is convex, convex-invariant schemes can be used to design realizability-preserving schemes for the two-moment model.  
\begin{define}
  Given sufficient conditions, a convex-invariant time integration scheme preserves the constraints of a model if the set of admissible states satisfying the constraints forms a convex set.  
  \label{def:constraintPreserving}
\end{define}
As an example, high-order explicit strong stability-preserving Runge-Kutta (SSP-RK) methods form a class of convex-invariant schemes.  

\subsection{Second-Order Accurate, Convex-Invariant IMEX Schemes}

In many applications, collisions with a background induce stiffness ($\tau\ll1$) in regions of the computational domain that must be treated with implicit methods.  
Meanwhile, the time scales induced by the transport term can be treated with explicit methods.  
This motivates the use of IMEX methods \cite{ascher_etal_1997,pareschiRusso_2005}.  
Our goal is to employ IMEX schemes that preserve realizability of the moments, subject only to a time step governed by the explicit transport operator and comparable to the time step required for numerical stability of the explicit scheme.  
We seek to achieve this goal with convex-invariant IMEX schemes.  

Unfortunately, high-order (second or higher order temporal accuracy) convex-invariant IMEX methods with time step restrictions solely due to the transport operator do not exist (see for example Proposition~6.2 in \cite{gottlieb_etal_2001}, which rules out the existence of implicit SSP-RK methods of order higher than one).  
To overcome this barrier, Chertock et al. \cite{chertock_etal_2015} presented IMEX schemes with a correction step.  
These schemes are SSP but only first-order accurate within the standard IMEX framework.  
The correction step is introduced to recover second-order accuracy.  
However, the correction step in \cite{chertock_etal_2015} involves both the transport and collision operators, and we have found that, when applied to the fermionic two-moment model, realizability is subject to a time step restriction that depends on $\tau$ in a way that becomes too restrictive for stiff problems.  
More recently, Hu et al. \cite{hu_etal_2018}, presented similar IMEX schemes for problems involving BGK-type collision operators, but with a correction step that does not include the transport operator.  
In this case, the scheme is convex-invariant, subject only to time step restrictions stemming from the transport operator, which is more attractive for our target application.  
These second-order accurate, $s$-stage IMEX schemes take the following form \cite{hu_etal_2018}
\begin{align}
  \vect{u}^{(i)}
  &=\vect{u}^{n}
  +\dt\sum_{j=1}^{i-1}\tilde{a}_{ij}\,\vect{\cT}(\vect{u}^{(j)})
  +\dt\sum_{j=1}^{i}a_{ij}\,\f{1}{\tau}\,\vect{\cQ}(\vect{u}^{(j)}),
  \quad i=1,\ldots,s, \label{imexStages} \\
  \tilde{\vect{u}}^{n+1}
  &=\vect{u}^{n}
  +\dt\sum_{i=1}^{s}\tilde{w}_{i}\,\vect{\cT}(\vect{u}^{(i)})
  +\dt\sum_{i=1}^{s}w_{i}\,\f{1}{\tau}\,\vect{\cQ}(\vect{u}^{(i)}), \label{imexIntermediate} \\
  \vect{u}^{n+1}
  &=\tilde{\vect{u}}^{n+1}-\alpha\,\dt^{2}\,\f{1}{\tau^{2}}\,\vect{\cQ}'(\vect{u}^{*})\,\vect{\cQ}(\vect{u}^{n+1}), \label{eq:imexCorrection}
\end{align}
where, as in standard IMEX schemes, $(\tilde{a}_{ij})$ and $(a_{ij})$, components of $s\times s$ matrices $\tilde{A}$ and $A$, respectively, and the vectors $\tilde{\vect{w}}=(\tilde{w}_{1},\ldots,\tilde{w}_{s})^{T}$ and $\vect{w}=(w_{1},\ldots,w_{s})^{T}$ must satisfy certain order conditions \cite{pareschiRusso_2005}.  
The coefficient $\alpha$ in the correction step is positive, and $\vect{\cQ}'$ is the Fr{\'e}chet derivative of the collision term evaluated at $\vect{u}^{*}$.  
For second-order accuracy, $\vect{\cQ}'$ can be evaluated using any of the stage values ($\vect{u}^{n}$, $\vect{u}^{(i)}$, or $\tilde{\vect{u}}^{n+1}$).  
For second-order temporal accuracy, the order conditions for the IMEX scheme in Eqs.~\eqref{imexStages}-\eqref{eq:imexCorrection} are \cite{hu_etal_2018}
\begin{equation}
  \sum_{i=1}^{s}\tilde{w}_{i}=\sum_{i=1}^{s}w_{i}=1,
  \label{orderConditions1}
\end{equation}
and
\begin{equation}
  \sum_{i=1}^{s}\tilde{w}_{i}\,\tilde{c}_{i}
  =\sum_{i=1}^{s}\tilde{w}_{i}\,c_{i}
  =\sum_{i=1}^{s}w_{i}\,\tilde{c}_{i}
  =\sum_{i=1}^{s}w_{i}\,c_{i}-\alpha=\f{1}{2}, 
  \label{orderConditions2}
\end{equation}
where $\tilde{c}_{i}$ and $c_{i}$ are given in \ref{app:butcherTables}.  
For globally stiffly accurate (GSA) IMEX schemes, $\tilde{w}_{i}=\tilde{a}_{si}$ and $w_{i}=a_{si}$ for $i=1,\ldots,s$, so that $\tilde{\vect{u}}^{n+1}=\vect{u}^{(s)}$ \cite{ascher_etal_1997}.  
This property is beneficial for very stiff problems and also simplifies the proof of the realizability-preserving property of the DG-IMEX scheme given in Section~\ref{sec:realizableDGIMEX}, since it eliminates the assembly step in Eq.~\eqref{imexIntermediate}.  

Hu et al. \cite{hu_etal_2018} rewrite the stage values in Eq.~\eqref{imexStages} in the following form
\begin{equation}
  \vect{u}^{(i)}
  =\sum_{j=0}^{i-1}c_{ij}\Big[\,\vect{u}^{(j)}+\hat{c}_{ij}\,\dt\,\vect{\cT}(\vect{u}^{(j)})\,\Big]
  +a_{ii}\,\dt\,\f{1}{\tau}\,\vect{\cQ}(\vect{u}^{(i)}),\quad i=1,\ldots,s,
  \label{eq:imexStagesRewrite}
\end{equation}
where $c_{ij}$, and $\hat{c}_{ij}=\tilde{c}_{ij}/c_{ij}$ are computed from $\tilde{a}_{ij}$ and $a_{ij}$ (see \ref{app:butcherTables}).  
In Eq.~\eqref{eq:imexStagesRewrite}, $\vect{u}^{(0)}=\vect{u}^{n}$.  
Two types of IMEX schemes are considered: \emph{type~A} \cite{pareschiRusso_2005,dimarcoPareschi2013} and \emph{type~ARS} \cite{ascher_etal_1997}.  
For IMEX schemes of type~A, the matrix $A$ is invertible.  
For IMEX schemes of type~ARS, the matrix $A$ can be written as
\begin{equation*}
  \left( 
    \begin{matrix} 
       0 & 0 \\ 
       0 & \hat{A}
    \end{matrix}
  \right),
\end{equation*}
where $\hat{A}$ is invertible.  
In writing the stages in the IMEX scheme in the general form given by Eq.~\eqref{eq:imexStagesRewrite}, it should be noted that $\tilde{c}_{i0}=0$ for IMEX schemes of Type~A, and $c_{i1}=\tilde{c}_{i1}=0$ for IMEX schemes of Type~ARS \cite{hu_etal_2018}.  
Type A schemes can be made convex-invariant by requiring 
\begin{align}
  &a_{ii}>0, \quad c_{i0}\ge0, \quad \text{for} \quad i=1,\ldots,s, \nonumber \\
  &\text{and} \quad c_{ij},\tilde{c}_{ij}\ge0, \quad \text{for} \quad i=2,\ldots,s, \quad\text{and}\quad j=1,\ldots,s-1.  
  \label{eq:positivityConditionsTypeA}
\end{align}
Similarly, type ARS schemes can be made convex-invariant by requiring 
\begin{align}
  &a_{ii}>0, \quad c_{i0},\tilde{c}_{i0}\ge0, \quad \text{for} \quad i=2,\ldots,s, \nonumber \\
  &\text{and} \quad c_{ij},\tilde{c}_{ij}\ge0, \quad \text{for} \quad i=3,\ldots,s, \quad\text{and}\quad j=2,\ldots,i-1.  
  \label{eq:positivityConditionsTypeARS}
\end{align}
Coefficients were given in \cite{hu_etal_2018} for GSA schemes of type A with $s=3$ and type ARS with $s=4$.  
(It was also proven that $s=3$ and $s=4$ are the necessary number of stages needed for GSA second-order convex-invariant IMEX schemes of type A and type ARS, respectively.)  

In Eq.~\eqref{eq:imexStagesRewrite}, the explicit part of the IMEX scheme has been written in the so-called Shu-Osher form \cite{shuOsher_1988}.  
For the scheme to be convex-invariant, the coefficients $c_{ij}$ must also satisfy $\sum_{j=0}^{i-1}c_{ij}=1$.  
Then, if the expression inside the square brackets in Eq.~\eqref{eq:imexStagesRewrite} --- which is in the form of a forward Euler update with time step $\hat{c}_{ij}\,\dt$ --- is in the (convex) set of admissible states for all $i=1,\ldots,s$, and $j=0,\ldots,i-1$, it follows from convexity arguments that the entire sum on the right-hand side of Eq.~\eqref{eq:imexStagesRewrite} is also admissible.  
Thus, if the explicit update with the transport operator is admissible for a time step $\dt_{\mbox{\tiny Ex}}$, the IMEX scheme is convex-invariant for a time step $\dt\le c_{\Sch}\,\dt_{\mbox{\tiny Ex}}$, where
\begin{equation}
  c_{\Sch}=\min_{\substack{i = 2,\ldots,s \\ 
       j = 1,\ldots,i-1}}\,\f{1}{\hat{c}_{ij}} \quad \text{(Type A)}, \quad  c_{\Sch}=\min_{\substack{i = 2,\ldots,s \\ 
              j = 0,2,\ldots,i-1}}\,\f{1}{\hat{c}_{ij}} \quad \text{(Type ARS)}.
  \label{eq:imexCFL}
\end{equation}
Here, $c_{\Sch}$ is the CFL condition, relative to $\dt_{\mbox{\tiny Ex}}$, for the IMEX scheme to be convex-invariant.  
It is desirable to make $c_{\Sch}$ as large (close to $1$) as possible.  
Note that for $\vect{u}^{(i)}$ to be admissible also requires the implicit solve (equivalent to implicit Euler) to be convex-invariant.  
In \cite{hu_etal_2018}, Hu et al. provide examples of GSA, convex-invariant IMEX schemes of type A (see scheme PA2 in \ref{app:butcherTables}) and type ARS.  
In \ref{app:butcherTables}, we provide another example of a GSA, convex-invariant IMEX scheme of type A (scheme PA2+), with a larger $c_{\Sch}$ (a factor of about $1.7$ larger).  

\subsection{Convex-Invariant, Diffusion Accurate IMEX Schemes}

Unfortunately, the correction step in Eq.~\eqref{eq:imexCorrection} deteriorates the accuracy of the IMEX scheme when applied to the moment equations in the diffusion limit.  
The diffusion limit is characterized by frequent collisions ($\tau\ll 1$) in a purely scattering medium ($\xi=0$), and exhibits long-time behavior governed by (e.g., \cite{jinLevermore_1996})
\begin{equation}
  \pd{\cJ}{t} + \nabla\cdot\vect{\cH} = 0
  \quad\text{and}\quad
  \vect{\cH} = - \tau\,\nabla\cdot\vect{\cK}.  
  \label{eq:diffusionLimit}
\end{equation}
Here the time derivative term in the equation for the particle flux has been dropped (formally $\cO(\tau^{2})$) so that, to leading order in $\tau$, the second equation in Eq.~\eqref{eq:diffusionLimit} states a balance between the transport term and the collision term.  
Furthermore, in the diffusion limit, the distribution function is nearly isotropic so that $\vect{\cK}\approx\f{1}{3}\,\cJ\,\vect{I}$ and $\vect{\cH}\approx-\f{1}{3}\,\tau\,\nabla\cJ$.  
The absence of the transport operator in the correction step in Eq.~\eqref{eq:imexCorrection}, destroys the balance between the transport term and the collision term.  
We demonstrate the inferior performance of IMEX schemes with this correction step in the diffusion limit in Section~\ref{sec:smoothProblems}.  
We have also implemented and tested one of the IMEX schemes in Chertock et al. \cite{chertock_etal_2015} (not included in Section~\ref{sec:smoothProblems}), where the transport operator is part of the correction step, and found it to perform very well in the diffusion limit.  
However, we have not been able to prove the realizability-preserving property with this approach without invoking a too severe time step restriction.  
We therefore proceed to design convex-invariant IMEX schemes without the correction step that perform well in the diffusion limit.  
We limit the scope to IMEX schemes of type~ARS.  
(It can be shown that IMEX schemes of type~A conforming to Definition~\ref{def:PD-IMEX} below do not exist; cf. \ref{app:noTypeA}.)

We take a heuristic approach to determine conditions on the coefficients of the IMEX scheme to ensure it performs well in the diffusion limit.  
We leave the spatial discretization unspecified.  
Define the vectors 
\begin{equation}
  \vec{\cJ}=(\cJ^{(1)},\ldots,\cJ^{(s)})^{T}
  \quad\text{and}\quad
  \vec{\vect{\cH}}=(\vect{\cH}^{(1)},\ldots,\vect{\cH}^{(s)})^{T}.  
\end{equation}
(The components of $\vec{\cJ}$ and $\vec{\vect{\cH}}$ can, e.g., be the cell averages evolved with the DG method.)  
We can then write the stages of the IMEX scheme in Eq.~\eqref{imexStages} applied to the particle density equation as
\begin{equation}
  \vec{\cJ}=\cJ^{n}\,\vec{\vect{e}} - \dt\,\tilde{A}\,\nabla\cdot\vec{\vect{\cH}}, 
  \label{eq:numberDensityIMEX}
\end{equation}
where $\vec{\vect{e}}$ is a vector of length $s$ containing all ones, and the divergence operator acts individually on the components of $\vec{\vect{\cH}}$.  
Similarly, for the particle flux equation we have
\begin{equation}
  \vec{\vect{\cH}}=\vect{\cH}^{n}\,\vec{\vect{e}}-\dt\,\big(\,\f{1}{3}\tilde{A}\,\nabla\vec{\cJ}+\f{1}{\tau}\,A\,\vec{\vect{\cH}}\,\big).
  \label{eq:numberFluxIMEX}
\end{equation}
In the context of IMEX schemes, the diffusion limit (cf. the second equation in \eqref{eq:diffusionLimit}) implies that the relation $A\,\vec{\vect{\cH}}=-\f{1}{3}\,\tau\,\tilde{A}\,\nabla\vec{\cJ}$ should hold.  
Define the pseudoinverse of the implicit coefficient matrix for IMEX schemes of type~ARS as
\begin{equation*}
    A^{-1}
    =\left(
        \begin{matrix}
          0 & 0 \\ 
          0 & \hat{A}^{-1}
        \end{matrix}
      \right).
\end{equation*}
Then, for the stages $i=1,\ldots,s$, 
\begin{equation}
  \vect{\cH}^{(i)}=-\f{1}{3}\,\tau\,\vec{\vect{e}}_{i}^{T}A^{-1}\tilde{A}\,\vec{\vect{e}}\,\nabla\cJ^{n}+\cO(\dt\,\tau^{2}),
\end{equation}
where $\vec{\vect{e}}_{i}$ is the $i$th column of the $s\times s$ identity matrix and we have introduced the expansion $\vec{\cJ}=\cJ^{n}\vec{\vect{e}}+\cO(\dt\,\tau)$.  
For $\vect{\cH}^{(i)}$ to be accurate in the diffusion limit, we require that
\begin{equation}
  \vect{e}_{i}^{T}A^{-1}\tilde{A}\,\vect{e} = 1, \quad i=2,\ldots,s.
  \label{eq:diffusionCondition}
\end{equation}
(The case $i=1$ is trivial and does not place any constraints on the components of $A$ and $\tilde{A}$.)  
If Eq.~\eqref{eq:diffusionCondition} holds, then
\begin{equation}
  \f{\cJ^{(s)}-\cJ^{n}}{\dt}
  =-\tilde{\vect{w}}^{T}(\nabla\cdot\vec{\vect{\cH}})
  =\f{1}{3}\tau\,\nabla^{2}\cJ^{n}+\cO(\dt\,\tau^{2}), 
\end{equation}
which approximates a diffusion equation for $\cJ$ with the correct diffusion coefficient $\tau/3$.  
(For a GSA IMEX scheme without the correction step, $\cJ^{n+1}=\cJ^{(s)}$.)  

Unfortunately, the ``diffusion limit requirement" in Eq.~\eqref{eq:diffusionCondition}, together with the order conditions given by Eqs.~\eqref{orderConditions1} and \eqref{orderConditions2} (with $\alpha=0$), and the positivity conditions on $c_{ij}$ and $\tilde{c}_{ij}$, result in too many constraints to obtain a second-order accurate convex-invariant IMEX scheme.%
\footnote{It was shown in \cite{hu_etal_2018} --- without the diffusion limit requirement --- that the minimum number of stages for convex-invariant IMEX schemes of type~ARS is four.)  
We are also concerned about increasing the number of stages, and thereby the number of implicit solves, since the implicit solve will dominate the computational cost of the IMEX scheme with more realistic collision operators (e.g., inelastic scattering).}  
To reduce the number of constraints and accommodate accuracy in the diffusion limit, we relax the requirement of overall second-order accuracy of the IMEX scheme.  
Instead, we only require the scheme to be second-order accurate in the streaming limit ($\vect{\cQ}=0$).  
This gives the order conditions
\begin{equation}
  \sum_{i=1}^{s}\tilde{w}_{i}=1
  \quad\text{and}\quad
  \sum_{i=1}^{s}\tilde{w}_{i}\,\tilde{c}_{i}=\f{1}{2},
  \label{eq:orderConditionsEx}
\end{equation}
where the first condition (consistency condition) is required for first-order accuracy.  
We then seek to design IMEX schemes of Type~ARS conforming to the following working definition
\begin{define}
  Let PD-IMEX be an IMEX scheme satisfying the following properties:
  \begin{enumerate}
    \item Consistency of the implicit coefficients
    \begin{equation}
      \sum_{i=1}^{s}w_{i}=1.
      \label{eq:implicitConsistency}
    \end{equation}
    \item Second-order accuracy in the streaming limit; i.e., satisfies Eq.~\eqref{eq:orderConditionsEx}.
    \item Convex-invariant; i.e. satisfies Eq.~\eqref{eq:positivityConditionsTypeARS}, with $\sum_{j=0}^{i-1}c_{ij}=1$, for $i=1,\ldots,s$, and $c_{\Sch}>0$.
    \item Well-behaved in the diffusion limit; i.e., satisfies Eq.~\eqref{eq:diffusionCondition}.
    \item Less than four stages ($s\le3$).
    \item Globally stiffly accurate (GSA): $a_{si}=w_{i}$ and $\tilde{a}_{si}=\tilde{w}_{i},\quad i=1,\ldots,s$.
  \end{enumerate}
  \label{def:PD-IMEX}
\end{define}
Fortunately, IMEX schemes of type~ARS satisfying these properties are easy to find, and we provide an example with $s=3$ (two implicit solves) in \ref{app:butcherTables} (scheme PD-ARS; see \ref{app:PD-ARS} for further details).  
In the streaming limit, this scheme is identical to the optimal second-order accurate SSP-RK method \cite{gottlieb_etal_2001}.  
It is also very similar to the scheme given in \cite{mcclarren_etal_2008} (see scheme PC2 in \ref{app:butcherTables}), which is also a GSA IMEX scheme of type ARS with $s=3$.  
Scheme PC2 is second-order in the streaming limit, has been demonstrated to work well in the diffusion limit \cite{mcclarren_etal_2008,radice_etal_2013}, and satisfies the positivity conditions in Eq.~\eqref{eq:positivityConditionsTypeARS}.  
However, $c_{\Sch}=0$ (our primary motivation for finding an alternative).  
In Section~\ref{sec:numerical}, we show numerically that the accuracy of scheme PD-ARS is comparable to the accuracy of scheme PC2.  

\subsection{Absolute Stability}

Here we analyze the absolute stability of the proposed IMEX schemes, PA2+ and PD-ARS (given in \ref{app:butcherTables}), following \cite{hu_etal_2018}.  
As is commonly done, we do this in the context of the linear scalar equation
\begin{equation}
  \dot{u}=\lambda_{1}\,u+\lambda_{2}\,u,
  \label{eq:scalarODE}
\end{equation}
where $\lambda_{1}\in\mathbb{C}$ and $\lambda_{2}\le0$.  
On the right-hand side of Eq.~\eqref{eq:scalarODE}, the first (oscillatory) term is treated explicitly, while the second (damping) term is treated implicitly.  
The IMEX schemes can then be written as $u^{n+1} =P(z_{1},z_{2})\,u^{n}$, where $P(z_{1},z_{2})$ is the amplification factor of the scheme, $z_{1}=\dt\,\lambda_{1} = x +iy$, and $z_{2} = \dt\,\lambda_{2} \le 0$.  
Stability of the IMEX scheme requires $|P(z_1,z_2)|\leq 1$.  
The stability regions of PA2+ and PD-ARS are plotted in Figure~\ref{fig:AbsoluteStability}.  
As can be seen from the figures, the absolute stability region of both schemes increases with increasing $|z_{2}|$ (increased damping).  
For a given $|z_{2}|>0$, the stability region of PD-ARS is larger than that of PA2+.  
In the linear model in Eq.~\eqref{eq:scalarODE}, a time step that satisfies the absolute stability for the explicit part of the IMEX scheme ($|z_2| = 0$) fulfills the stability requirement for the IMEX scheme as whole ($|z_2| \geq 0$).  
This holds for all PD-ARS schemes with $\epsilon\in [0,0.5)$.
\begin{figure}[h]
  \centering
  \begin{tabular}{cc}
    \includegraphics[width=0.5\textwidth]{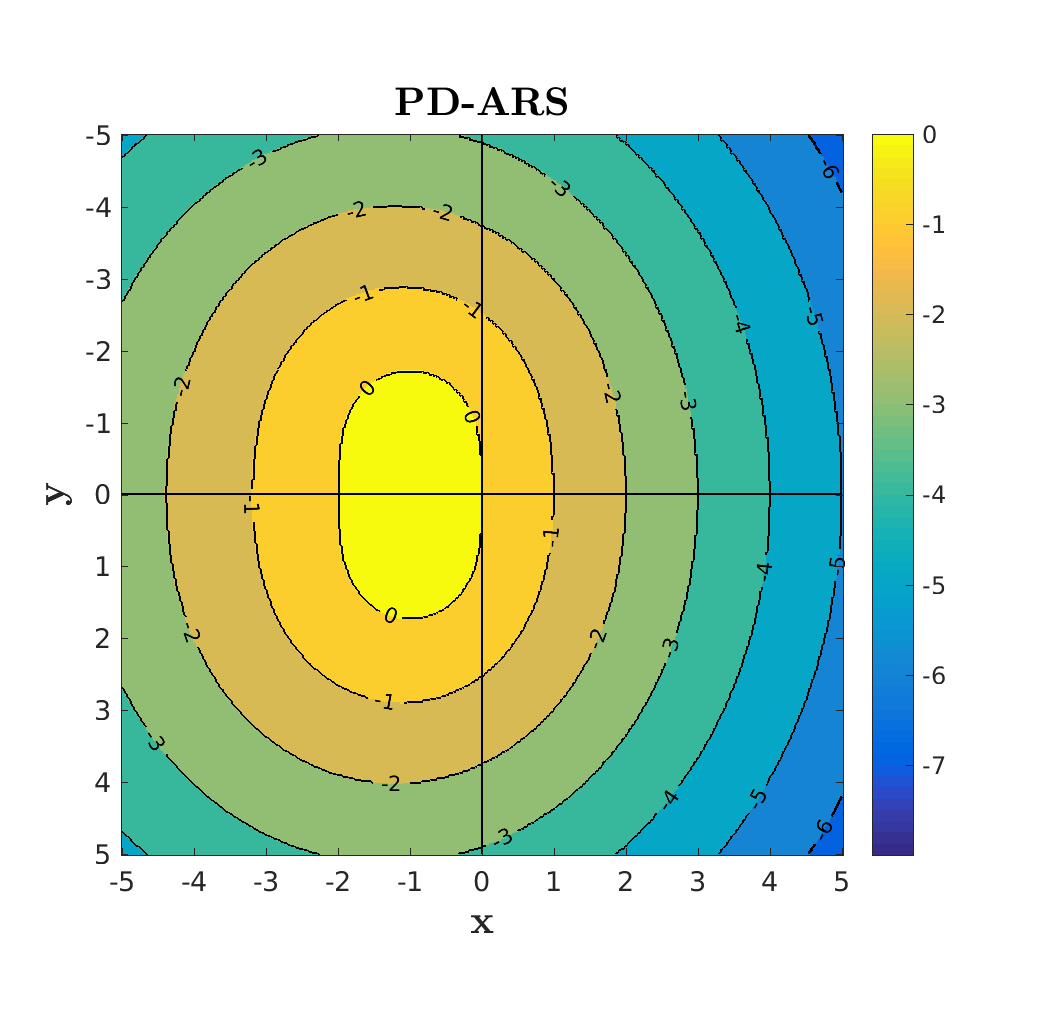}
    \includegraphics[width=0.5\textwidth]{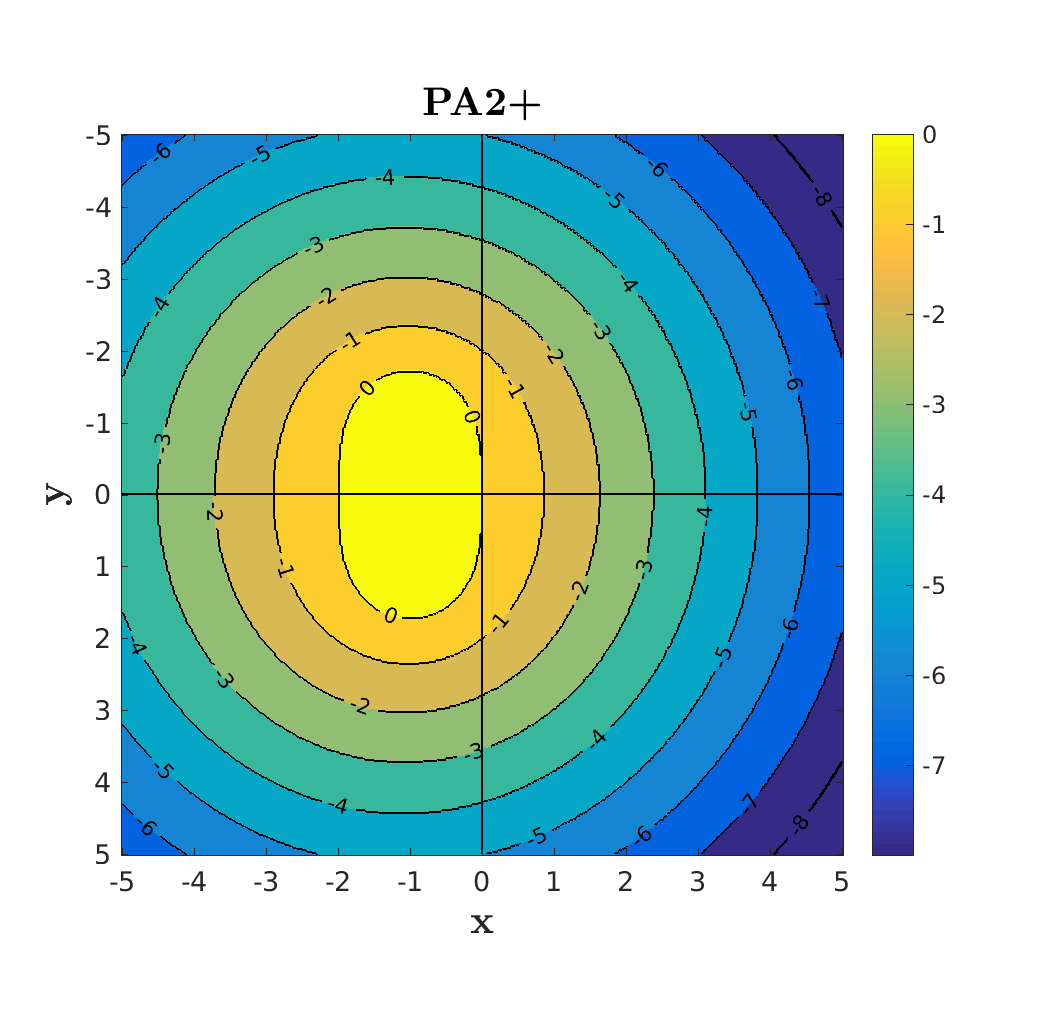}
  \end{tabular}
   \caption{Boundary of the absolute stability region in the $xy$-plane ($x=\operatorname{Re}(z_{1})$, $y=\operatorname{Im}(z_{1})$) for different values of $z_2$ for PD-ARS (with $\epsilon = 0.1$; left panel) and PA2+ (right panel).  Contours of constant $z_{2}$ are included, and the stability region for a given $z_{2}$ is enclosed by the corresponding contour.}
  \label{fig:AbsoluteStability}
\end{figure}
\section{Realizability-Preserving DG-IMEX Scheme}
\label{sec:realizableDGIMEX}

We proceed to develop realizability-preserving DG schemes for the two-moment model based on the IMEX schemes discussed in the previous section (cf. Eqs.~\eqref{imexStages}-\eqref{eq:imexCorrection}).  
Following the framework in \cite{zhangShu_2010b} for high-order DG schemes, the realizability-preserving DG-IMEX scheme is designed to preserve realizability of cell averages over a time step $\dt$ in each element $\bK$.  
Realizability of the cell average is leveraged to limit the polynomial approximation $\vect{\cM}_{h}$ in the stages of the IMEX scheme.  
If the polynomial approximation $\vect{\cM}_{h}$ is not realizable in a finite number of points in $\bK$, the high-order components are damped (see Section~\ref{sec:limiter}).  
The main result of this section is stated in Theorem~\ref{the:realizableDGIMEX}.  
The realizability-preserving property of the DG-IMEX scheme is stated in Theorem~\ref{the:realizableDGIMEX2} in Section~\ref{sec:limiter}, after the discussion of the limiter.  

The cell average of the moments is defined as
\begin{equation}
  \vect{\cM}_{\bK}
  =\f{1}{|\bK|}\int_{\bK}\vect{\cM}_{h}\,d\bx.  
\end{equation}
With $v=1$ in Eq.~\eqref{eq:semidiscreteDG}, the stage values for the cell average in the IMEX scheme (cf. Eq.~\eqref{eq:imexStagesRewrite}) can be written as
\begin{align}
  \cM_{\bK}^{(0)}
  &=\cM_{\bK}^{n}, \nonumber \\
  \vect{\cM}_{\bK}^{(i)}
  &=\sum_{j=0}^{i-1}c_{ij}\,\vect{\cM}_{\bK}^{(ij)}
  +a_{ii}\,\dt\,\f{1}{\tau}\,\big(\,\vect{\eta}-\vect{\cD}\,\vect{\cM}_{\bK}^{(i)}\,\big), \quad i=1,\ldots,s,
  \label{eq:imexStagesCellAverage}
\end{align}
where $c_{ij}\ge0$, $\sum_{j=0}^{i-1}c_{ij}=1$, $a_{ii}>0$, 
\begin{equation}
  \vect{\cM}_{\bK}^{(ij)}
  =\vect{\cM}_{\bK}^{(j)}-\hat{c}_{ij}\,\dt\,\big\langle\,\nabla\cdot\vect{\cF}(\vect{\cM}_{h}^{(j)})\,\big\rangle_{\bK},
\end{equation}
and $\hat{c}_{ij}\ge0$.  
The cell average of the divergence operator is
\begin{equation}
  \big\langle\,\nabla\cdot\vect{\cF}(\vect{\cM}_{h}^{(j)})\,\big\rangle_{\bK}
  =\f{1}{|\bK|}\sum_{k=1}^{d}\int_{\tilde{\bK}^{k}}
  \big(\,\widehat{\bcF}^{k}(\vect{\cM}_{h}^{(j)})\big|_{\xH^{k}}-\widehat{\bcF}^{k}(\vect{\cM}_{h}^{(j)})\big|_{\xL^{k}}\,\big)\,d\tilde{\vect{x}}^{k}.  
\end{equation}

We first establish conditions for realizability of the stage values in Eq.~\eqref{eq:imexStagesCellAverage}.  
\begin{lemma}
  Let $\vect{\cM}_{\bK}^{(i)}$ satisfy Eq.~\eqref{eq:imexStagesCellAverage}.
  Assume that $\vect{\cM}_{\bK}^{(ij)}\in\cR~\forall\,i=1,\ldots,s,\,j\le i-1$.  
  Then, $\vect{\cM}_{\bK}^{(i)}\in\cR$, for $i=1,\ldots,s$.  
  \label{lem:imexStagesCellAverage}
\end{lemma}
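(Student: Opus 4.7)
The plan is to decompose the update in Eq.~\eqref{eq:imexStagesCellAverage} into two pieces --- a convex combination of the $\vect{\cM}_{\bK}^{(ij)}$ and a backward Euler step for the collision operator --- and to dispatch each piece with results already in hand.

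First, I would set
\begin{equation*}
  \vect{\cM}_{\bK}^{(*)} := \sum_{j=0}^{i-1} c_{ij}\,\vect{\cM}_{\bK}^{(ij)}.
\end{equation*}
Since $c_{ij}\ge 0$ and $\sum_{j=0}^{i-1} c_{ij} = 1$, and by assumption $\vect{\cM}_{\bK}^{(ij)} \in \cR$ for every $j\le i-1$, convexity of $\cR$ (established in the lemma preceding Lemma~\ref{lem:explicitStep}) gives $\vect{\cM}_{\bK}^{(*)} \in \cR$.

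Next, I would rewrite Eq.~\eqref{eq:imexStagesCellAverage} as
\begin{equation*}
  \vect{\cM}_{\bK}^{(i)} = \vect{\cM}_{\bK}^{(*)} + \alpha\,\vect{\cC}(\vect{\cM}_{\bK}^{(i)}),
  \qquad \alpha := a_{ii}\,\dt/\tau > 0,
\end{equation*}
which is exactly the backward-Euler-type relation of Lemma~\ref{lem:implicitStep} with $\vect{\cM}_{a} = \vect{\cM}_{\bK}^{(*)}$ and $\vect{\cM}_{b} = \vect{\cM}_{\bK}^{(i)}$. Applying Lemma~\ref{lem:implicitStep} then yields $\vect{\cM}_{\bK}^{(i)} \in \cR$. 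Proceeding by induction on $i$ (with the base case $i=1$ using $\vect{\cM}_{\bK}^{(1,0)} = \vect{\cM}_{\bK}^{(10)}$ realizable by hypothesis) completes the argument for all $i=1,\ldots,s$.

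There is no real obstacle here: both building blocks (convexity of $\cR$ and realizability under the implicit collision solve) have already been proved, and the structure of Eq.~\eqref{eq:imexStagesCellAverage} --- a convex combination of assumed-realizable states, corrected by an implicit collision step with positive coefficient $a_{ii}\,\dt/\tau$ --- matches those hypotheses exactly. The only mild subtlety worth flagging in the write-up is that the convex-combination coefficients $c_{ij}$ being nonnegative and summing to one is precisely what the convex-invariance conditions in Eq.~\eqref{eq:positivityConditionsTypeA} (or \eqref{eq:positivityConditionsTypeARS}) were designed to enforce, so the proof implicitly relies on the IMEX scheme being one of those considered in Section~\ref{sec:imex}.
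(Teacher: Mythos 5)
Your proposal is correct and follows essentially the same route as the paper: express the explicit part as a convex combination of the assumed-realizable $\vect{\cM}_{\bK}^{(ij)}$ (using convexity of $\cR$), then treat the remaining implicit collision term as the backward-Euler relation of Lemma~\ref{lem:implicitStep}. The only superfluous element is the induction on $i$, since realizability of all $\vect{\cM}_{\bK}^{(ij)}$ is already granted by hypothesis rather than needing to be propagated stage by stage.
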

\begin{proof}
  For $i=1,\ldots,s$, 
  \begin{equation*}
    \sum_{j=0}^{i-1}c_{ij}\,\vect{\cM}_{\bK}^{(ij)}\in\cR,
  \end{equation*}
  since it is a convex combination of elements in $\bK$.  
  The Result follows from Lemma~\ref{lem:implicitStep}.  
\end{proof}

We next establish conditions under which $\vect{\cM}_{\bK}^{(ij)}\in\cR$.  
\begin{lemma}
  Let $\{\beta_{k}\}_{k=1}^{d}$ be a set of strictly positive constants satisfying $\sum_{k=1}^{d}\beta_{k}=1$.  
  If for each $k\in\{1,\ldots,d\}$, 
  \begin{align}
    &\vect{\Gamma}^{k}\big[\vect{\cM}_{h}^{(j)};\hat{c}_{ij}\big] \label{eq:realizableGamma} \\
    &:=\f{1}{|K^{k}|}
    \Big[\,\int_{K^{k}}\vect{\cM}_{h}^{(j)}\,dx^{k}-\f{\hat{c}_{ij}\,\dt}{\beta_{k}}\big(\,\widehat{\bcF}^{k}(\vect{\cM}_{h}^{(j)})\big|_{\xH^{k}}-\widehat{\bcF}^{k}(\vect{\cM}_{h}^{(j)})\big|_{\xL^{k}}\,\big)\,\Big] \nonumber    
  \end{align}
  is realizable, then $\vect{\cM}_{\bK}^{(ij)}\in\cR$.  
  \label{lem:realizableMij}
\end{lemma}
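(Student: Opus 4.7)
The plan is to express $\vect{\cM}_{\bK}^{(ij)}$ as a convex combination of quantities already known to lie in $\cR$, and then invoke convexity of $\cR$. The key algebraic trick is to use the partition of unity $\sum_{k=1}^{d}\beta_{k}=1$ to distribute the single cell average $\vect{\cM}_{\bK}^{(j)}$ across the $d$ directional surface contributions that appear in $\langle\nabla\cdot\vect{\cF}(\vect{\cM}_{h}^{(j)})\rangle_{\bK}$, so that each directional piece can be re-grouped into a $\vect{\Gamma}^{k}$.

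First I would start from the definition
\begin{equation*}
  \vect{\cM}_{\bK}^{(ij)}
  = \vect{\cM}_{\bK}^{(j)} - \f{\hat{c}_{ij}\,\dt}{|\bK|}\sum_{k=1}^{d}\int_{\tilde{\bK}^{k}}\big(\,\widehat{\bcF}^{k}(\vect{\cM}_{h}^{(j)})\big|_{\xH^{k}}-\widehat{\bcF}^{k}(\vect{\cM}_{h}^{(j)})\big|_{\xL^{k}}\,\big)\,d\tilde{\vect{x}}^{k},
\end{equation*}
write $\vect{\cM}_{\bK}^{(j)} = \sum_{k}\beta_{k}\,\vect{\cM}_{\bK}^{(j)}$, and group the $k$th surface term with $\beta_{k}\vect{\cM}_{\bK}^{(j)}$. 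Next, applying Fubini's theorem together with $|\bK|=|K^{k}|\,|\tilde{\bK}^{k}|$ allows me to rewrite the cell average as an iterated average,
\begin{equation*}
  \vect{\cM}_{\bK}^{(j)}
  = \f{1}{|\tilde{\bK}^{k}|}\int_{\tilde{\bK}^{k}}\Big(\,\f{1}{|K^{k}|}\int_{K^{k}}\vect{\cM}_{h}^{(j)}\,dx^{k}\,\Big)\,d\tilde{\vect{x}}^{k},
\end{equation*}
which telescopes with the $k$th surface integral (also integrated over $\tilde{\bK}^{k}$) to yield exactly the integrand defining $\vect{\Gamma}^{k}$, so that
\begin{equation*}
  \vect{\cM}_{\bK}^{(ij)}
  = \sum_{k=1}^{d}\f{\beta_{k}}{|\tilde{\bK}^{k}|}\int_{\tilde{\bK}^{k}}\vect{\Gamma}^{k}\big[\vect{\cM}_{h}^{(j)};\hat{c}_{ij}\big]\,d\tilde{\vect{x}}^{k}.
\end{equation*}

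Finally, interpreting the hypothesis as pointwise realizability of $\vect{\Gamma}^{k}$ in $\tilde{\vect{x}}^{k}\in\tilde{\bK}^{k}$, convexity of $\cR$ implies that each inner average over $\tilde{\bK}^{k}$ lies in $\cR$. Since $\beta_{k}>0$ and $\sum_{k}\beta_{k}=1$, the outer sum is then a convex combination of elements of $\cR$, and a second application of convexity gives $\vect{\cM}_{\bK}^{(ij)}\in\cR$.

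The substantive work is bookkeeping rather than conceptual: one must carefully align Fubini with the DG surface-integral structure and track the factors $|K^{k}|$, $|\tilde{\bK}^{k}|$, $|\bK|$, and $\beta_{k}$ so the bracketed expression matches the definition of $\vect{\Gamma}^{k}$ exactly. The only nontrivial input beyond algebra is convexity of $\cR$, which was already established, so once the decomposition is assembled the conclusion follows immediately.
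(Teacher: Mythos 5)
Your proposal is correct and follows essentially the same route as the paper's proof: both express $\vect{\cM}_{\bK}^{(ij)}$ as the convex combination $\sum_{k=1}^{d}\beta_{k}\,|\tilde{\bK}^{k}|^{-1}\int_{\tilde{\bK}^{k}}\vect{\Gamma}^{k}\,d\tilde{\vect{x}}^{k}$ and conclude by convexity of $\cR$. The paper simply states this identity as ``easy to show,'' whereas you supply the partition-of-unity and Fubini bookkeeping explicitly; the content is the same.
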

\begin{proof}
  It is easy to show that $\vect{\cM}_{\bK}^{(ij)}$ can be expressed as the convex combination
  \begin{equation}
    \vect{\cM}_{\bK}^{(ij)}
    =\sum_{k=1}^{d}\beta_{k}\,\f{1}{|\tilde{\vect{\bK}}^{k}|}\int_{\tilde{\bK}^{k}}\vect{\Gamma}^{k}\big[\vect{\cM}_{h}^{(j)};\hat{c}_{ij}\big]\,d\tilde{\vect{x}}^{k}.  
    \label{eq:cellAverageInTermsOfGamma}
  \end{equation}
  The result follows immediately.  
\end{proof}
\begin{rem}
  If a quadrature rule $\tilde{\vect{Q}}^{k}\colon C^{0}(\tilde{\bK}^{k})\to\bbR$, with positive weights, and points defined by the set $\tilde{\vect{S}}^{k}$, is used to approximate the integral over $\tilde{\bK}^{k}$ in Eq.~\eqref{eq:cellAverageInTermsOfGamma}, then it is sufficient for Eq.~\eqref{eq:realizableGamma} to hold in the quadrature points $\tilde{\vect{S}}^{k}\subset\tilde{\bK}^{k}$.  
\end{rem}

Next, we establish conditions for which Eq.~\eqref{eq:realizableGamma} holds.  
To this end, let $\hat{Q}^{k}\colon C^{0}(K^{k})\to\bbR$ denote the $N$-point \emph{Gauss-Lobatto (GL)} quadrature rule on the interval $K^{k}=(\xL^{k},\xH^{k})$, with points
\begin{equation}
  \hat{S}^{k}=\left\{\xL^{k}=\hat{x}_{1}^{k},\ldots,\hat{x}_{N}^{k}=\xH^{k}\right\}, 
  \label{eq:quadraturePointsGL}
\end{equation}
and weights $\hat{w}_{q} \in (0,1]$, normalized so that $\sum_{q=1}^{N} \hat{w}_{q} = 1$.  
(The hat is used to denote the GL rule, which includes the endpoints of the interval $K^{k}$.)
This quadrature integrates polynomials in $x^{k} \in \bbR$ with degree $\le2N-3$ exactly.  
If $\vect{\cM}_{h}^{(j)}$ is represented by such polynomials, then
\begin{equation}
  \int_{K^{k}} \bcM_{h}^{(j)}(x^{k})\,dx^{k} = \hat{Q}^{k}[\bcM_{h}^{(j)}] \equiv
  |K^{k}| \sum_{q=1}^{N} \hat{w}_{q}\,\bcM_{h}^{(j)}(\hat{x}_{q}^{k}),
  \label{eq:quadratureRuleGL}
\end{equation}
where for simplicity of notation, we have suppressed the explicit dependence on $\tilde{\vect{x}}^{k}$ to denote $\bcM_{h}^{(j)}(\hat{x}_q^{k},\tilde{\vect{x}}^{k}) = \bcM_{h}^{(j)}(\hat{x}_{q}^{k})$.  
In each element, we also denote $\vect{\cM}_{h}^{(j)}(\hat{x}_{1}^{k})=\vect{\cM}_{h}^{(j)}(\xL^{k,+})$ and $\vect{\cM}_{h}^{(j)}(\hat{x}_{N}^{k})=\vect{\cM}_{h}^{(j)}(\xH^{k,-})$.  
Similarly, the solution on $\tilde{\vect{\bK}}^{k}$ to the immediate left of $\xL^{k}$ is denoted $\vect{\cM}_{h}^{(j)}(\xL^{k,-})$, and the solution on $\tilde{\vect{\bK}}^{k}$ to the immediate right of $\xH^{k}$ is denoted $\vect{\cM}_{h}^{(j)}(\xH^{k,+})$.  

Using Eq.~\eqref{eq:quadratureRuleGL}, $\vect{\Gamma}^{k}$ can be expressed as the convex combination
\begin{align}
  \vect{\Gamma}^{k}\big[\bcM_{h}^{(j)};\hat{c}_{ij}\big]
  &=\sum_{q=2}^{N-1}\hat{w}_{q}\,\vect{\cM}_{h}^{(j)}(\hat{x}_{q}^{k}) \nonumber \\
  &\hspace{-12pt}
  +\hat{w}_{1}\,\big[\,\vect{\cM}_{h}^{(j)}(\xL^{k,+})+\lambda_{ij}^{k}\,\mathscr{F}^{k}\big(\vect{\cM}_{h}^{(j)}(\xL^{k,-}),\vect{\cM}_{h}^{(j)}(\xL^{k,+})\big)\,\big] \nonumber \\
  &\hspace{-12pt}
  +\hat{w}_{N}\,\big[\,\vect{\cM}_{h}^{(j)}(\xH^{k,-})-\lambda_{ij}^{k}\,\mathscr{F}^{k}\big(\vect{\cM}_{h}^{(j)}(\xH^{k,-}),\vect{\cM}_{h}^{(j)}(\xH^{k,+})\big)\,\big],
  \label{eq:realizableGammaConvex}
\end{align}
where $\lambda_{ij}^{k}=\hat{c}_{ij}\,\dt/(\beta_{k}\,\hat{w}_{N}\,|K^{k}|)$.  
(With the GL quadrature rule in Eq.~\eqref{eq:quadratureRuleGL}, $\hat{w}_{1}=\hat{w}_{N}$).  
The following Lemma establishes sufficient conditions for realizability of $\vect{\Gamma}^{k}\big[\bcM_{h}^{(j)};\hat{c}_{ij}\big]$, and hence $\vect{\cM}_{\bK}^{(ij)}$.  
\begin{lemma}
  Assume that $\vect{\cM}_{h}^{(j)}(\hat{x}_{q}^{k})\in\cR$ for all $q=1,\ldots,N$ and all $\bK\in\mathscr{T}$.  
  Let the time step $\dt$ be chosen so that $\lambda_{ij}^{k}\le1$.  
  Let the numerical flux be given by the Lax-Friedrichs flux in Eq.~\eqref{eq:fluxFunctionLF} with $\alpha^{k}=1$.  
  Then $\vect{\Gamma}^{k}\big[\bcM_{h}^{(j)};\hat{c}_{ij}\big]\in\cR$.  
  \label{lem:realizableGamma}
\end{lemma}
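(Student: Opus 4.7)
The plan is to recast $\vect{\Gamma}^{k}$ from Eq.~\eqref{eq:realizableGammaConvex} as a convex combination of elements known to lie in $\cR$, and then invoke convexity of $\cR$. The main obstacle is that neither boundary bracket in Eq.~\eqref{eq:realizableGammaConvex} is realizable on its own (a constant profile with $\cJ$ near $1$ already produces a bracket with $\gamma<0$); the two boundary brackets must therefore be handled jointly. The linchpin is the algebraic identity, valid for the Lax--Friedrichs flux of Eq.~\eqref{eq:fluxFunctionLF} with $\alpha^{k}=1$ when $\Phi^{\pm}$ are evaluated with $\widehat{\vect{e}}=\vect{e}_{k}$:
\begin{equation*}
  \mathscr{F}^{k}(\vect{\cM}_{a},\vect{\cM}_{b})=\Phi^{+}(\vect{\cM}_{a},\vect{\cK}_{a})-\Phi^{-}(\vect{\cM}_{b},\vect{\cK}_{b}).
\end{equation*}

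Plugging this into both boundary brackets of Eq.~\eqref{eq:realizableGammaConvex} and using $\vect{\cM}=\Phi^{+}+\Phi^{-}$ on the ``self'' endpoint values $\vect{\cM}_{h}^{(j)}(\xL^{k,+})$ and $\vect{\cM}_{h}^{(j)}(\xH^{k,-})$, a short manipulation yields
\begin{equation*}
  \text{LB}+\text{RB}=(1-\lambda_{ij}^{k})\big(\vect{\cM}_{h}^{(j)}(\xL^{k,+})+\vect{\cM}_{h}^{(j)}(\xH^{k,-})\big)+\lambda_{ij}^{k}\,\Sigma,
\end{equation*}
with $\Sigma=\Phi^{+}(\vect{\cM}_{h}^{(j)}(\xL^{k,-}))+\Phi^{+}(\vect{\cM}_{h}^{(j)}(\xL^{k,+}))+\Phi^{-}(\vect{\cM}_{h}^{(j)}(\xH^{k,-}))+\Phi^{-}(\vect{\cM}_{h}^{(j)}(\xH^{k,+}))$ (each $\Phi^{\pm}$ carrying its associated $\vect{\cK}$). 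The pivotal step is the ``diagonal'' regrouping
\begin{equation*}
  \Sigma=\big[\Phi^{+}(\vect{\cM}_{h}^{(j)}(\xL^{k,-}))+\Phi^{-}(\vect{\cM}_{h}^{(j)}(\xH^{k,-}))\big]+\big[\Phi^{+}(\vect{\cM}_{h}^{(j)}(\xL^{k,+}))+\Phi^{-}(\vect{\cM}_{h}^{(j)}(\xH^{k,+}))\big],
\end{equation*}
which exhibits each summand as a pair $\Phi^{+}(\vect{\cM}_{a})+\Phi^{-}(\vect{\cM}_{b})$ with $\vect{\cM}_{a},\vect{\cM}_{b}\in\cR$ (by the GL-point hypothesis), so Lemma~\ref{lem:explicitStep} puts each bracket in $\cR$.

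Denoting by $\bar{M}_{1}$ the average of the two ``self'' endpoint values (in $\cR$ by convexity and hypothesis) and by $\bar{M}_{2}=\tfrac{1}{2}\Sigma$ the average of the two Lemma~\ref{lem:explicitStep} pairs (in $\cR$ by convexity), and using $\hat{w}_{1}=\hat{w}_{N}$, the combined boundary contribution to $\vect{\Gamma}^{k}$ becomes
\begin{equation*}
  \hat{w}_{1}\,\text{LB}+\hat{w}_{N}\,\text{RB}=2\hat{w}_{N}\big[(1-\lambda_{ij}^{k})\bar{M}_{1}+\lambda_{ij}^{k}\bar{M}_{2}\big].
\end{equation*}
The hypothesis $\lambda_{ij}^{k}\in[0,1]$ makes the inner bracket a convex combination of two elements of $\cR$, hence in $\cR$; this is precisely where the CFL assumption enters. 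Finally, using $\sum_{q=2}^{N-1}\hat{w}_{q}=1-2\hat{w}_{N}\ge 0$, Eq.~\eqref{eq:realizableGammaConvex} displays $\vect{\Gamma}^{k}$ as a convex combination of the interior GL-point values (each in $\cR$ by assumption) and the boundary quantity just shown to lie in $\cR$, whence $\vect{\Gamma}^{k}\in\cR$ by convexity. The degenerate case $N=2$ is consistent, with the interior sum empty and weight zero.
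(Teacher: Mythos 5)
Your proposal is correct and follows essentially the same route as the paper: the identity $\mathscr{F}^{k}(\vect{\cM}_{a},\vect{\cM}_{b})=\Phi^{+}(\vect{\cM}_{a})-\Phi^{-}(\vect{\cM}_{b})$ for the Lax--Friedrichs flux with $\alpha^{k}=1$, the joint treatment of the two boundary brackets, and the regrouping into the average of the two interior traces plus the two cross-interface pairs $\Phi^{+}+\Phi^{-}$ handled by Lemma~\ref{lem:explicitStep} are exactly the paper's decomposition $\Phi=(1-\lambda_{ij}^{k})\Phi_{0}+\tfrac{1}{2}\lambda_{ij}^{k}\Phi_{1}+\tfrac{1}{2}\lambda_{ij}^{k}\Phi_{2}$. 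The final convex assembly using $\hat{w}_{1}=\hat{w}_{N}$ and $\lambda_{ij}^{k}\le1$ likewise matches the paper's argument.
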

\begin{proof}
  In Eq.~\eqref{eq:realizableGammaConvex}, $\vect{\Gamma}^{k}\big[\bcM_{h}^{(j)};\hat{c}_{ij}\big]$ is expressed as a convex combination.  
  By assumption, $\vect{\cM}_{h}^{(j)}(\hat{x}_{q}^{k})\in\cR$ ($q=2,\ldots,N-1$).  
  Thus it remains to show that
  \begin{align*}
    &\Phi\big[\,\vect{\cM}_{h}^{(j)}(\xL^{k,-}),\vect{\cM}_{h}^{(j)}(\xL^{k,+}),\vect{\cM}_{h}^{(j)}(\xH^{k,-}),\vect{\cM}_{h}^{(j)}(\xH^{k,+});\hat{c}_{ij}\,\big] \\
    &\hspace{6pt}
    :=\f{1}{2}\big[\,\vect{\cM}_{h}^{(j)}(\xL^{k,+})+\lambda_{ij}^{k}\,\mathscr{F}^{k}\big(\vect{\cM}_{h}^{(j)}(\xL^{k,-}),\vect{\cM}_{h}^{(j)}(\xL^{k,+})\big)\,\big] \\
    &\hspace{18pt}
    +\f{1}{2}\big[\,\vect{\cM}_{h}^{(j)}(\xH^{k,-})-\lambda_{ij}^{k}\,\mathscr{F}^{k}\big(\vect{\cM}_{h}^{(j)}(\xH^{k,-}),\vect{\cM}_{h}^{(j)}(\xH^{k,+})\big)\,\big]
  \end{align*}
  is realizable.  
  Using the Lax-Friedrichs flux in Eq.~\eqref{eq:fluxFunctionLF}, with $\alpha^{k}=1$ ($k\in\{1,\ldots,d\}$), it is straightforward to show that
  \begin{equation}
    \Phi=(1-\lambda_{ij}^{k})\,\Phi_{0} + \f{1}{2}\,\lambda_{ij}^{k}\,\Phi_{1} + \f{1}{2}\,\lambda_{ij}^{k}\,\Phi_{2},
    \label{eq:phiConvex}
  \end{equation}
  where
  \begin{align*}
    \Phi_{0} &= \f{1}{2}\,\big(\,\vect{\cM}_{h}^{(j)}(\xL^{k,+})+\vect{\cM}_{h}^{(j)}(\xH^{k,-})\,\big), \\
    \Phi_{1} &= \Phi^{+}\big[\vect{\cM}_{h}^{(j)}(\xL^{k,-})\big]+\Phi^{-}\big[\vect{\cM}_{h}^{(j)}(\xH^{k,-})\big], \\
    \Phi_{2} &= \Phi^{+}\big[\vect{\cM}_{h}^{(j)}(\xL^{k,+})\big]+\Phi^{-}\big[\vect{\cM}_{h}^{(j)}(\xH^{k,+})\big],
  \end{align*}
  and $\Phi^{\pm}(\vect{\cM})=\f{1}{2}\,\big(\vect{\cM}\pm\vect{e}_{k}\cdot\vect{\cF}^{k}(\vect{\cM})\big)$; cf. Lemma~\ref{lem:explicitStep}.  
  Since $\lambda_{ij}^{k}\le1$, $\Phi$ is expressed as a convex combination of $\Phi_{0}$, $\Phi_{1}$, and $\Phi_{2}$.  
  By assumption, $\vect{\cM}_{h}^{(j)}(\xL^{k,-})$, $\vect{\cM}_{h}^{(j)}(\xL^{k,+})$, $\vect{\cM}_{h}^{(j)}(\xH^{k,-})$, $\vect{\cM}_{h}^{(j)}(\xH^{k,+})\in\cR$, which immediately implies realizability of $\Phi_{0}$.  
  Realizability of $\Phi_{1}$ and $\Phi_{2}$ follows by invoking Lemma~\ref{lem:explicitStep}.  
  This completes the proof.  
\end{proof}

\begin{rem}
  For the IMEX scheme to be realizability-preserving it is sufficient to set the time step such that
  \begin{equation}
    \dt \le c_{\Sch}\,\min_{k} \Big(\, \beta_{k}\,\hat{w}_{N}\,|K^{k}| \,\Big),
    \label{eq:imexTimeStep}
  \end{equation}
  where $c_{\Sch}$ is defined in \eqref{eq:imexCFL}.  
\end{rem}

\begin{rem}
  Lemma~\eqref{lem:realizableGamma} is proven without specification of $\tilde{\vect{x}}^{k}$.  
  In the numerical scheme, we need $\vect{\Gamma}^{k}\big[\bcM_{h}^{(j)};\hat{c}_{ij}\big]$ to be realizable in the quadrature set $\tilde{\vect{S}}^{k}$ used to approximate the integral over $\tilde{\bK}^{k}$ in Eq.~\eqref{eq:realizableGamma}.  
  (Typically, $\tilde{\vect{S}}^{k}$ is a tensor product of Gauss-Legendre quadrature points.)  
  We thus require $\vect{\cM}_{h}^{(j)}$ to be realizable in the quadrature set $\hat{\vect{S}}^{k}=\tilde{\vect{S}}^{k}\otimes\hat{S}^{k}\subset\bK$, where $\tilde{\vect{S}}^{k}\subset\tilde{\bK}^{k}$, and $\hat{S}^{k}\subset K^{k}$ are the GL quadrature points.  
\end{rem}

\begin{rem}
  Lemmas~\ref{lem:imexStagesCellAverage}, \ref{lem:realizableMij}, and \ref{lem:realizableGamma} establish sufficient conditions for realizability of the cell average of each IMEX stage $\vect{\cM}_{\bK}^{(i)}$, for $i=1,\ldots,s$.  
  Within each IMEX stage, the realizability-enforcing limiter discussed in Section~\ref{sec:limiter} is invoked to ensure that the numerical solution $\vect{\cM}_{h}^{(i)}(\vect{x})$ is realizable in the quadrature points $\hat{\vect{S}}^{k}$, for $k=1,\ldots,d$.  
\end{rem}

For GSA IMEX schemes, $\tilde{\vect{\cM}}_{\bK}^{n+1}=\vect{\cM}_{\bK}^{(s)}$.  
For IMEX schemes incorporating the correction step in Eq.~\eqref{eq:imexCorrection}, the cell average at $t^{n+1}$ is obtained by solving
\begin{equation}
  \vect{\cM}_{\bK}^{n+1}=\tilde{\vect{\cM}}_{\bK}^{n+1}+\alpha\,\f{\dt^{2}}{\tau^{2}}\,\vect{\cD}\,\big(\,\vect{\eta}-\vect{\cD}\,\vect{\cM}_{\bK}^{n+1}\,\big),
  \label{eq:imexCorrectionCellAverage}
\end{equation}
where $\vect{\eta}$ and $\vect{\cD}$ are defined in Eq.~\eqref{eq:collisionTermMoments}.  
For these IMEX schemes, realizability of $\vect{\cM}_{\bK}^{n+1}$ is established by the following lemma.  
\begin{lemma}
  Suppose that $\tilde{\vect{\cM}}_{\bK}^{n+1}\in\cR$ and $\vect{\cM}_{\bK}^{n+1}$ is obtained by solving Eq.~\eqref{eq:imexCorrectionCellAverage}.  
  Then $\vect{\cM}_{\bK}^{n+1}\in\cR$.  
  \label{lem:imexCorrectionCellAverage}
\end{lemma}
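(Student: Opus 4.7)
The plan is to recognize that Eq.~\eqref{eq:imexCorrectionCellAverage} is precisely an instance of the equation appearing in Lemma~\ref{lem:correctionStep}, so the lemma follows as a direct corollary with essentially no additional work. Substituting $\vect{\cC}(\vect{\cM}) = \vect{\eta} - \vect{\cD}\,\vect{\cM}$ into the right-hand side of Eq.~\eqref{eq:imexCorrectionCellAverage} gives
\begin{equation*}
\vect{\cM}_{\bK}^{n+1} = \tilde{\vect{\cM}}_{\bK}^{n+1} + \alpha'\,\vect{\cD}\,\vect{\cC}(\vect{\cM}_{\bK}^{n+1}),
\quad \text{where} \quad \alpha' := \alpha\,\dt^{2}/\tau^{2}.
\end{equation*}
This matches the hypothesis of Lemma~\ref{lem:correctionStep} with $\vect{\cM}_{a} \leftarrow \tilde{\vect{\cM}}_{\bK}^{n+1}$, $\vect{\cM}_{b} \leftarrow \vect{\cM}_{\bK}^{n+1}$, and the parameter $\alpha'$ playing the role of the generic positive constant $\alpha$ in that lemma.

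The two ingredients needed to apply Lemma~\ref{lem:correctionStep} are that $\vect{\cM}_{a}\in\cR$ and $\alpha'>0$. The first holds by the standing hypothesis $\tilde{\vect{\cM}}_{\bK}^{n+1}\in\cR$. The second holds because $\alpha>0$ is built into the IMEX correction step (as stated following Eq.~\eqref{eq:imexCorrection}) and because $\dt,\tau>0$. Invoking Lemma~\ref{lem:correctionStep} then immediately yields $\vect{\cM}_{\bK}^{n+1}\in\cR$.

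Since the proof is essentially a one-line reduction, there is no substantial obstacle; the only thing to be careful about is to verify that the correction step really does take the form $\vect{\cM}_{a}+\alpha'\,\vect{\cD}\,\vect{\cC}(\vect{\cM}_{b})$ and not some other algebraic rearrangement (for instance, one lacking the outer $\vect{\cD}$, which would require Lemma~\ref{lem:implicitStep} instead). A brief sentence noting this identification, followed by an explicit appeal to Lemma~\ref{lem:correctionStep}, should suffice for the full proof.
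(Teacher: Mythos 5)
Your proposal is correct and matches the paper's proof, which likewise deduces the result as an immediate consequence of Lemma~\ref{lem:correctionStep}; you simply spell out the identification $\alpha' = \alpha\,\dt^{2}/\tau^{2} > 0$ that the paper leaves implicit.
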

\begin{proof}
  The result follows immediately from Lemma~\ref{lem:correctionStep}.  
\end{proof}

\begin{rem}
  For IMEX scheme PD-ARS (\ref{app:PD-ARS}), which does not invoke the correction step (i.e., $\alpha=0$), $\vect{\cM}_{\bK}^{n+1}=\vect{\cM}_{\bK}^{(s)}$.  
\end{rem}

We are now ready to state the main result of this section.  
\begin{theorem}
  Consider the stages of the IMEX scheme in Eq.~\eqref{imexStages} applied to the DG discretization of the two-moment model in Eq.~\eqref{eq:semidiscreteDG}.  
  Suppose that
  \begin{itemize}
    \item[1.] For all $k\in\{1,\ldots,d\}$, the Gauss-Lobatto quadrature rule $\hat{Q}^{k}$ is chosen such that Eq.~\eqref{eq:quadratureRuleGL} holds.  
    \item[2.] For all $k\in\{1,\ldots,d\}$, $\vect{x}\in\hat{\vect{S}}^{k}$, and $0\le j \le i-1<s$, 
    \begin{equation*}
      \vect{\cM}_{h}^{(j)}(\vect{x})\in\cR.
    \end{equation*}
    \item[3.] The time step $\dt$ is chosen such that Eq.~\eqref{eq:imexTimeStep} holds.  
  \end{itemize}
  Then $\vect{\cM}_{\bK}^{(i)}\in\cR$.  
  \label{the:realizableDGIMEX}
\end{theorem}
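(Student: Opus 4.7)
The plan is to prove Theorem~\ref{the:realizableDGIMEX} by a straightforward induction on the stage index $i$, chaining together Lemmas~\ref{lem:realizableGamma}, \ref{lem:realizableMij}, and \ref{lem:imexStagesCellAverage} that were established earlier in the section. The base case $i=0$ is assumption~2 (since $\vect{\cM}_{\bK}^{(0)}=\vect{\cM}_{\bK}^{n}$ is obtained by averaging the realizable $\vect{\cM}_{h}^{n}$ over $\bK$ using the GL rule, which expresses $\vect{\cM}_{\bK}^{(0)}$ as a convex combination of realizable point values). For the inductive step at stage $i$, assume that $\vect{\cM}_{h}^{(j)}(\vect{x})\in\cR$ for every $\vect{x}\in\hat{\vect{S}}^{k}$ and every $j\le i-1$. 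I need to verify $\vect{\cM}_{\bK}^{(i)}\in\cR$.

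First I would fix an arbitrary pair $(i,j)$ with $j\le i-1$ and an arbitrary coordinate direction $k\in\{1,\ldots,d\}$, and check that the time step condition in Eq.~\eqref{eq:imexTimeStep} translates into the hypothesis $\lambda_{ij}^{k}\le 1$ of Lemma~\ref{lem:realizableGamma}. By the definition of $c_{\Sch}$ in Eq.~\eqref{eq:imexCFL}, the bound $\dt\le c_{\Sch}\,\beta_{k}\,\hat{w}_{N}\,|K^{k}|$ yields $\hat{c}_{ij}\,\dt\le \beta_{k}\,\hat{w}_{N}\,|K^{k}|$, which is exactly $\lambda_{ij}^{k}\le 1$. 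Together with the inductive hypothesis of pointwise realizability at the GL nodes (along every line through $\tilde{\vect{x}}^{k}\in\tilde{\vect{S}}^{k}$), Lemma~\ref{lem:realizableGamma} gives $\vect{\Gamma}^{k}\big[\vect{\cM}_{h}^{(j)};\hat{c}_{ij}\big](\tilde{\vect{x}}^{k})\in\cR$ for every quadrature point $\tilde{\vect{x}}^{k}\in\tilde{\vect{S}}^{k}$.

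Next I would invoke Lemma~\ref{lem:realizableMij} together with the remark that follows it: approximating the surface integrals over $\tilde{\bK}^{k}$ by the positive-weight quadrature $\tilde{\vect{Q}}^{k}$ realizes $\vect{\cM}_{\bK}^{(ij)}$ as a convex combination, over $k$ and over $\tilde{\vect{S}}^{k}$, of the moments $\vect{\Gamma}^{k}(\tilde{\vect{x}}^{k})$ just shown to be realizable. Since $\cR$ is convex, this gives $\vect{\cM}_{\bK}^{(ij)}\in\cR$ for all $j\le i-1$. Finally, Lemma~\ref{lem:imexStagesCellAverage} closes the induction: the explicit part $\sum_{j=0}^{i-1}c_{ij}\,\vect{\cM}_{\bK}^{(ij)}$ is a convex combination of realizable moments (using $c_{ij}\ge 0$ and $\sum_{j}c_{ij}=1$), and the implicit solve in Eq.~\eqref{eq:imexStagesCellAverage} is of the form covered by Lemma~\ref{lem:implicitStep} with $\alpha=a_{ii}\,\dt/\tau>0$, so $\vect{\cM}_{\bK}^{(i)}\in\cR$.

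There is no essential difficulty here: the work has already been done in the preceding lemmas, and the only thing to check carefully is that the global CFL bound Eq.~\eqref{eq:imexTimeStep} really does produce $\lambda_{ij}^{k}\le 1$ simultaneously for every $(i,j,k)$ appearing in the scheme, and that the free parameters $\{\beta_{k}\}$ (positive, summing to one) can be chosen consistently; a symmetric choice such as $\beta_{k}=1/d$ suffices, and the $\min_{k}$ in Eq.~\eqref{eq:imexTimeStep} absorbs the direction dependence. The mildly delicate bookkeeping point will be that the induction requires pointwise realizability of $\vect{\cM}_{h}^{(j)}$ on the full tensor set $\hat{\vect{S}}^{k}=\tilde{\vect{S}}^{k}\otimes\hat{S}^{k}$ at every prior stage, which the theorem's hypothesis 2 supplies directly; ensuring that this hypothesis can actually be maintained from stage to stage is the job of the limiter in Section~\ref{sec:limiter}, not of this theorem.
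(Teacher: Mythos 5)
Your proposal is correct and follows essentially the same route as the paper: verify that the time step condition \eqref{eq:imexTimeStep} yields $\lambda_{ij}^{k}\le1$, then chain Lemma~\ref{lem:realizableGamma} (realizability of $\vect{\Gamma}^{k}$ at the quadrature points), Lemma~\ref{lem:realizableMij} (convex combination over $k$ and $\tilde{\vect{S}}^{k}$), and Lemma~\ref{lem:imexStagesCellAverage} (convex combination over $j$ plus the implicit solve via Lemma~\ref{lem:implicitStep}). The induction framing is superfluous since hypothesis~2 already supplies pointwise realizability at all prior stages --- as you correctly note, propagating that hypothesis is the limiter's job in Theorem~\ref{the:realizableDGIMEX2} --- but this does not affect the validity of the argument.
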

\begin{proof}
  The cell average $\vect{\cM}_{\bK}^{(i)}$ is obtained by solving (cf. Eq.~\eqref{eq:imexStagesCellAverage})
  \begin{equation*}
    \vect{\cM}_{\bK}^{(i)}
    =\sum_{j=0}^{i-1}c_{ij}\,\vect{\cM}_{\bK}^{(ij)}
    +a_{ii}\,\f{\dt}{\tau}\big(\,\vect{\eta}-\vect{\cD}\,\vect{\cM}_{\bK}^{(i)}\,\big),
  \end{equation*}
  where, after invoking the quadrature rule $\tilde{\vect{Q}}^{k}$ to integrate over $\tilde{\bK}^{k}$ in Eq.~\eqref{eq:cellAverageInTermsOfGamma},
  \begin{equation*}
    \vect{\cM}_{\bK}^{(ij)}
    =\sum_{k=1}^{d}\f{\beta_{k}}{|\tilde{\bK}^{k}|}\tilde{\vect{Q}}^{k}\big(\vect{\Gamma}^{k}\big[\vect{\cM}_{h}^{(j)};\hat{c}_{ij}\big]\big).  
  \end{equation*}
  Since $\vect{\cM}_{h}^{(j)}\in\cR$ on $\hat{\vect{S}}^{k}$, for each $k\in\{1,\ldots,d\}$ and $j\in\{0,\ldots,i-1\}$, it follows from Lemma~\ref{lem:realizableGamma} that $\vect{\Gamma}^{k}\big[\vect{\cM}_{h}^{(j)};\hat{c}_{ij}\big]\in\cR$ on $\tilde{\vect{S}}^{k}$.  
  Then, realizability of $\vect{\cM}_{\bK}^{(ij)}$ follows from Lemma~\ref{lem:realizableMij}, after which $\vect{\cM}_{\bK}^{(i)}\in\cR$ follows by invoking Lemma~\ref{lem:imexStagesCellAverage}.  
\end{proof}
\section{Realizability-Enforcing Limiter}
\label{sec:limiter}

Condition~2 of Theorem~\ref{the:realizableDGIMEX} requires that the polynomial approximation $\vect{\cM}_{h}=\vect{\cM}_{h}^{(j)}$ ($j\in\{0,\ldots,i-1\}$) is realizable in every point in the quadrature set $S=\cup_{k=1}^{d}\hat{\vect{S}}^{k}$.  
Following Zhang \& Shu \cite{zhangShu_2010a} we use the limiter in \cite{liuOsher_1996} to enforce the bounds on the zeroth moment $\cJ$.  
We replace the polynomial $\cJ_{h}(\vect{x})$ with the limited polynomial
\begin{equation}
  \tilde{\cJ}_{h}(\vect{x})
  =\vartheta_{1}\,\cJ_{h}(\vect{x})+(1-\vartheta_{1})\,\cJ_{\bK},
  \label{eq:limitDensity}
\end{equation}
where the limiter parameter $\vartheta_{1}$ is given by
\begin{equation}
  \vartheta_{1}
  =\min\Big\{\,\Big|\f{M-\cJ_{\bK}}{M_{S}-\cJ_{\bK}}\Big|,\Big|\f{m-\cJ_{\bK}}{m_{S}-\cJ_{\bK}}\Big|,1\,\Big\},
\end{equation}
with $m=0$ and $M=1$, and
\begin{equation}
  M_{S}=\max_{\vect{x}\in S}\cJ_{h}(\vect{x})
  \quad\text{and}\quad
  m_{S}=\min_{\vect{x}\in S}\cJ_{h}(\vect{x}).  
\end{equation}

In the next step, we ensure realizability of the moments by following the framework of \cite{zhangShu_2010b}, developed to ensure positivity of the pressure when solving the Euler equations of gas dynamics.  
We let $\widetilde{\vect{\cM}}_{h}=\big(\tilde{\cJ}_{h},\vect{\cH}_{h}\big)^{T}$.  
Then, if $\widetilde{\bcM}_{h}$ lies outside $\cR$ for any quadrature point $\vect{x}_{q}\in S$, i.e., $\gamma(\widetilde{\bcM}_{h})<0$, there exists an intersection point of the straight line, $\vect{s}_{q}(\psi)$, connecting $\vect{\cM}_{\bK}\in\cR$ and $\widetilde{\vect{\cM}}_{h}$ evaluated in the troubled quadrature point $\vect{x}_{q}$, denoted $\widetilde{\vect{\cM}}_{q}$, and the boundary of $\cR$.  
This line is given by the convex combination 
\begin{equation}
  \vect{s}_{q}(\psi)=\psi\,\widetilde{\vect{\cM}}_{q}+(1-\psi)\,\bcM_{\bK},
\end{equation}
where $\psi\in[0,1]$, and the intersection point $\psi_{q}$ is obtained by solving $\gamma(\bs_{q}(\psi))=0$ for $\psi$, using the bisection algorithm\footnote{In practice, $\psi$ needs not be accurate to many significant digits, and the bisection algorithm can be terminated after a few iterations.}.  
We then replace the polynomial representation $\widetilde{\vect{\cM}}_{h}\to\widehat{\vect{\cM}}_{h}$, where
\begin{equation}
  \widehat{\vect{\cM}}_{h}(\vect{x})=\vartheta_{2}\,\widetilde{\vect{\cM}}_{h}(\vect{x})+(1-\vartheta_{2})\,\vect{\cM}_{\bK},
  \label{eq:limitMoments}
\end{equation}
and $\vartheta_{2}=\min_{q}\psi_{q}$ is the smallest $\psi$ obtained in the element by considering all the troubled quadrature points.  
This limiter is conservative in the sense that it preserves the cell-average $\widehat{\vect{\cM}}_{\bK}=\widetilde{\vect{\cM}}_{\bK}=\vect{\cM}_{\bK}$.  

The realizability-preserving property of the DG-IMEX scheme results from the following theorem.
\begin{theorem}
  Consider the IMEX scheme in Eqs.~\eqref{imexStages}-\eqref{eq:imexCorrection} applied to the DG discretization of the two-moment model in Eq.~\eqref{eq:semidiscreteDG}.  
  Suppose that
  \begin{itemize}
    \item[1.] The conditions of Theorem~\ref{the:realizableDGIMEX} hold.  
    \item[2.] With $\vect{\cM}_{\bK}^{(i)}\in\cR$, the limiter described above is invoked to enforce 
    \begin{equation*}
      \vect{\cM}_{h}^{(i)}(\vect{x})\in\cR ~ \text{for all} ~ \vect{x} \in S.  
    \end{equation*}
    \item[3.] The IMEX scheme is GSA.  
  \end{itemize}
  Then $\vect{\bcM}_{\bK}^{n+1}\in\cR$.  
  \label{the:realizableDGIMEX2}
\end{theorem}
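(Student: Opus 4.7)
The proof is essentially an inductive assembly of the pieces already established: Theorem~\ref{the:realizableDGIMEX} handles the propagation of realizability from stage to stage for cell averages, the limiter bridges the gap between realizable cell averages and pointwise realizability required to feed the next stage, and Lemma~\ref{lem:imexCorrectionCellAverage} handles the optional correction step. The main task is to stitch these together cleanly by induction on the stage index.

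The plan is as follows. First, I would establish the base case: at stage $j=0$, we have $\vect{\cM}_{h}^{(0)}=\vect{\cM}_{h}^{n}$, which is pointwise realizable on $S$ by the assumed realizability of the previous time step's limited solution (or of the initial condition). Next, I would set up the inductive hypothesis: suppose that for all $0\le j\le i-1$ and all $\vect{x}\in S$ we have $\vect{\cM}_{h}^{(j)}(\vect{x})\in\cR$. Then condition~2 of Theorem~\ref{the:realizableDGIMEX} is satisfied at stage $i$, and combined with the CFL-type bound on $\dt$ (condition~3), Theorem~\ref{the:realizableDGIMEX} yields $\vect{\cM}_{\bK}^{(i)}\in\cR$.

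With the stage cell average in $\cR$, I invoke hypothesis~2 of the present theorem: the realizability-enforcing limiter of Section~\ref{sec:limiter} is applied to the polynomial $\vect{\cM}_{h}^{(i)}$, first bringing $\cJ_{h}^{(i)}$ into $(0,1)$ via \eqref{eq:limitDensity}, then pulling the full moment vector towards the (realizable) cell average via \eqref{eq:limitMoments}. Since both limiting steps are convex combinations with a realizable element, and are cell-average preserving, the limited polynomial satisfies $\vect{\cM}_{h}^{(i)}(\vect{x})\in\cR$ for every $\vect{x}\in S$ without altering $\vect{\cM}_{\bK}^{(i)}$. This closes the induction, so $\vect{\cM}_{h}^{(j)}(\vect{x})\in\cR$ on $S$ for all $j=0,\ldots,s$.

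Finally, I would use the GSA property (hypothesis~3) to identify $\tilde{\vect{\cM}}_{\bK}^{n+1}=\vect{\cM}_{\bK}^{(s)}\in\cR$, bypassing the assembly step \eqref{imexIntermediate}. For IMEX schemes that include the correction step \eqref{eq:imexCorrection} with $\alpha>0$, the cell-average correction \eqref{eq:imexCorrectionCellAverage} takes $\tilde{\vect{\cM}}_{\bK}^{n+1}\in\cR$ to $\vect{\cM}_{\bK}^{n+1}\in\cR$ by Lemma~\ref{lem:imexCorrectionCellAverage}; for schemes without the correction step (e.g.\ PD-ARS with $\alpha=0$), we simply have $\vect{\cM}_{\bK}^{n+1}=\vect{\cM}_{\bK}^{(s)}\in\cR$. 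There is no real obstacle here beyond bookkeeping; the only subtle point worth flagging explicitly is that the limiter must not spoil the cell average, which is why the convex-combination form of \eqref{eq:limitDensity} and \eqref{eq:limitMoments} (with $\vect{\cM}_{\bK}$ itself as the anchor) is essential for the induction to propagate the hypothesis of Theorem~\ref{the:realizableDGIMEX} unchanged from one stage to the next.
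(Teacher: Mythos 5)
Your proposal is correct and follows essentially the same route as the paper's proof: a stage-by-stage induction in which Theorem~\ref{the:realizableDGIMEX} gives realizability of each stage's cell average, the limiter restores pointwise realizability on $S$ to feed the next stage, the GSA property identifies $\tilde{\vect{\cM}}_{\bK}^{n+1}=\vect{\cM}_{\bK}^{(s)}$, and Lemma~\ref{lem:imexCorrectionCellAverage} handles the correction step. Your version merely makes the paper's ``repeated application of these steps'' explicit as an induction and adds the (correct) observation that the limiter's cell-average preservation is what lets the hypothesis propagate.
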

\begin{proof}
  By Theorem~\ref{the:realizableDGIMEX} (with $i=1$), we have $\vect{\cM}_{\bK}^{(1)}\in\cR$.  
  Application of the realizability-enforcing limiter gives $\vect{\cM}_{h}^{(1)}(\vect{x})\in\cR$ for all $\vect{x} \in S$.  
  Repeated application of these steps give $\vect{\cM}_{h}^{(i)}(\vect{x})\in\cR$ for all $\vect{x} \in S$ and $i\in\{1,\ldots,s\}$.  
  Since the IMEX scheme is GSA, $\tilde{\vect{\cM}}_{\bK}^{n+1}\in\cR$.  
  Finally, $\vect{\bcM}_{\bK}^{n+1}\in\cR$ follows from Lemma~\ref{lem:imexCorrectionCellAverage}.  
\end{proof}
\section{Numerical Tests}
\label{sec:numerical}

In this section we present numerical results obtained with the DG-IMEX scheme developed in this paper.  
The first set of tests (Section~\ref{sec:smoothProblems}) are included to compare the time integration schemes in various regimes.  
We are not concerned with moment realizability in Section~\ref{sec:smoothProblems}, and we do not apply the realizability-enforcing limiter in these tests.  
The tests in Sections~\ref{sec:packedBeam} and \ref{sec:fermionImplosion} are designed specifically to demonstrate the robustness of the scheme to dynamics near the boundary of the realizable set $\cR$.  
The test in Section~\ref{sec:homogeneousSphere} (Homogeneous Sphere) is of astrophysical interest.  
Here we consider moment realizability and compare results obtained with various moment closures.  

\subsection{Problems with Known Smooth Solutions}
\label{sec:smoothProblems}

To compare the accuracy of the IMEX schemes, we present results from smooth problems in streaming, absorption, and scattering dominated regimes in one spatial dimension.  
For all tests in this subsection, we use third order accurate spatial discretization (polynomials of degree $k=2$) and we employ the maximum entropy closure in the low occupancy limit (i.e., the Minerbo closure).  
We compare results obtained using IMEX schemes proposed here (PA2+ and PD-ARS) with IMEX schemes from Hu et al. \cite{hu_etal_2018} (PA2), McClarren et al. \cite{mcclarren_etal_2008} (PC2), Pareschi \& Russo \cite{pareschiRusso_2005} (SSP2332), and Cavaglieri \& Bewley \cite{cavaglieriBewley2015} (RKCB2).  
In the streaming test, we also include results obtained with second-order and third-order accurate explicit strong stability-preserving Runge-Kutta methods \cite{gottlieb_etal_2001} (SSPRK2 and SSPRK3, respectively).  
See \ref{app:butcherTables} for further details.  
The time step is set to $\dt=0.1\times\dx$.  

When comparing the numerical results to analytic solutions, errors are computed in the $L^{1}$-error norm.  
We compare results either in the absolute error ($E_{\mbox{\tiny Abs}}^{1}$) or the relative error ($E_{\mbox{\tiny Rel}}^{1}$), defined for a scalar quantity $u_{h}$ (approximating $u$) as
\begin{equation}
  E_{\mbox{\tiny Abs}}^{1}[u_{h}](t)
  =\f{1}{|D|}\sum_{\bK\in\mathscr{T}}\int_{\bK}|u_{h}(\vect{x},t)-u(\vect{x},t)|\,d\vect{x}
  \label{eq:errorNormAbsolute}
\end{equation}
and
\begin{equation}
  E_{\mbox{\tiny Rel}}^{1}[u_{h}](t)
  =\f{1}{|D|}\sum_{\bK\in\mathscr{T}}\int_{\bK}|u_{h}(\vect{x},t)-u(\vect{x},t)|/|u(\vect{x},t)|\,d\vect{x},
  \label{eq:errorNormRelative}
\end{equation}
respectively.  
The integrals in Eqs.~\eqref{eq:errorNormAbsolute} and \eqref{eq:errorNormRelative} are computed with a simple $3$-point equal weight quadrature.  

\subsubsection{Sine Wave: Streaming}

The first test involves the streaming part only, and does not include any collisions ($\sigma_{\Ab}=\sigma_{\Scatt}=0$).  
We consider a periodic domain $D=\{x:x\in[0,1]\}$, and let the initial condition be given by
\begin{equation}
  \cJ(x,t=0)=\cH_{x}(x,t=0)=0.5+0.49\times\sin\big(2\pi\,x\big).  
  \label{eq:initialConditionStreaming}
\end{equation}
We evolve until $t=10$, when the sine wave has completed 10 crossings of the computational domain.  
We vary the number of elements ($N$) from $8$ to $128$ and compute errors for various time stepping schemes.  

In Figure~\ref{fig:SineWaveStreaming}, the absolute error for the number density $E_{\mbox{\tiny Abs}}^{1}[\cJ_{h}](t=10)$ is plotted versus $N$ (see figure caption for details).  
Errors obtained with SSPRK3 are smallest and decrease as $N^{-3}$ (cf. bottom black dash-dot reference line), as expected for a scheme combining third-order accurate time stepping with third-order accurate spatial discretization.  
For all the other schemes, using second-order accurate explicit time stepping, the error decreases as $N^{-2}$.  
Among the second-order accurate methods, SSP2332 has the smallest error, followed by RKCB2.  
Errors for the remaining schemes (including SSPRK2) are indistinguishable on the plot.  
\begin{figure}[H]
  \centering
    \includegraphics[width=\textwidth]{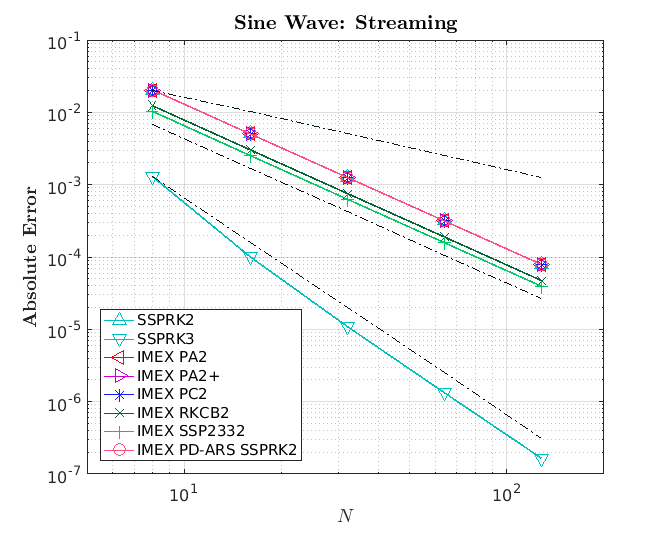}
   \caption{Absolute error (cf. Eq.~\eqref{eq:errorNormAbsolute}) versus number of elements $N$ for the streaming sine wave test.  Results employing various time stepping schemes are compared: SSPRK2 (cyan triangles pointing up), SSPRK3 (cyan triangles pointing down), PA2 (red), PA2+ (purple), PC2 (blue), RKCB2 (dark green), SSP2332 (green), and PD-ARS (light red circles).  Black dash-dot reference lines are proportional to $N^{-1}$ (top), $N^{-2}$ (middle), and $N^{-3}$ (bottom), respectively.}
  \label{fig:SineWaveStreaming}
\end{figure}

\subsubsection{Sine Wave: Damping}

The next test we consider, adapted from \cite{skinnerOstriker_2013}, consists of a sine wave propagating with unit speed in a purely absorbing medium ($f_{0}=0$, $\sigma_{\Scatt}=0$), which results in exponential damping of the wave amplitude.  
We consider a periodic domain $D=\{x:x\in[0,1]\}$, and let the initial condition ($t=0$) be given as in Eq.~\eqref{eq:initialConditionStreaming}.  
For a constant absorption opacity $\sigma_{\Ab}$, the analytical solution at $t>0$ is given by
\begin{equation}
  \cJ(x,t)=\cJ_{0}(x-t)\times\exp(-\sigma_{\Ab} t)
  \quad\text{and}\quad
  \cH_{x}(x,t)=\cJ(x,t),
\end{equation}
where $\cJ_{0}(x)=\cJ(x,0)$.  

We compute numerical solutions for three values of the absorption opacity ($\sigma_{\Ab}=0.1$, $1$, and $10$), and adjust the end time $t_{\mbox{\tiny end}}$ so that $\sigma_{\Ab}t_{\mbox{\tiny end}}=10$, and the initial condition has been damped by factor $e^{-10}$.  
Thus, for $\sigma_{\Ab}=0.1$ the sine wave crosses the domain 100 times, while for $\sigma_{\Ab}=10$, it crosses the grid once.  

Figure~\ref{fig:SineWaveDamping} shows convergence results, obtained using different values of $\sigma_{\Ab}$, for various IMEX schemes at $t=t_{\mbox{\tiny end}}$.  
Results for $\sigma_{\Ab}=0.1$, $1$, and $10$ are plotted with red, green, and blue lines, respectively (see figure caption for further details).  
All the second-order accurate schemes (PA2, PA2+, RKCB2, and SSP2332) display second-order convergence rates (cf. bottom, black dash-dot reference line).  
For $\sigma_{\Ab}=0.1$, SSP2332 is the most accurate among these schemes, while PA2+ is the most accurate for $\sigma_{\Ab}=10$.  
On the other hand, PC2 and PD-ARS are indistinguishable and display at most first-order accurate convergence, as expected.  
(For $\sigma_{\Ab}=0.1$, PC2 and PD-ARS are the most accurate schemes for $N=8$ and $N=16$.)

\begin{figure}[H]
  \centering
    \includegraphics[width=\textwidth]{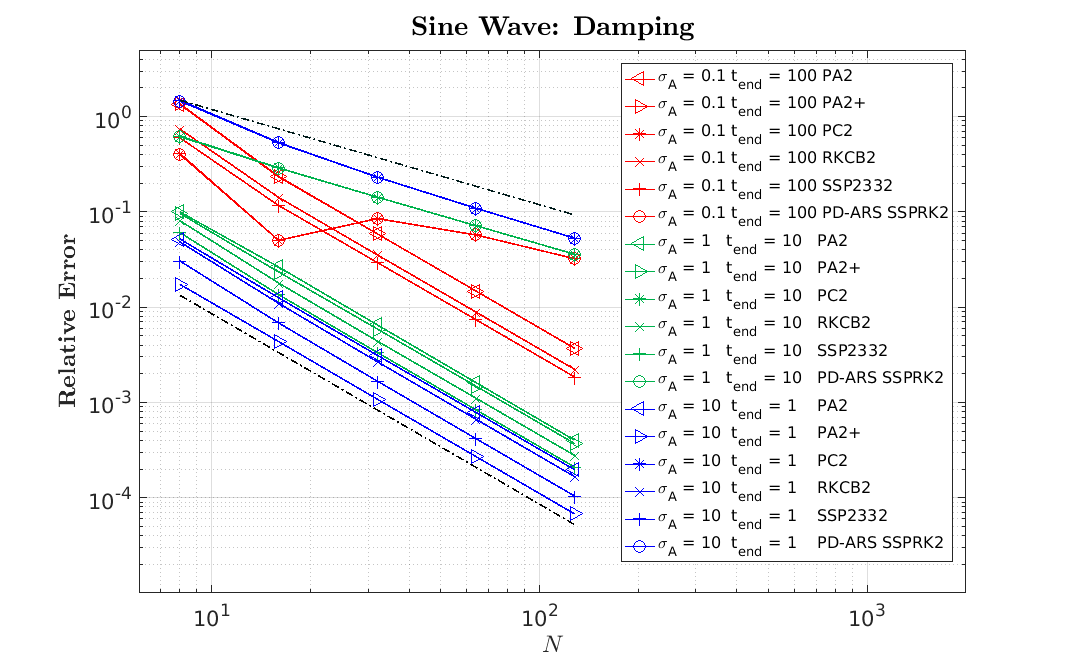}
   \caption{Relative error (cf. Eq.~\eqref{eq:errorNormRelative}) versus number of elements for the damping sine wave test.  Results for different values of the absorption opacity $\sigma_{\Ab}$, employing various IMEX time stepping schemes, are compared.  Errors for $\sigma_{\Ab}=0.1$, $1$, and $10$ are plotted with red, green, and blue lines, respectively.  The IMEX schemes employed are: PA2 (triangles pointing left), PA2+ (triangles pointing right), PC2 (asterisk), RKCB2 ($\times$), SSP2332 ($+$), and PD-ARS (circles).  Black dash-dot reference lines are proportional to $N^{-1}$ (top) and $N^{-2}$ (bottom), respectively.}
  \label{fig:SineWaveDamping}
\end{figure}

\subsubsection{Sine Wave: Diffusion}

The final test with known smooth solutions, adopted from \cite{radice_etal_2013}, is diffusion of a sine wave in a purely scattering medium ($f_{0}=0$, $\sigma_{\Ab}=0$).  
The computational domain $D=\{x:x\in[-3,3]\}$ is periodic, and the initial condition is given by
\begin{equation}
  \cJ_{0}(x)=0.5+0.49\times\sin\big(\f{\pi\,x}{3}\big)
  \quad\text{and}\quad
  \cH_{x,0}
  =-\f{1}{3\sigma_{\Scatt}}\pderiv{\cJ_{0}}{x}.  
  \label{eq:initialConditionDiffusion}
\end{equation}
For a sufficiently high scattering opacity, the moment equations limit to a diffusion equation for the number density (deviations appear at the $1/\sigma_{\Scatt}^{2}$-level).  
With the initial conditions in Eq.~\eqref{eq:initialConditionDiffusion}, the analytical solution to the limiting diffusion equation is given by
\begin{equation}
  \cJ(x,t)=\cJ_{0}(x)\times\exp\big(-\f{\pi^{2}\,t}{27\,\sigma_{\Scatt}}\big),
\end{equation}
and $\cH_{x}=(3\,\sigma_{\Scatt})^{-1}\pd{\cJ}{x}$.  
When computing errors for this test, we compare the numerical results obtained with the two-moment model to the analytical solution to the limiting diffusion equation.  
We compute numerical solutions using three values of the scattering opacity ($\sigma_{\Scatt}=10^{2}$, $10^{3}$, and $10^{4}$), and adjust the end time so that $t_{\mbox{\tiny end}}/\sigma_{\Scatt}=1$.  
The initial amplitude of the sine wave has then been reduced by a factor $e^{-\pi^{2}/27}\approx0.694$ for all values of $\sigma_{\Scatt}$.  

\begin{figure}[H]
  \centering
  \includegraphics[width=1.0\textwidth]{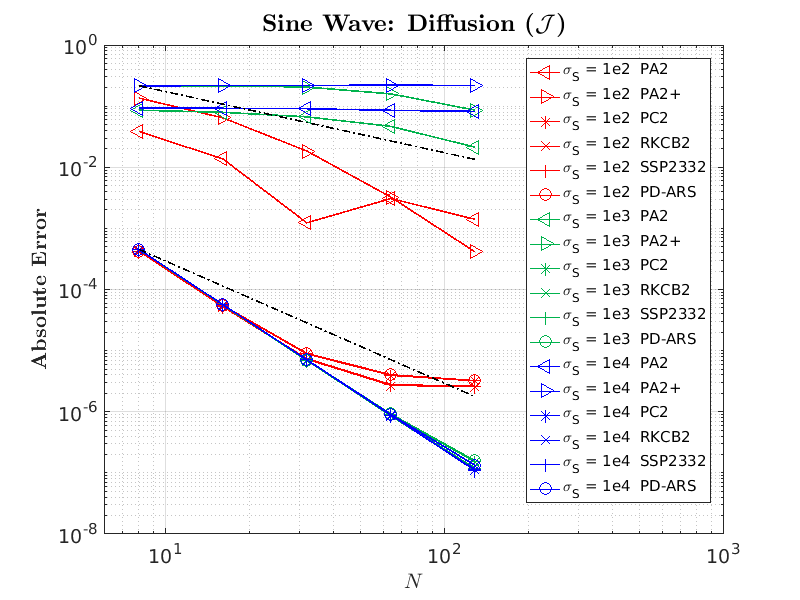}
   \caption{Absolute error (cf. Eq.~\eqref{eq:errorNormAbsolute}) for the number density $\cJ$ versus number of elements for the sine wave diffusion test.  Results with different values of the scattering opacity $\sigma_{\Scatt}$, employing different IMEX schemes, are compared.  Errors with $\sigma_{\Scatt}=10^{2}$, $10^{3}$, and $10^{4}$ are plotted with red, green, and blue lines, respectively.  The IMEX schemes employed are: PA2 (triangle pointing left), PA2+ (triangle pointing right), PC2 (asterisk), RKCB2 (cross), SSP2332 (plus), and PD-ARS (circle).  Black dash-dot reference lines are proportional to $N^{-1}$ (top) and $N^{-2}$ (bottom), respectively.}
  \label{fig:SineWaveDiffusionJ}
\end{figure}

\begin{figure}[H]
  \centering
  \includegraphics[width=1.0\textwidth]{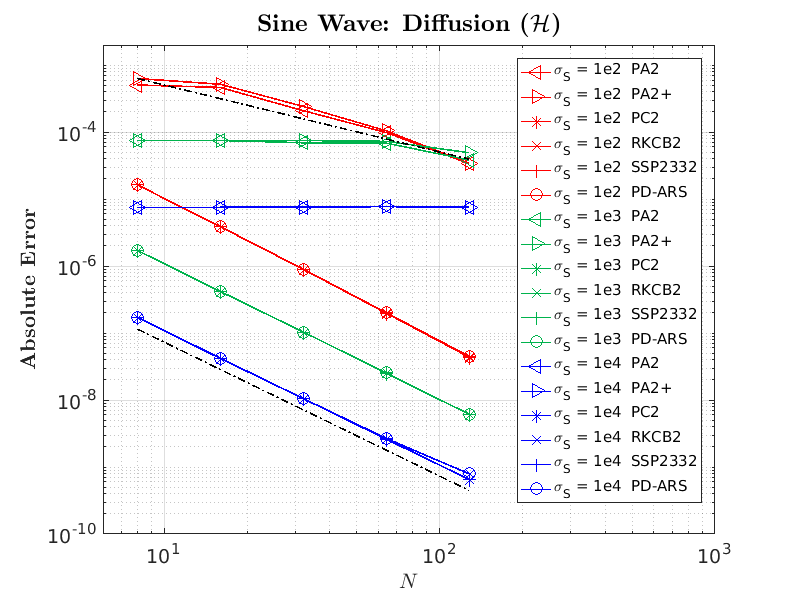}
   \caption{Same as in Figure~\ref{fig:SineWaveDiffusionJ}, but for the number flux $\cH_{x}$.}
  \label{fig:SineWaveDiffusionH}
\end{figure}

In Figures~\ref{fig:SineWaveDiffusionJ} and \ref{fig:SineWaveDiffusionH} we plot the absolute error, obtained using different values of $\sigma_{\Scatt}$, for various IMEX schemes at $t=t_{\mbox{\tiny end}}$.  
Results for $\sigma_{\Scatt}=10^{2}$, $10^{3}$, and $10^{4}$ are plotted with red, green, and blue lines, respectively (see figure caption for further details).  
(Scheme PC2 has been shown to work well for this test \cite{radice_etal_2013}, but is included here for comparison with the other IMEX schemes.)
Schemes PD-ARS, RKCB2, and SSP2332 are accurate for this test, and display third-order accuracy for the number density $\cJ$ and second-oder accuracy for $\cH_{x}$.  
For $\sigma=10^{2}$, the errors do not drop below $10^{-6}$ because of differences between the two-moment model and the diffusion equation used to obtain the analytic solution.  
For larger values of the scattering opacity, the two-moment model agrees better with the diffusion model, and we observe convergence over the entire range of $N$.  
Schemes PA2 and PA2+ do not perform well on this test (for reasons discussed in Section~\ref{sec:imex}).  
For $\sigma_{\Scatt}=10^{2}$, errors in $\cJ$ and $\cH_{x}$ decrease with increasing $N$, but for $\sigma_{\Scatt}=10^{4}$, errors remain constant with increasing $N$ over the entire range.  

\subsection{Packed Beam}
\label{sec:packedBeam}

Next we consider a one-dimensional test with discontinuous initial conditions.  
The purpose of this test is to further gauge the accuracy of the two-moment model and demonstrate the robustness of the DG scheme for dynamics close to the boundary of the realizable set $\cR$.  
The computational domain is $D=\{x:x\in[-1,1]\}$, and the initial condition is obtained from a distribution function given by
\begin{equation}
  f(x,\mu)
  =\left\{
  \begin{array}{cl}
    1        & \text{if} ~ x\le x_{\mbox{\tiny D}}, ~ \mu\ge\mu_{\mbox{\tiny D}} \\
    \delta & \text{if} ~ x\le x_{\mbox{\tiny D}}, ~ \mu<   \mu_{\mbox{\tiny D}} \\
    \delta & \text{otherwise},
  \end{array}
  \right.
\end{equation}
so that, with $\mu_{\mbox{\tiny D}}=0$, $\vect{\cM}\equiv\vect{\cM}_{\mbox{\tiny L}}=\big(0.5\,(1+\delta),0.25\,(1-\delta)\big)^{T}$ for $x\le x_{\mbox{\tiny D}}$, and $\vect{\cM}\equiv\vect{\cM}_{\mbox{\tiny R}}=\big(\delta,0\big)^{T}$ for $x> x_{\mbox{\tiny D}}$, where $\delta>0$ is a small parameter ($\delta\ll1$).  
We let $\delta=10^{-8}$, so that the initial conditions are very close to the boundary of the realizable domain (cf. Figure~\ref{fig:RealizableSetFermionic}).  
The analytical solution can be easily obtained by solving the transport equation for all angles $\mu$ (independent linear advection equations), and taking the angular moments.  
The numerical results shown in this section were obtained with the third-order scheme (polynomials of degree $k=2$ and the SSPRK3 time stepper) using $400$ elements.  
The time step is set to $\dt=0.1\times\dx$

Figure~\ref{fig:PackedBeam} shows results for various times obtained with the two-moment model.  
In the upper panels we plot the number density, while the number flux density is plotted in the lower panels.  
Numerical solutions are plotted with solid lines, while the analytical solution is plotted with dashed lines.  
In the left panels, the algebraic maximum entropy closure of Cernohorsky \& Bludman (CB) \cite{cernohorskyBludman_1994} (cf. Eqs.~\eqref{eq:eddingtonFactor} and \eqref{eq:closureMECB}) was used, while in the right panels the Minerbo closure (cf. Eqs.~\eqref{eq:eddingtonFactorLow} and \eqref{eq:closureMECB}) was used.  
For this test, the use of the realizability-preserving limiter described in Section~\ref{sec:limiter} was essential in order to avoid numerical problems.  
For the results obtained with the CB closure, the limiter was enacted whenever moments ventured outside the realizable set given by Eq.~\eqref{eq:realizableSet}.  
For the results obtained with the Minerbo closure, which is not based on Fermi-Dirac statistics, we used a modified limiter, which was enacted when the moments ventured outside the realizable domain of positive distributions; i.e., not bounded by $f < 1$, so that $\cJ > 0$ and $\cJ > \vect{\cH}|$ (e.g., \cite{levermore_1984}; see red line in Figure~\ref{fig:RealizableSetFermionic}).  

\begin{figure}[H]
  \centering
  \begin{tabular}{cc}
    \includegraphics[width=0.5\textwidth]{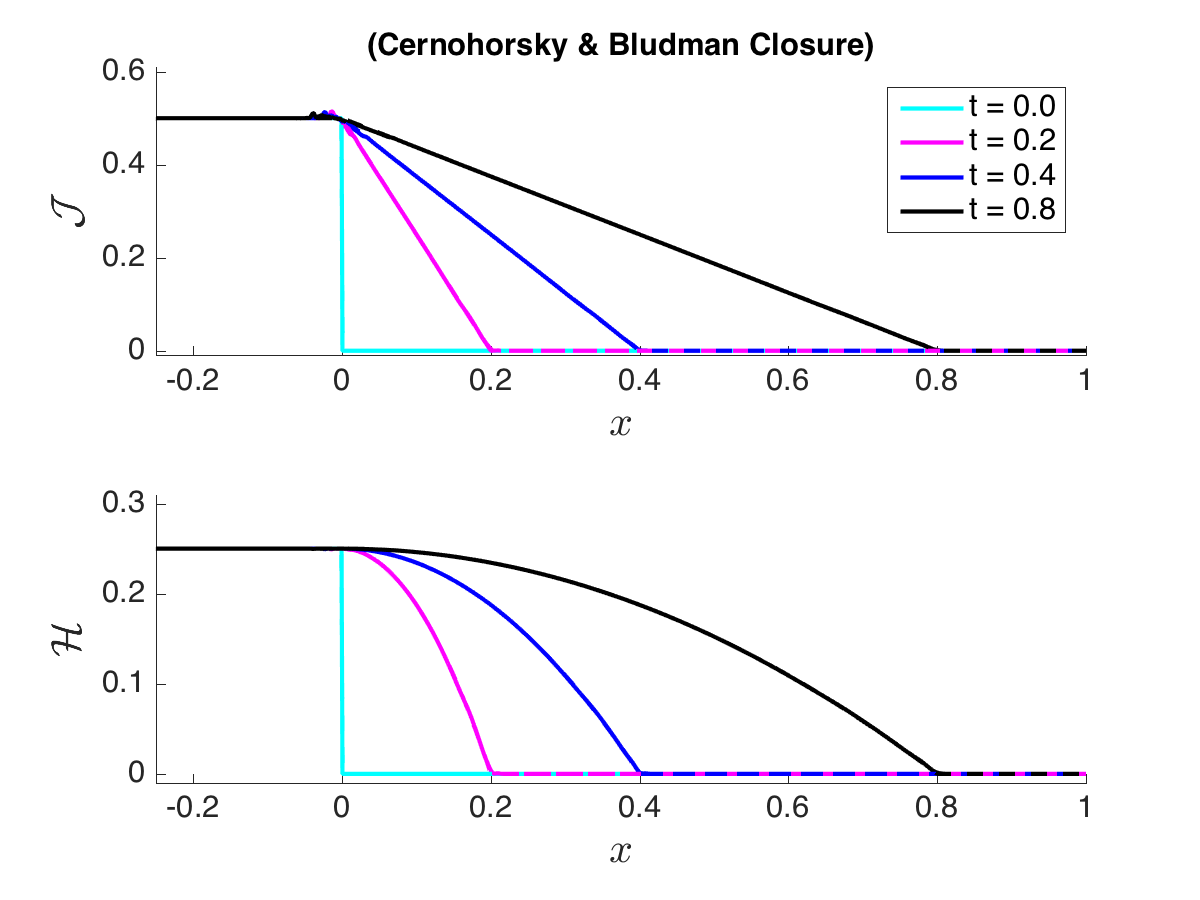} &
    \includegraphics[width=0.5\textwidth]{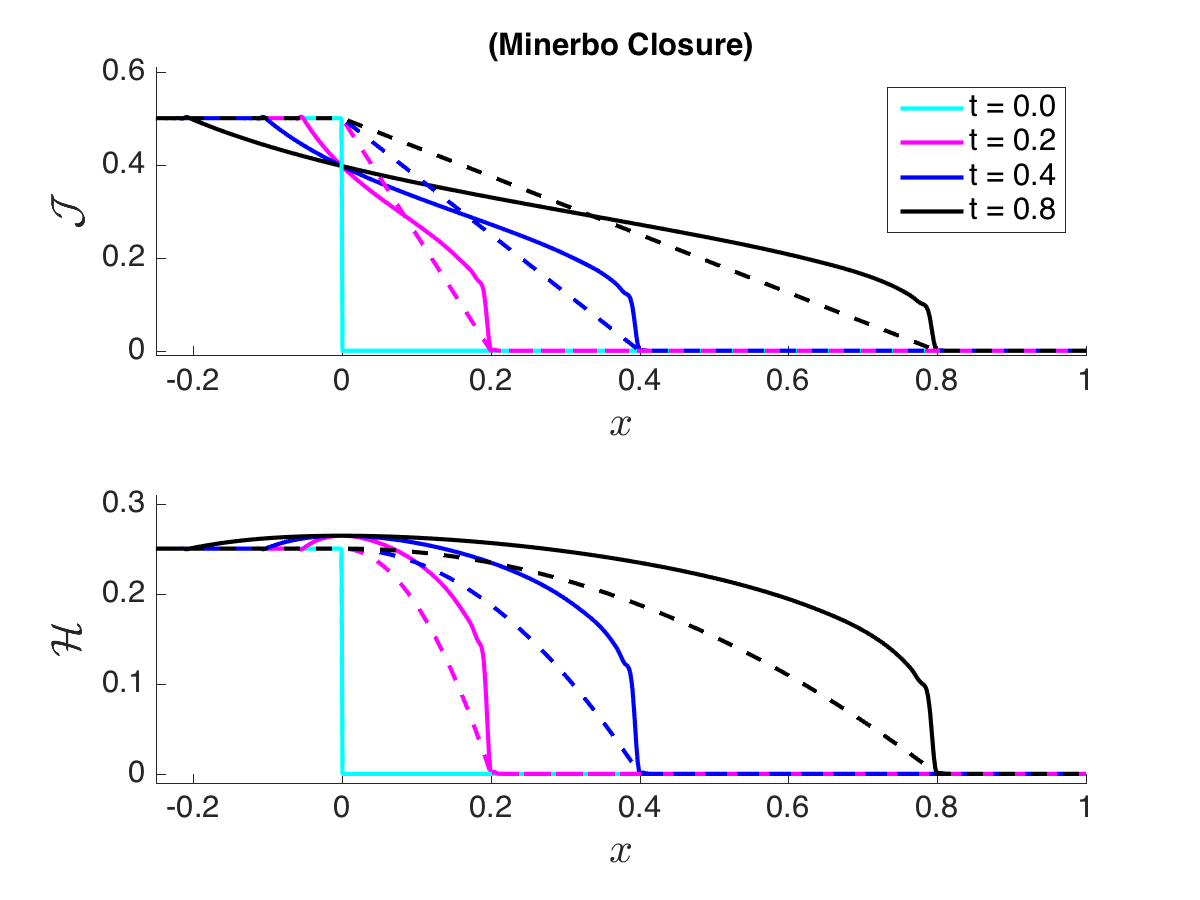}
  \end{tabular}
   \caption{Numerical results from the packed beam problem at various times: $t=0$ (cyan), $t=0.2$ (magenta), $t=0.4$ (blue), and $t=0.8$ (black).  Results obtained with the Cernohorsky \& Bludman closure are displayed in the left panels, while results obtained with the Minerbo closure are displayed in the right panels.  The analytical solution (dashed lines) is also plotted.}
  \label{fig:PackedBeam}
\end{figure}

As can be seen in Figure~\ref{fig:PackedBeam}, with the CB closure the numerical solution obtained with the two-moment model tracks the analytic solution well, while with the Minerbo closure the numerical solution deviates substantially from the analytic solution.  
With the Minerbo closure, the solution also evolves outside the realizable domain for Fermi-Dirac statistics.  

In the left panel in Figure~\ref{fig:PackedBeam_Realizability} we plot $\gamma(\vect{\cM})=\big(1-\cJ\big)\,\cJ-|\vect{\cH}|$ versus position for various times.  
With the Minerbo closure, $\gamma(\vect{\cM})$ becomes negative in regions of the computational domain (dashed lines), while $\gamma(\vect{\cM})$ remains positive for all $x$ and $t$ the CB closure.  
In the right panel of Figure~\ref{fig:PackedBeam_Realizability} we plot the numerical solutions in the $(\cH,\cJ)$-plane.  
Initially, the moments are located in two points: $\vect{\cM}_{\mbox{\tiny L}}$ and $\vect{\cM}_{\mbox{\tiny R}}$, for $x\le0$ and $x>0$, respectively (marked by circles in Figure~\ref{fig:PackedBeam_Realizability}).  
For $t>0$, the solutions trace out curves in the $(\cH,\cJ)$-plane, connecting $\vect{\cM}_{\mbox{\tiny L}}$ and $\vect{\cM}_{\mbox{\tiny R}}$.  
With the CB closure, the solution curve (blue points) follows the boundary of the realizable set $\cR$ defined in Eq.~\eqref{eq:realizableSet} (cf. black line in Figure~\ref{fig:PackedBeam_Realizability}).  
With the Minerbo closure (magenta points), the solution follows a different curve --- outside the realizable domain for distribution functions bounded by $f\in(0,1)$, but inside the realizable domain of positive distributions (cf. red line in Figure~\ref{fig:PackedBeam_Realizability}).  
We have also run this test using the algebraic maximum entropy closure of Larecki \& Banach \cite{lareckiBanach_2011} and the simpler Kershaw-type closure in \cite{banachLarecki_2017a}.  
The numerical solutions obtained with both of these closures follow the analytic solution well, and remain within the realizable set $\cR$.  
We point out that simply using the realizability-preserving limiter described in Section~\ref{sec:limiter} with the Minerbo closure does not result in a realizability-preserving scheme for Fermi-Dirac statistics because of the properties of this closure discussed in Section~\ref{sec:algebraicClosure}, and plotted in the right panel of Figure~\ref{fig:MabWithDifferentClosure}.  

\begin{figure}[H]
  \centering
  \begin{tabular}{cc}
    \includegraphics[width=0.485\textwidth]{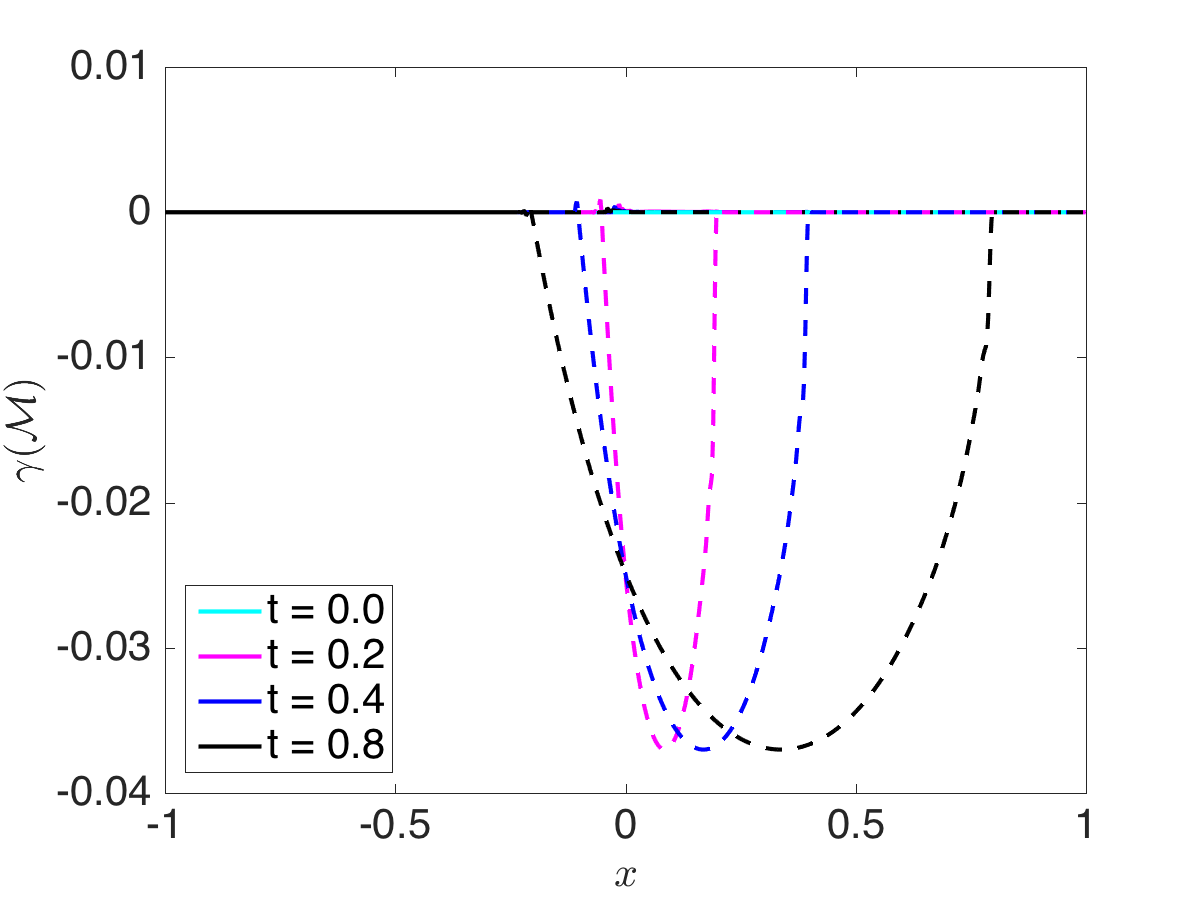} &
    \includegraphics[width=0.485\textwidth]{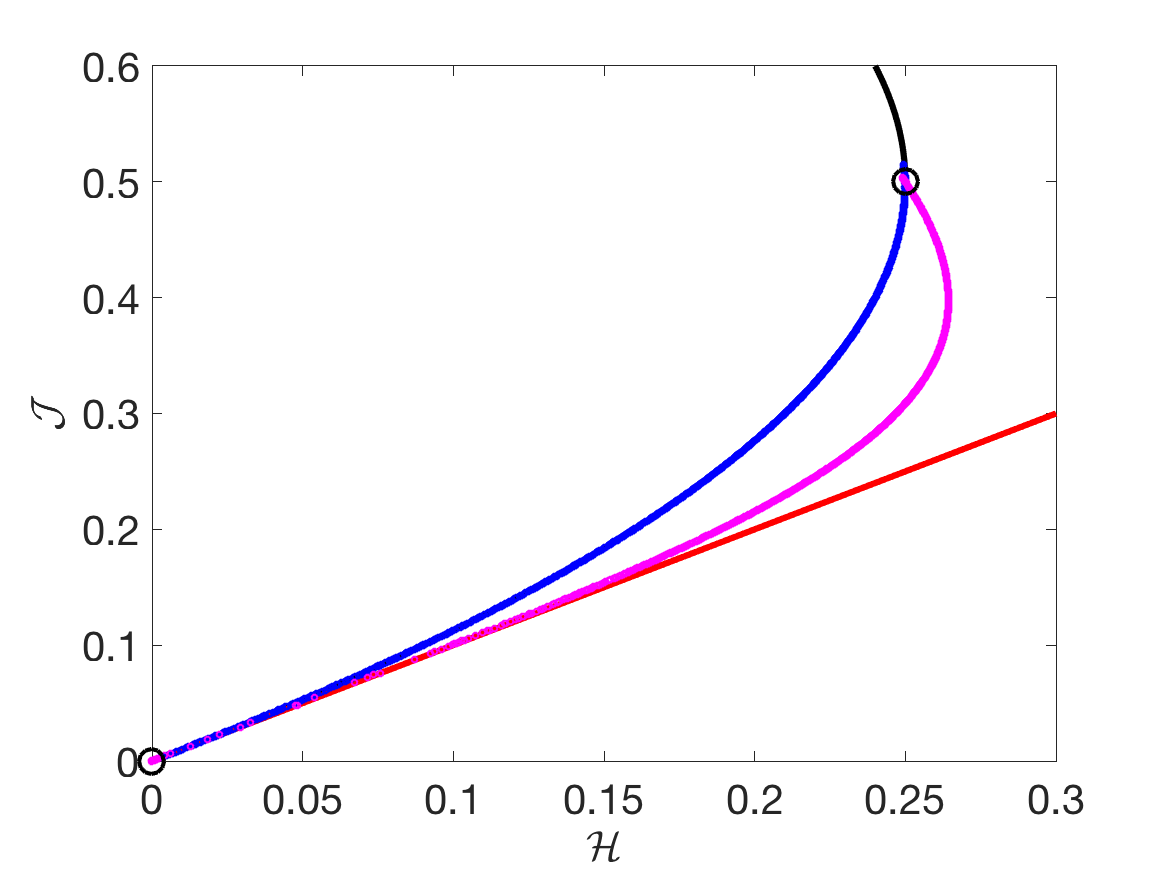}
  \end{tabular}
   \caption{In the left panel, $\gamma(\vect{\cM})=(1-\cJ)\,\cJ-|\vect{\cH}|$ is plotted versus $x$ for various times in the packed beam problem: $t=0$ (cyan), $t=0.2$ (magenta), $t=0.4$ (blue), and $t=0.8$ (black).  Results obtained with the CB closure, which remain positive throughout the evolution, are plotted with solid lines, while results obtained with the Minerbo closure are plotted with dashed lines.  In the right panel, the moments are plotted in the $(\cH,\cJ)$-plane for the same times as in the left panel.  Results obtained with the CB and Minerbo closures are plotted in blue and magenta, respectively.  The solid black and red lines are contours where $(1-\cJ)\,\cJ=\cH$ and $\cJ=\cH$, respectively.  The initial states are marked with black circles.}
  \label{fig:PackedBeam_Realizability}
\end{figure}

\subsection{Fermion Implosion}
\label{sec:fermionImplosion}

The next test is inspired by line source benchmark (cf. \cite{brunner_2002,garrettHauck_2013}), which is a challenging test for approximate transport algorithms.  
The original line source test consists of an initial delta function particle distribution in radius $R=|\vect{x}|$; i.e., $f_{0}=\delta(R)$.  
For $t>0$, a radiation front propagates in the radial direction, away from $R=0$.  
Apart from capturing details of the exact transport solution, maintaining realizability of the two-moment solution is challenging.  

Here, a modified version of the line source --- dubbed \emph{Fermion Implosion},  designed to test the realizability-preserving properties of the two-moment model for fermion transport --- is computed on a two-dimensional domain $D=\{\vect{x}\in\bbR^{2}:x^{1}\in[-1.28,1.28], x^{2}\in[-1.28,1.28]\}$.  
Instead of initializing with a delta function, we follow the initialization procedure in \cite{garrettHauck_2013}, and approximate the initial condition using an isotropic Gaussian distribution function.  
However, different from \cite{garrettHauck_2013}, the initial distribution function is bounded $f_{0}\in(0,1)$, and reaches a minimum in the center of the computational domain (hence implosion)
\begin{equation}
  f_{0}
  =1-\max\Big[\,e^{-R^{2}/(2\,\sigma_{0}^{2})},10^{-8}\,\Big].  
\end{equation}
We set $\sigma_{0}=0.03$, and evolve to a final time of $t=1.0$.  
We run this test using a grid of $512^{2}$ elements, polynomials of degree $k=1$, and the SSPRK2 time stepping scheme with $\dt=0.1\times\dx^{1}$.  
(There are no collisions included in this test; i.e., $\sigma_{\Ab}=\sigma_{\Scatt}=0$.)  
For comparison, we present results using the algebraic closures of Cernohorsky \& Bludman (CB) and Minerbo.  

\begin{figure}[H]
  \centering
  \begin{tabular}{cc}
    \includegraphics[width=0.495\textwidth]{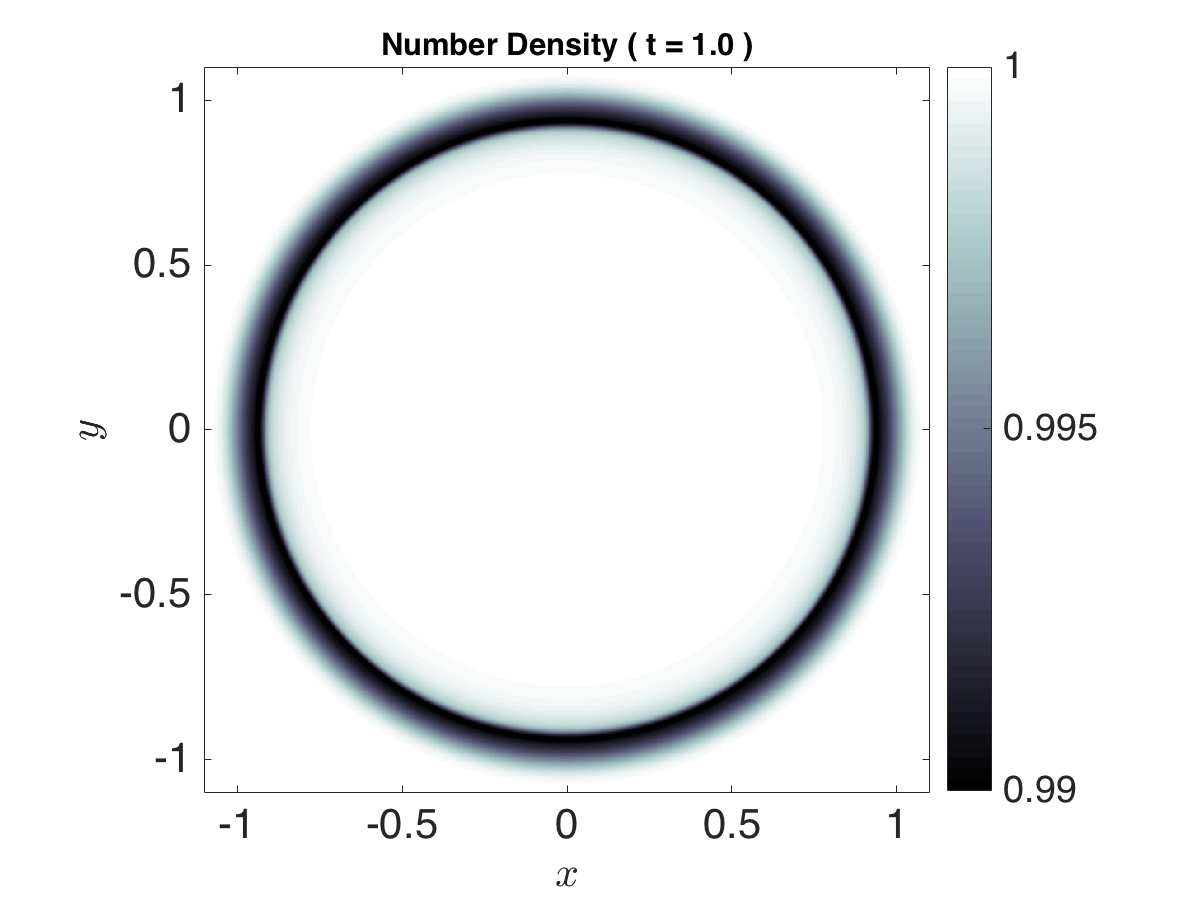} &
    \includegraphics[width=0.495\textwidth]{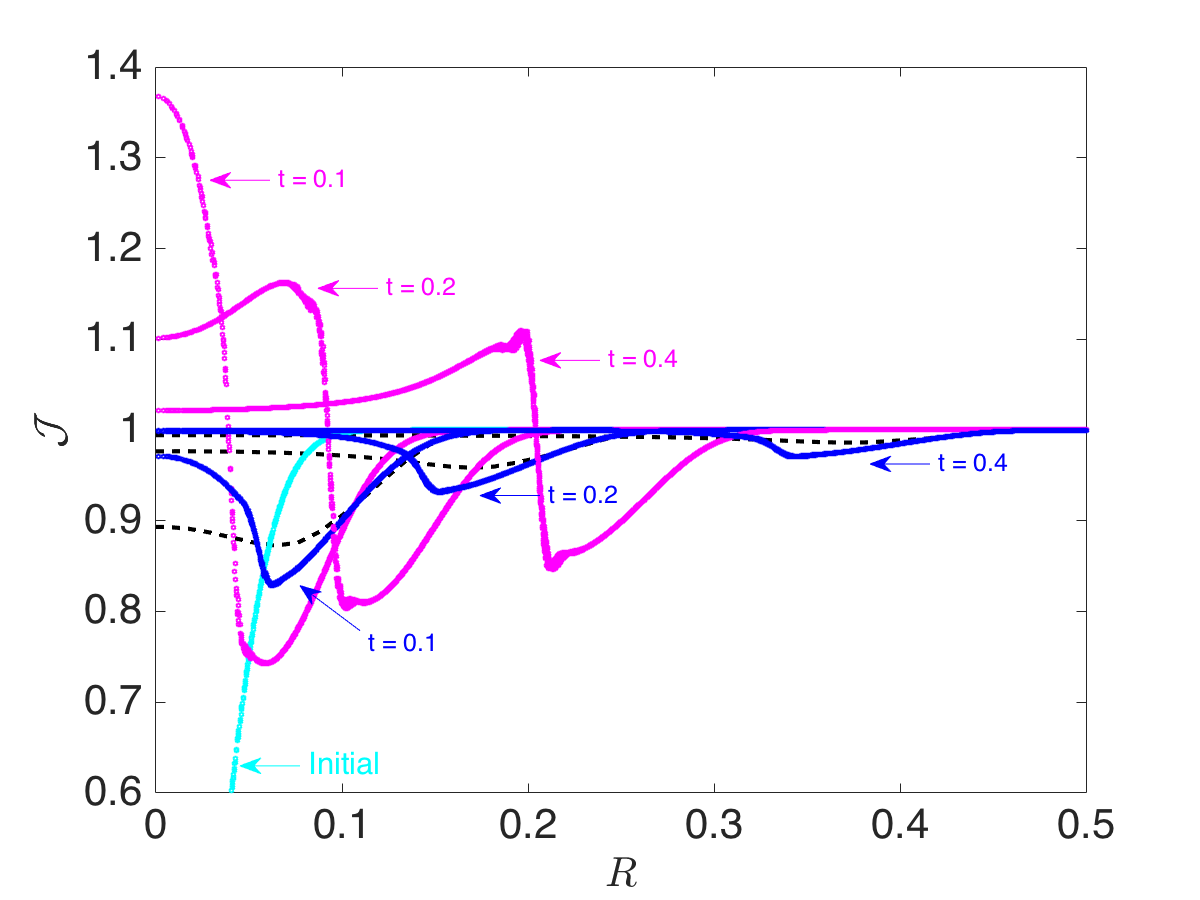} \\
    \includegraphics[width=0.495\textwidth]{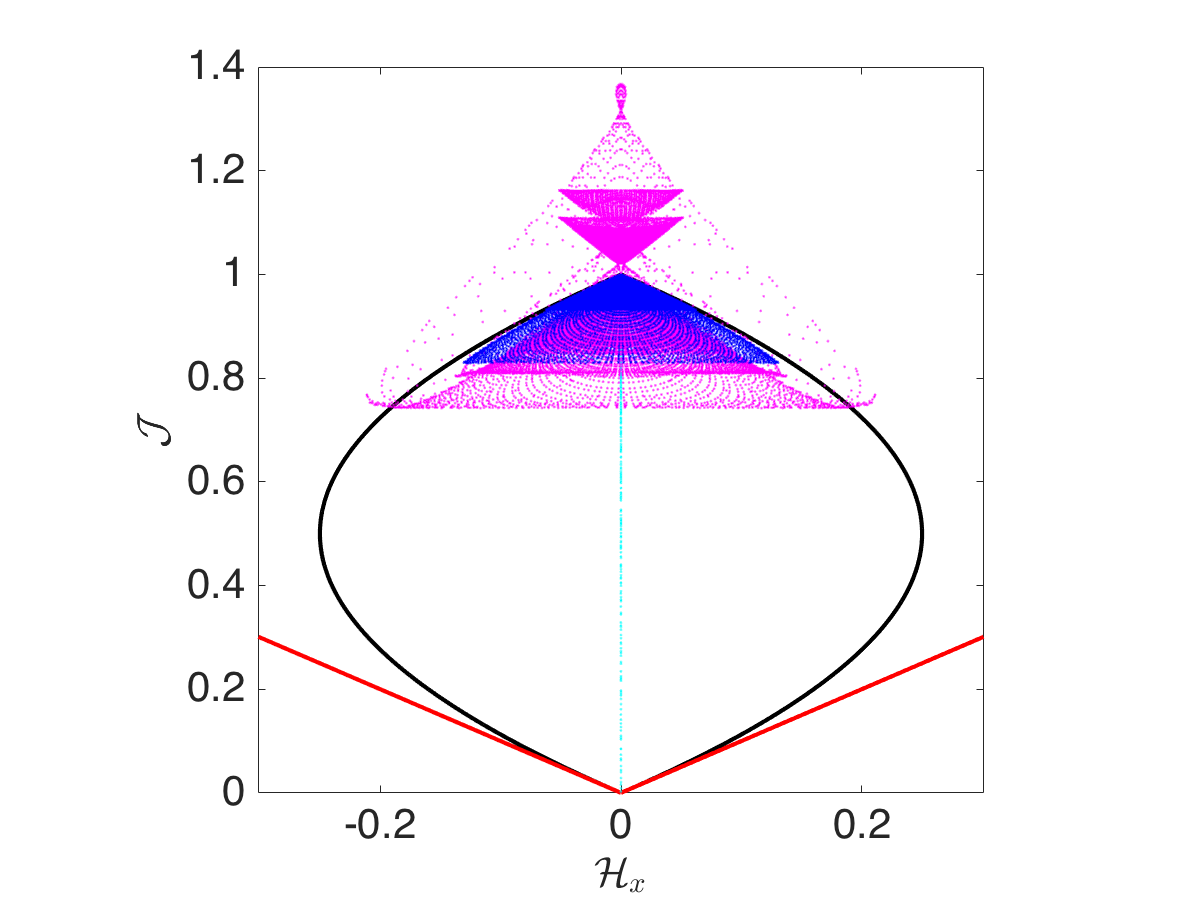} &
    \includegraphics[width=0.495\textwidth]{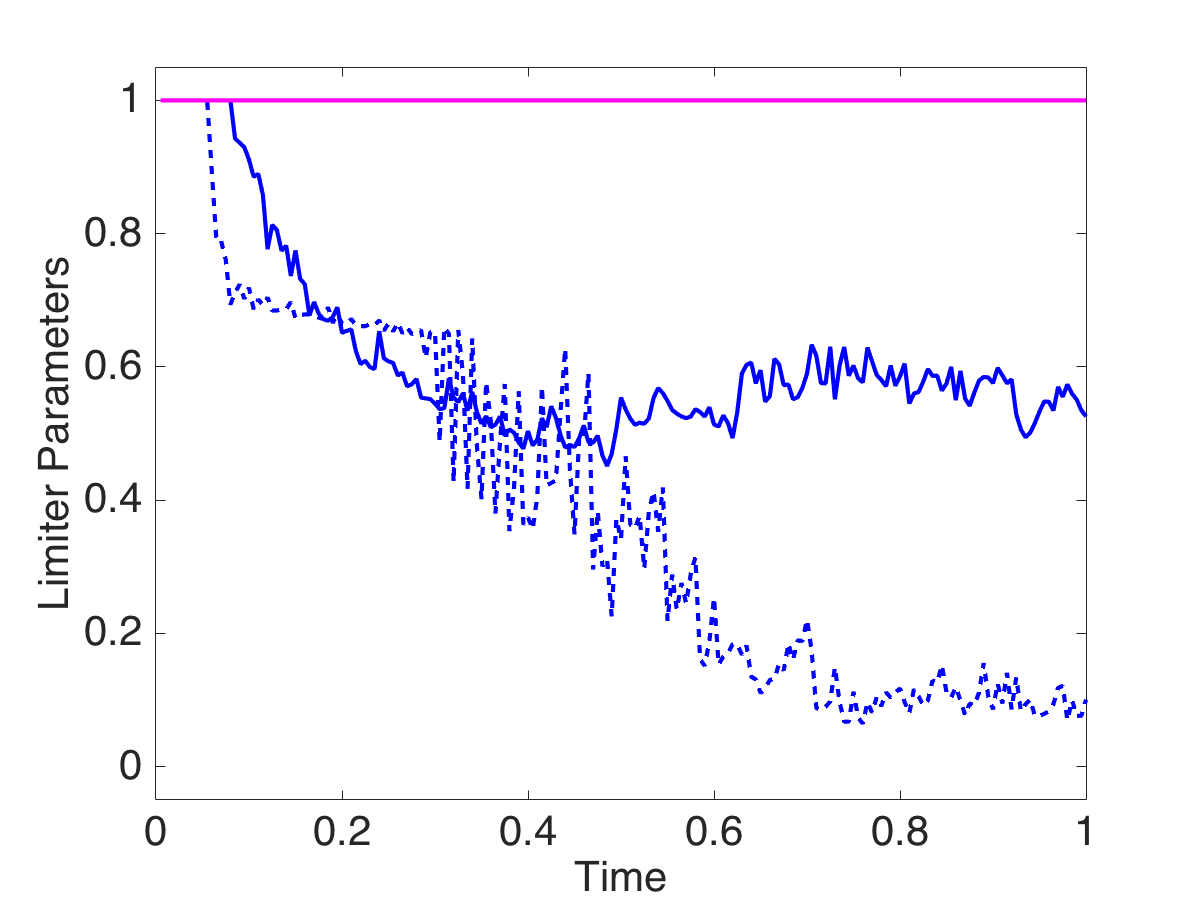}
  \end{tabular}
   \caption{Numerical results for the Fermion Implosion problem, computed both the CB and Minerbo closures.  Spatial distribution of the number density $\cJ$ (CB closure only) at $t=1$ (upper left panel).  In the upper right panel we plot the number density $\cJ$ versus radius $R=|\vect{x}|$ for various times ($t=0$, $0.1$, $0.2$, and $0.4$) for the CB (blue), the Minerbo (magenta) closures, and the reference transport solution (dashed black lines).  (The initial condition, which is the same for all models, is plotted with cyan.)  Numerical solutions in the $(\cH_{x},\cJ)$-plane (lower left panel), for the same times as plotted in the upper right panel are plotted for CB and Minerbo.  Limiter parameters $\vartheta_{1}$ (solid) and $\vartheta_{2}$ (dashed) in Section~\ref{sec:limiter} (minimum over the whole computational domain) versus time (lower right panel).}
  \label{fig:Implosion}
\end{figure}

Numerical results for the Fermion Implosion problem are plotted in Figure~\ref{fig:Implosion}.  
For $t>0$, the low-density region in the center of the computational domain is quickly filled in, and a cylindrical perturbation propagates radially away from the center.  
For the model with the CB closure, this perturbation, seen as a depression in the density relative to the ambient medium, has reached $R\approx1$ for $t=1$ (upper left panel in Figure~\ref{fig:Implosion}).  
The right panel in Figure~\ref{fig:Implosion} illustrates the difference in dynamics resulting from the two closures.  
(We also plot a reference transport solution obtained using the filtered spherical harmonics scheme described in \cite{garrettHauck_2013}; dashed black lines.\footnote{Kindly provided by Dr. Ming Tse Paul Laiu (private communications).})  
With the CB closure (blue lines), the central density increases towards the maximum value of unity, and an low-density pulse propagates radially.  
The amplitude of the pulse decreases with time due to the geometry of the problem.  
For $t=0.4$, the peak depression in located around $R=0.34$ (with $\cJ\approx0.96$).  
With the Minerbo closure, the central density continues to increase beyond unity, and reaches a maximum of about $\cJ\approx1.37$ at $t=0.1$.  
The central density starts to decrease beyond this point in time, and a steepening pulse propagates radially away from the center.  
(This pulse is trailing the pulse in the model computed with the CB closure.)  
At $t=0.4$, a discontinuity appears to have formed around $R=0.2$, resulting in numerical oscillations.  
Except for the realizability-enforcing limiter (which is not triggered for this model), no other limiters are used to prevent numerical oscillations.  
Although the solutions obtained with the two-moment model differ from the reference transport solution, the results obtained with the CB closure are in closer agreement with the transport solution.  
This is likely because the CB closure is consistent with the bound $f<1$ satisfied by the transport solution in this test.  
(For tests involving lower occupancies, the CB and Minerbo closures are expected to perform similarly.)  
In the lower left panel in Figure~\ref{fig:Implosion}, the moments are plotted in the $(\cH_{x},\cJ)$-plane for the same times as plotted in the upper left panel.  
(Each dot represents the moments at a specific spatial point and time.)  
Initially, $\vect{\cH}=0$, and all the moments lie on the line connecting $(0,0)$ and $(0,1)$; cyan points.  
With the CB closure (blue points), the moments are confined to evolve inside the realizable domain $\cR$ (black), while with the Minerbo closure, the moments are not confined to $\cR$, but to the region above the red lines (the realizable domain for moments of positive distribution functions), and this is the reason for the difference in dynamics in the two models.  
For the model with the CB closure, some moments evolve very close to the boundary of the realizable domain, and the positivity limiter is continuously triggered to damp these moments towards the cell average, which is realizable by the design of the numerical scheme.  
In the lower right panel in Figure~\ref{fig:Implosion} we plot the limiter parameters $\vartheta_{1}$ (solid) and $\vartheta_{2}$ (dashed) (cf. \eqref{eq:limitDensity} and \eqref{eq:limitMoments}) versus time for the CB closure model (blue) and the Minerbo closure model (magenta); the minimum over the whole computational domain is plotted.  
For the CB closure model, the limiter is triggered to prevent both density overshoots and $\gamma(\cM)<0$.  
Late in the simulation ($t\gtrsim0.7$), the minimum value of $\vartheta_{2}$ is around $0.1$.  
For the model using the Minerbo closure, the limiter is not triggered ($\vartheta_{1}=\vartheta_{2}=1$).  

\subsection{Homogeneous Sphere}
\label{sec:homogeneousSphere}

The homogeneous sphere test (e.g., \cite{smit_etal_1997}) considers a sphere with radius $R$.  
Inside the sphere (radius $<R$), the absorption opacity $\sigma_{\Ab}$ and the equilibrium distribution function $f_{0}$ are set to constant values.  
The scattering opacity $\sigma_{\Scatt}$ is set to zero in this test (i.e., $\xi=1$).  
Outside the sphere, the absorption opacity is zero.  
The steady state solution, obtained by solving the transport equation in spherical symmetry, is given by
\begin{equation}
  f_{\mbox{\tiny A}}(r,\mu)=f_{0}\,\big(1-e^{-\chi_{0}\,s(r,\mu)}\big),
  \label{eq:distributionHomogeneousSphere}
\end{equation}
where $r=|\vect{x}|$, 
\begin{equation}
  s(r,\mu)
  =\left\{
  \begin{array}{lll}
    r\,\mu+R\,g(r,\mu) & \mbox{if}\quad r<R, & \mu\in[-1,+1], \\
    2\,R\,g(r,\mu) & \mbox{if}\quad r \ge R, & \mu\in[(1-(R/r)^{2})^{1/2},+1], \\
    0 & \mbox{otherwise},
  \end{array}
  \right.
\end{equation}
and $g(r,\mu)=[1-(r/R)^{2}(1-\mu^{2})]^{1/2}$.  
Thus, $f_{\mbox{\tiny A}}(r,\mu)\in(0,f_{0})~\forall~r,\mu$.  

Here, this test is computed using a three-dimensional Cartesian domain $D=\{\vect{x}\in\bbR^{3}:x^{1}\in[0,2], x^{2}\in[0,2], x^{3}\in[0,2]\}$.  
Because of the symmetry of the problem, and to save computational resources, we only compute the solution in one octant.  
On the inner boundaries, we impose reflecting boundary conditions, while we impose 'homogeneous' boundary conditions on the outer boundary in all three coordinate dimensions; i.e., values for all moments in a boundary element are set equal to the corresponding values in the nearest element just inside $D$.  
Since this test is computed with Cartesian coordinates using a relatively low spatial resolution ($64^{3}$), we have found it necessary to smooth out the opacity over a finite radial extent to avoid numerical artifacts due to a discontinuous absorption opacity.  
Specifically, we use an absorption opacity of the following form
\begin{equation}
  \sigma_{\Ab}(r)=\f{\sigma_{\Ab,0}}{(r/R_{0})^{p}+1}.  
\end{equation}
We set $f_{0}=1$, and compute three versions of this test: one with $\sigma_{\Ab,0}=1$, $R_{0}=1$, and $p=80$ (Test~A), one with $\sigma_{\Ab,0}=10$, $R_{0}=1$, and $p=80$ (Test~B), and one with $\sigma_{\Ab}=10^{3}$, $R_{0}=0.85$, and $p=40$ (Test~C).  
(These values for $R_{0}$ and $p$ result in similar radius for where the optical depth equals $2/3$ in Test~B and Test~C.)
We compute until $t=5$, when the system has reached an approximate steady state.  
In all the tests, we use the IMEX scheme PD-ARS with $\dt=0.1\times\dx^{1}$ --- the least compute-intensive of the convex-invariant IMEX schemes presented here.  
The main purpose of this test is to compare the results obtained using the different algebraic closures discussed in Section~\ref{sec:algebraicClosure}.  

In Figure~\ref{fig:HomogeneousSphere}, we plot results obtained for all tests at $t=5$: Test~A (top panels), Test~B (middle panels), and Test~C (bottom panels).  
The particle density $\cJ$ and the flux factor $h=|\vect{\cH}|/\cJ$ (left and right panels, respectively) are plotted versus radius $r=|\vect{x}|$.  
In each panel, results obtained with the various algebraic closures discussed in Section~\ref{sec:algebraicClosure} are plotted: Minerbo (magenta), CB (blue), BL (green), and Kershaw (cyan).  
The analytical solution is also plotted (dashed black lines).  
\begin{figure}[H]
  \centering
  \begin{tabular}{cc}
    \includegraphics[width=0.5\textwidth]{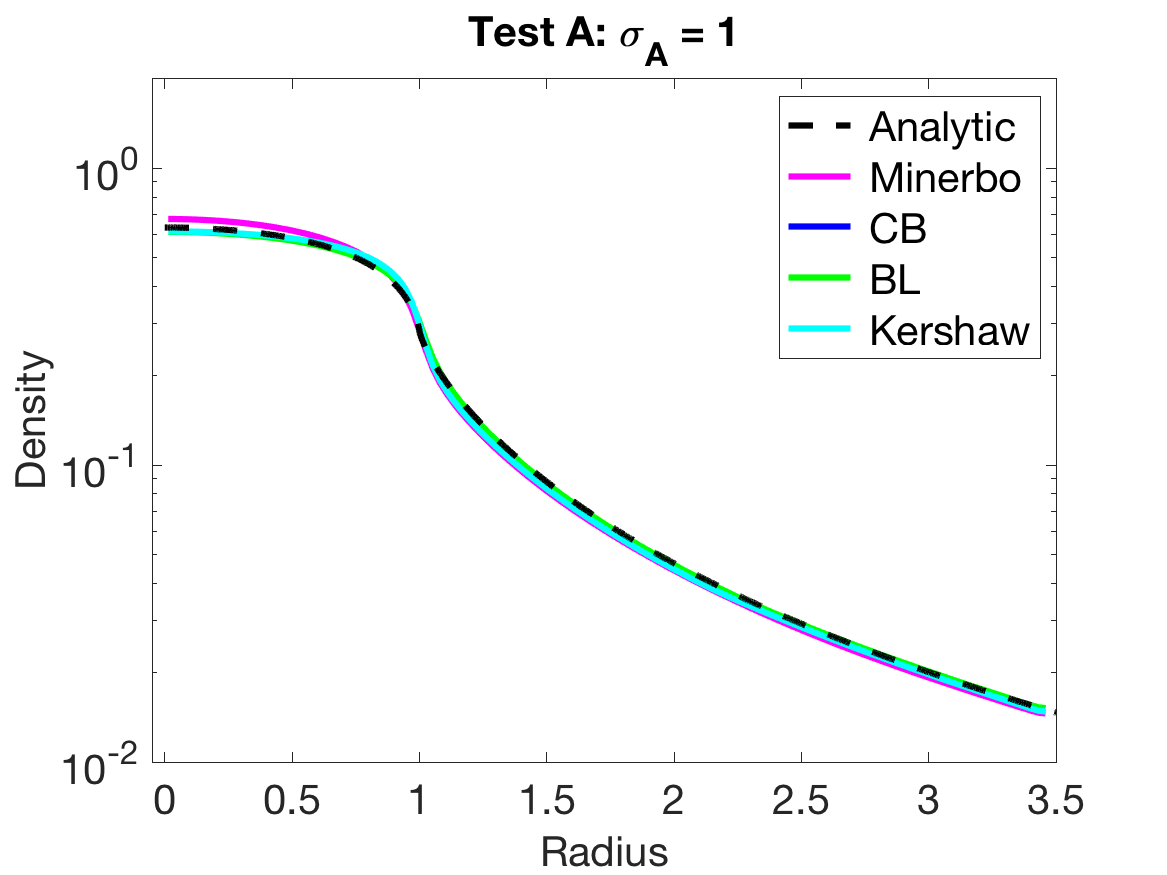}
    \includegraphics[width=0.5\textwidth]{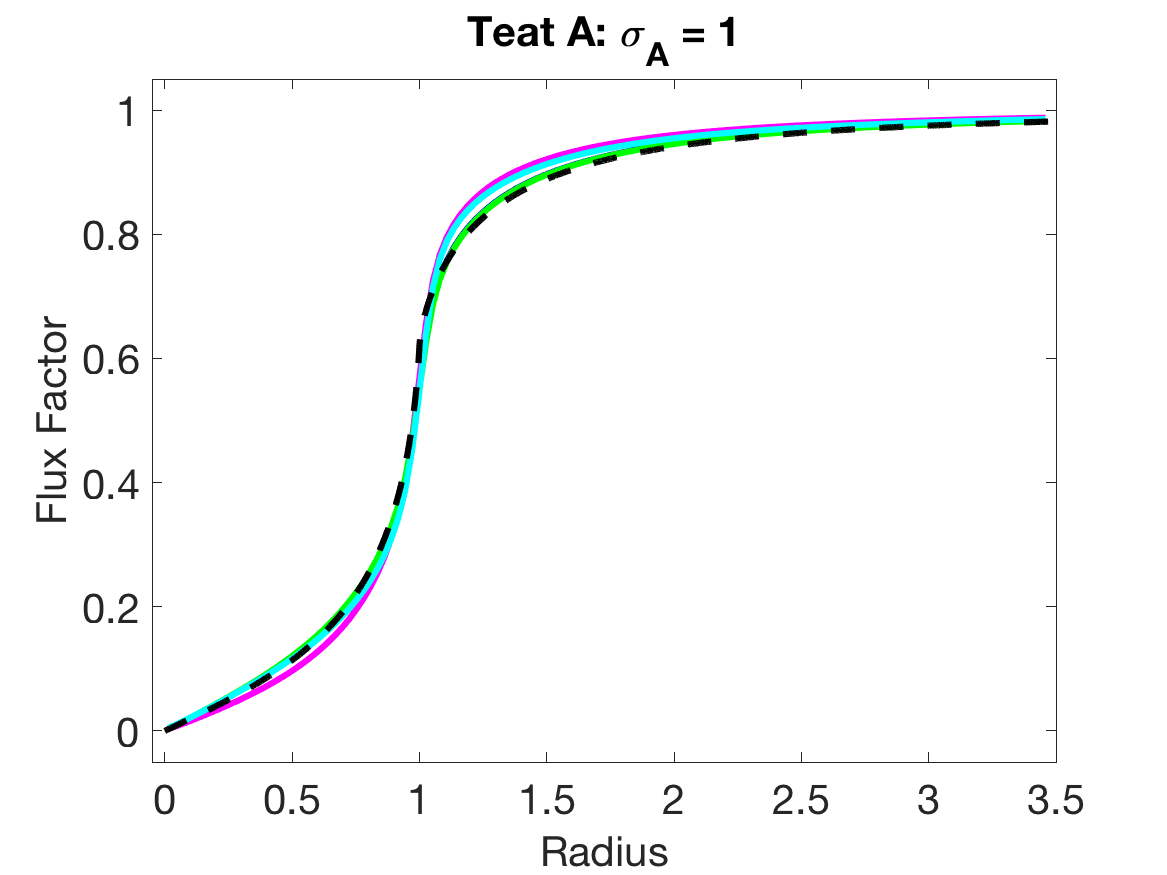} \\
    \includegraphics[width=0.5\textwidth]{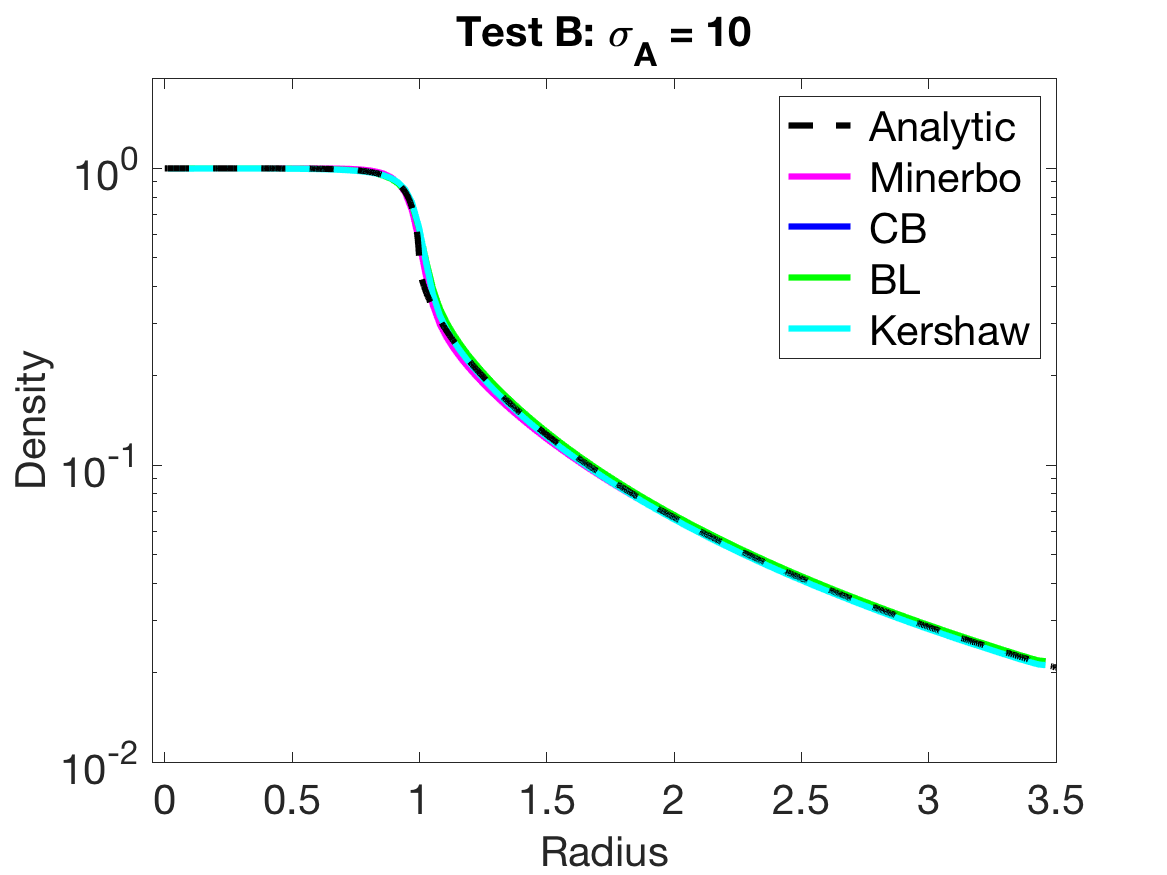}
    \includegraphics[width=0.5\textwidth]{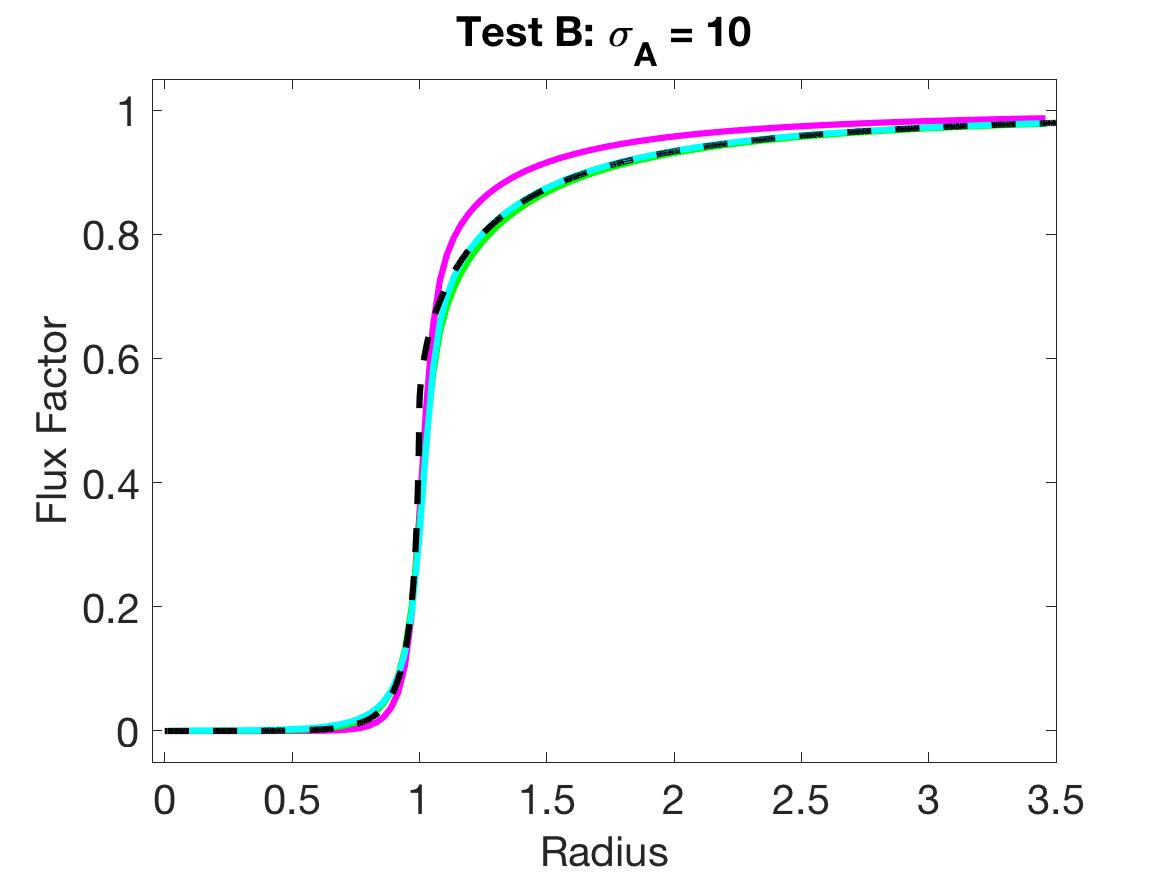} \\
    \includegraphics[width=0.5\textwidth]{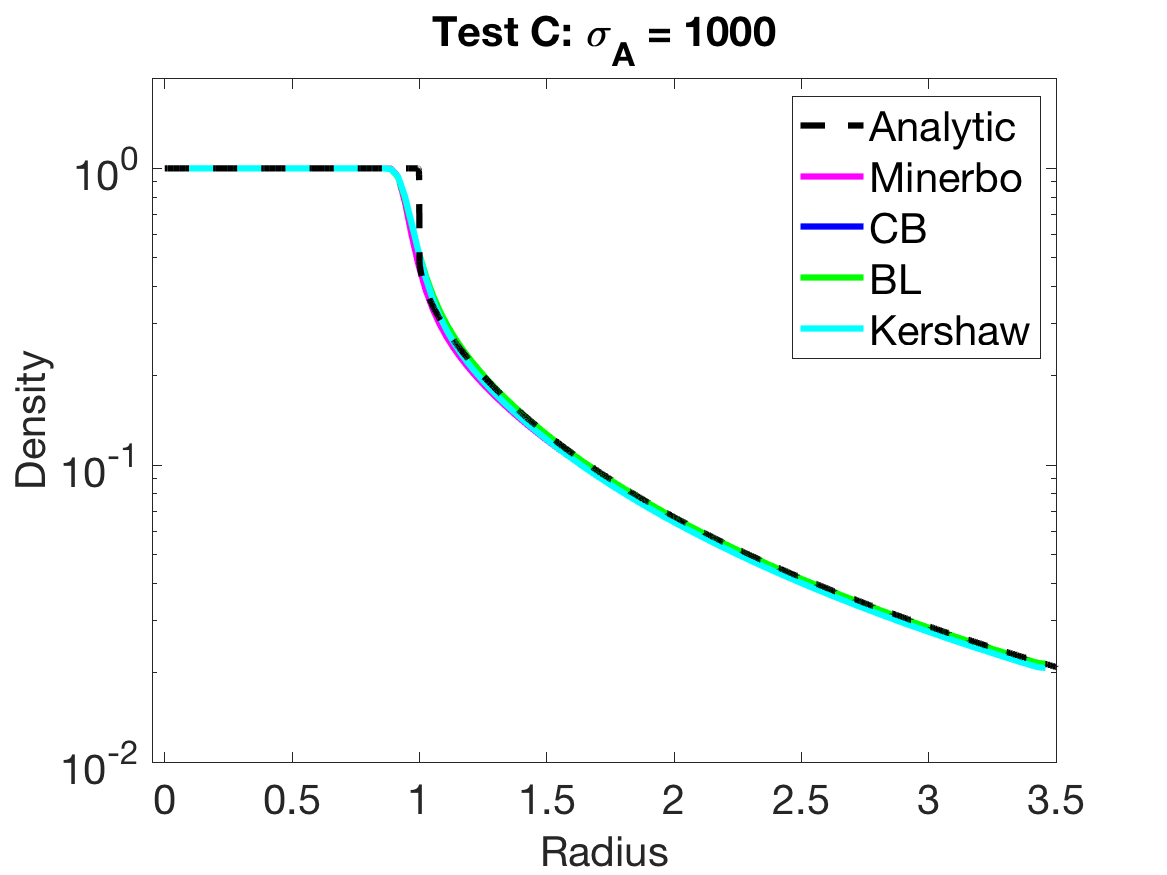}
    \includegraphics[width=0.5\textwidth]{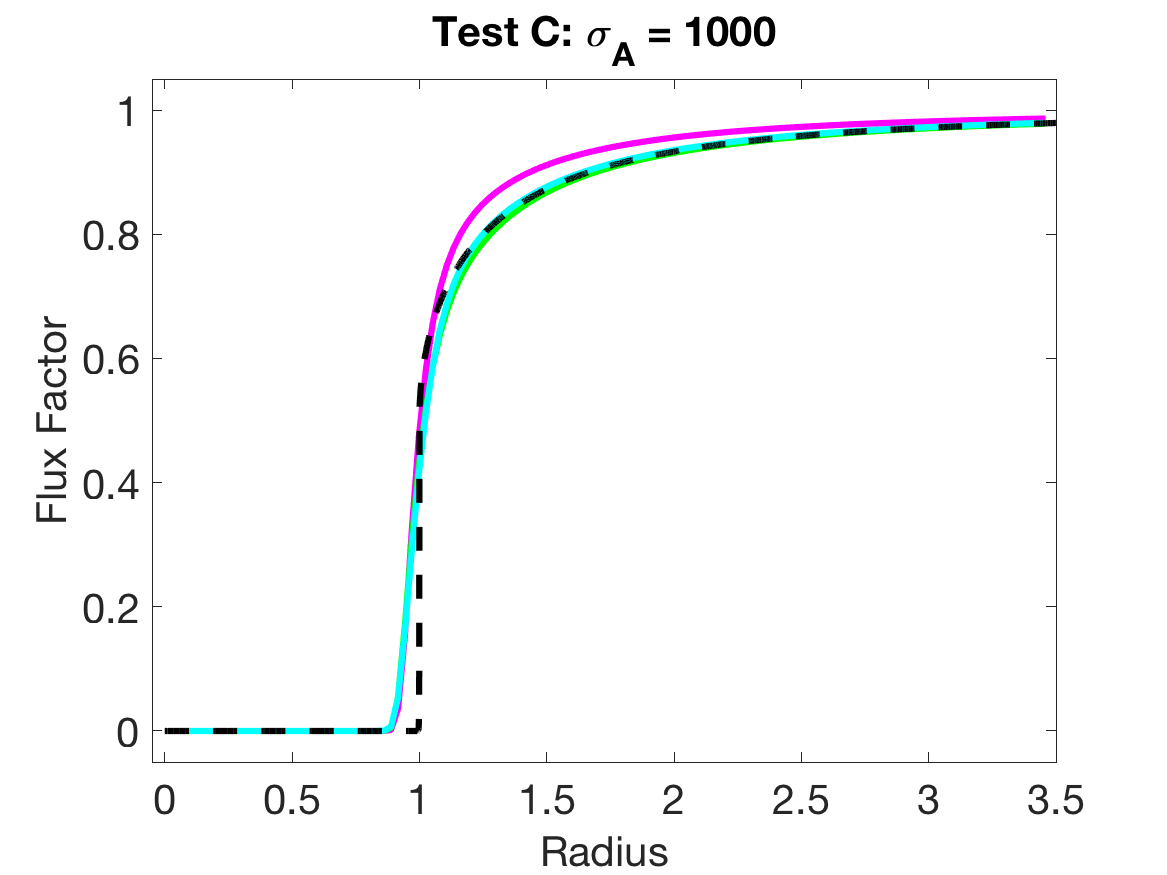}
  \end{tabular}
   \caption{Results obtained for the homogeneous sphere problem with the two-moment model for different values of the absorption opacity $\sigma_{\Ab,0}$: $1$ (top panels), $10$ (middle panels), and $1000$ (bottom panels).  The particle density (left panels) and the flux factor (right panels) are plotted versus radius.  Numerical results obtained with the algebraic closures of Minerbo (magenta), CB (blue), BL (green), and Kershaw (cyan) are compared with the analytic solution (dashed black lines).}
  \label{fig:HomogeneousSphere}
\end{figure}

We find good overall agreement between the results obtained with the two-moment model and the analytical solution.  
Partly due to the smoothing of the absorption opacity around the surface, the numerical and analytical solutions naturally differ around $r=1$.  
Aside from some differences discussed in more detail below, the numerical and analytical solutions --- for all values of the absorption opacity $\sigma_{\Ab,0}$ and all closures --- agree well as $r$ tends to zero, as well as when $r\gg1$.  
The results obtained with the maximum entropy closures CB and BL are practically indistinguishable on the plots.  
This is consistent with the similarity of the Eddington factors for these two closures, as shown in Figure~\ref{fig:EddingtonFactorsWithDifferentClosure}.  
We also find that the results obtained with the fermionic Kershaw closure agree well with the maximum entropy closures based on Fermi-Dirac statistics (CB and BL).  
From the plots of the particle density (left panels in Figure~\ref{fig:HomogeneousSphere}), the results obtained with all the closures, including Minerbo, appear very similar.  
(For Test~A, the particle density obtained with the Minerbo closure deviates the most from the analytic solution inside $r\approx0.75$; upper left panel).  
From the plots of the flux factor (right panels in Figure~\ref{fig:HomogeneousSphere}), it is evident that the results obtained with the Minerbo closure --- the only closure not based on Fermo-Dirac statistics --- deviates the most from the analytic solution outside $r=1$, where the flux factor is consistently higher than the analytical solution for all values of $\sigma_{\Ab,0}$.  
The fermionic closures (CB, BL, and Kershaw) track the analytic solution better.  
Similar agreement between the numerical and analytical solutions was reported by Smit et al. \cite{smit_etal_1997}, when using the CB maximum entropy closure with $f_{0}=0.8$ and an unsmoothed absorption opacity $\sigma_{\Ab}=4$.  
We also note that our results appear to be somewhat at odds with the results recently reported by Murchikova et al. \cite{murchikova_etal_2017}, who compared results obtained with the two-moment model using a large number of algebraic closures for this same problem (albeit using an unsmoothed and slightly different value for the absorption opacity).  
Murchikova et al. do not plot the particle density, but find essentially no difference in the flux factor and the Eddington factor when comparing results obtained with the maximum entropy closures of Minerbo and Cernohorsky \& Bludman (CB).  

In Figure~\ref{fig:HomogeneousSphereRealizability}, we further compare the results obtained when using the Minerbo and CB closures by plotting the solutions to the homogeneous sphere problem for Test~C at $t=5$ in the $(|\vect{\cH}|,\cJ)$-plane (cf. the realizable domain in Figure~\ref{fig:RealizableSetFermionic}).  
The numerical solution at each spatial point is represented by a blue (CB) or magenta (Minerbo) dot in the panels.  
In the lower two panel we zoom in on the results obtained with the two closures around the top and lower right regions of the realizable domain (lower left and lower right panel, respectively; cf. green boxes in the upper right panel).  
\begin{figure}[H]
  \centering
  \includegraphics[width=1.0\textwidth]{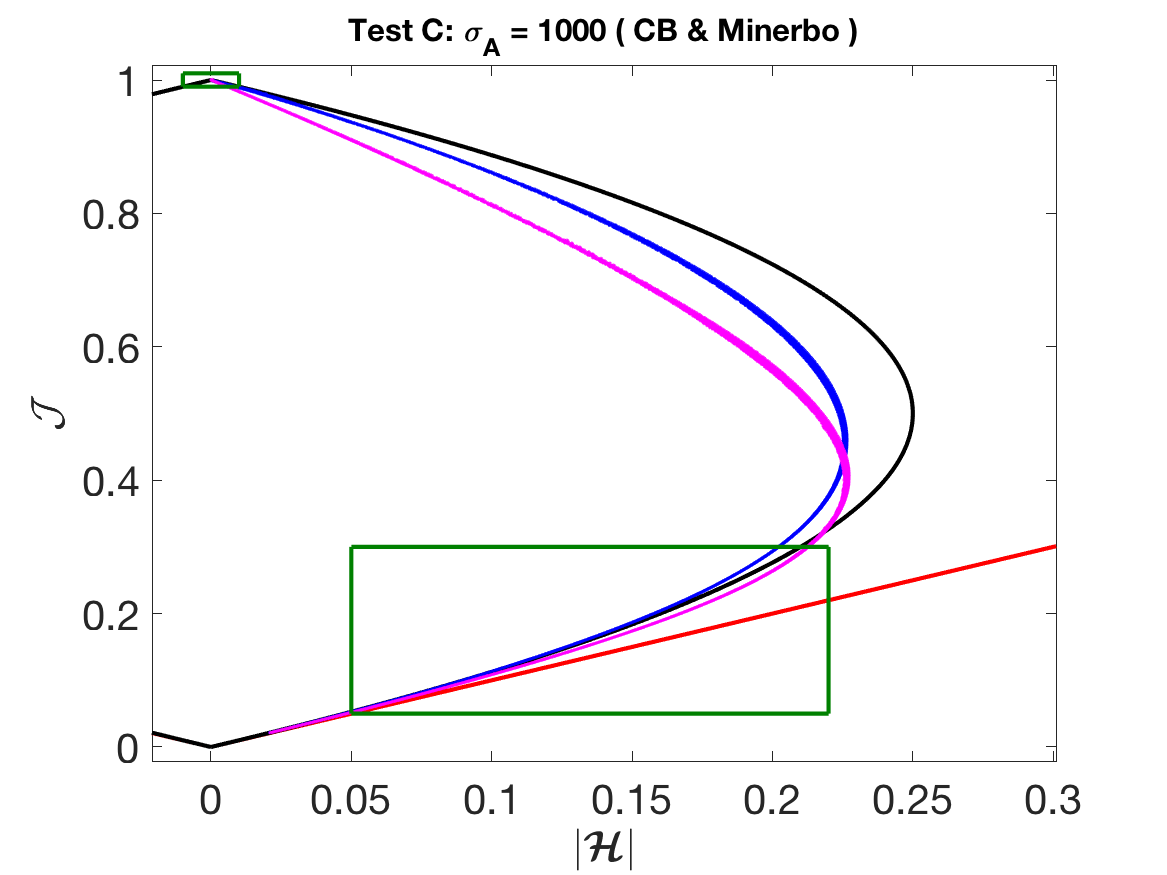}
  \begin{tabular}{cc}
    \includegraphics[width=0.5\textwidth]{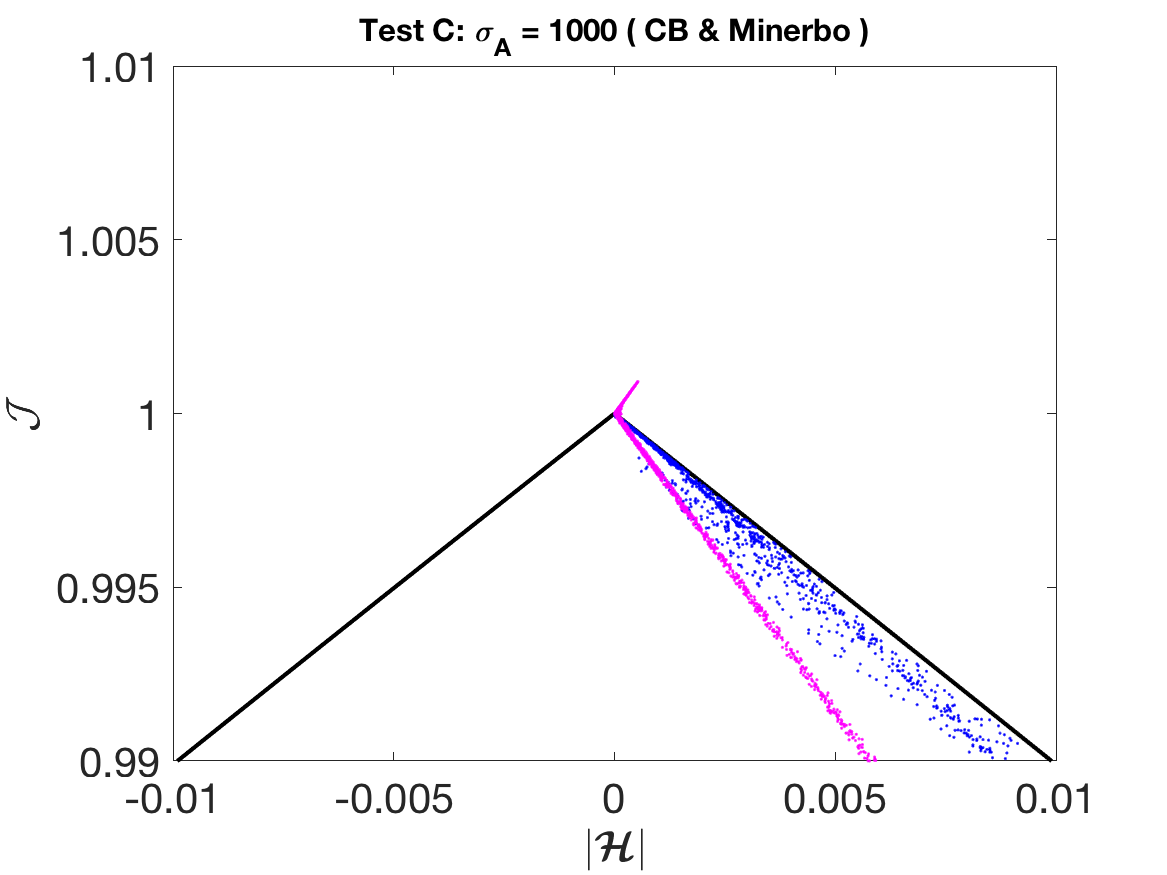}
    \includegraphics[width=0.5\textwidth]{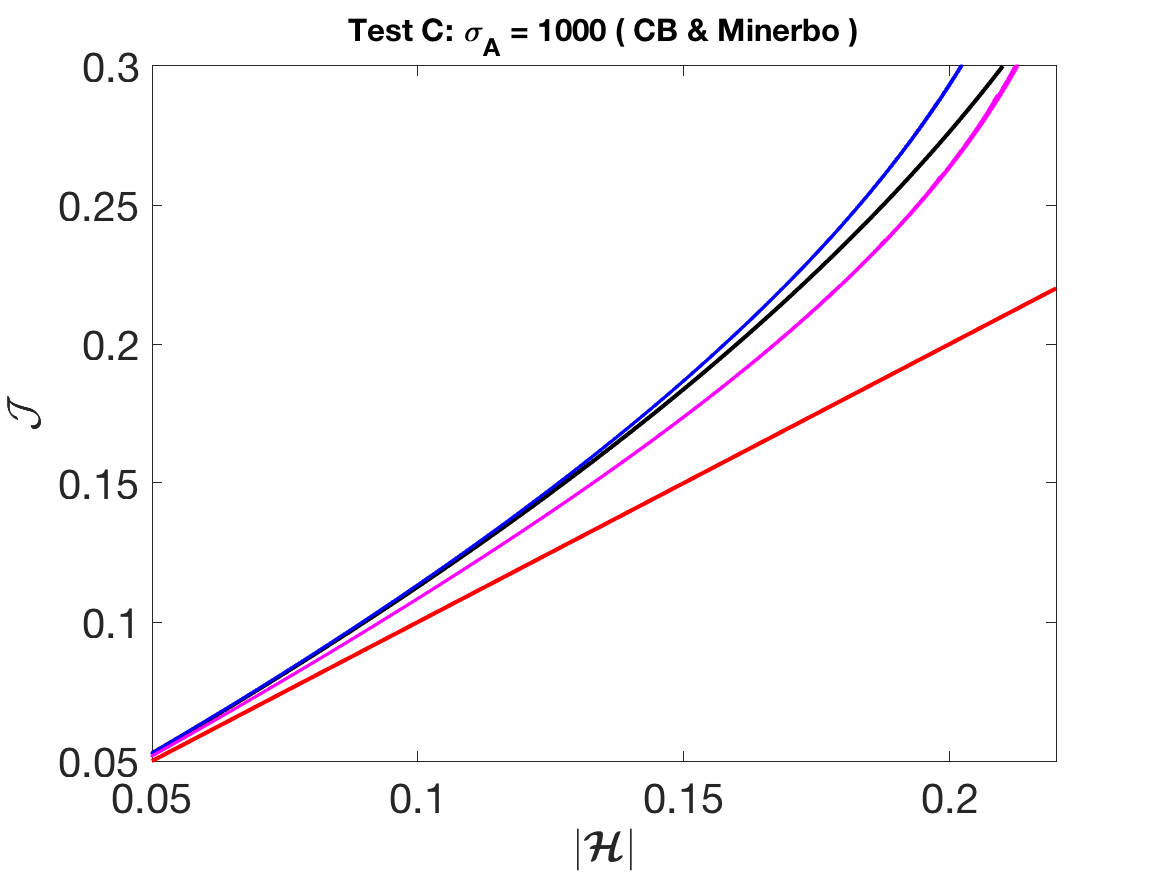}
  \end{tabular}
   \caption{Scatter plots of the numerical solution to the homogeneous sphere problem for Test~C in the $(|\vect{\cH}|,\cJ)$-plane.  Results obtained with the CB and Minerbo closures are plotted in the upper panel, blue and magenta points, respectively.  Zoom-ins on the solutions obtained with the two closures are plotted in the lower two panels (cf. green boxes in the upper panel).  The boundaries of the realizable domains $\cR$ and $\cR^{+}$ are indicated with solid black and solid red curves, respectively.  See text for further details.  }
  \label{fig:HomogeneousSphereRealizability}
\end{figure}
As can be seen in the upper panel in Figure~\ref{fig:HomogeneousSphereRealizability}, the solutions to the homogeneous sphere problem obtained with the two closures trace out distinct curves relative to the realizable domain $\cR$, whose boundary is indicated by solid black curves in each panel.  
When using the CB closure, the realizability-preserving DG-IMEX scheme developed here maintains solutions within $\cR$.  
When using the Minerbo closure, the appropriate realizable domain is given by $\cR^{+}$ (cf. Eq.~\eqref{eq:realizableSetPositive}), whose boundary is indicated by solid red lines in Figure~\ref{fig:HomogeneousSphereRealizability}, and we find that the numerical solution ventures outside $\cR$.  
Near the surface around $r=1$, the number density slightly exceeds unity (lower left panel), while for larger radii, the computed flux may exceed the value allowed by Fermi-Dirac statistics (lower right panel).  
\section{Summary and Conclusions}
\label{sec:conclusions}

We have developed a realizability-preserving DG-IMEX scheme for a two-moment model of fermion transport.  
The scheme employs algebraic closures based on Fermi-Dirac statistics and combines a time step restriction (CFL condition), a realizability-enforcing limiter, and a convex-invariant time integrator to maintain point-wise realizability of the moments.  
Since the realizable domain is a convex set, the realizability-preserving property is obtained from convexity arguments, building on the framework in \cite{zhangShu_2010a}.  

In the applications motivating this work, the collision term is stiff in regions of the computational domain, and we have considered IMEX schemes to avoid treating the transport operator implicitly.  
We have considered two recently proposed second-order accurate, convex-invariant IMEX schemes \cite{chertock_etal_2015,hu_etal_2018}, that restore second-order accuracy with an implicit correction step.  
However, we are unable to prove realizability (without invoking a very small time step) with the approach in \cite{chertock_etal_2015}, and we have demonstrated that the approach in \cite{hu_etal_2018} does not perform well in the diffusion limit.  
For these reasons, we have resorted to first-order, convex-invariant IMEX schemes.  
While the proposed scheme (dubbed PD-ARS) is formally only first-order accurate, it works well in the diffusion limit, is convex-invariant with a reasonable time step, and reduces to the optimal second-order accurate explicit SSP-RK scheme in the streaming limit.  

For each stage of the IMEX scheme, the update of the cell-averaged moments can be written as a convex combination of forward Euler steps (implying the Shu-Osher form for the explicit part), followed by a backward Euler step.  
Realizability of the cell-averaged moments due to the explicit part requires the DG solution to be realizable in a finite number of quadrature points in each element and the time step to satisfy a CFL condition.  
For the backward Euler step, realizability of the cell-averages follows easily from the simple form of the collision operator (which includes emission, absorption, and isotropic scattering without energy exchange), and is independent of the time step.  
The CFL condition is then solely due to the transport operator, and the time step can be as large as that of the forward Euler scheme applied to the explicit part of the cell-average update.  
After each stage update, the limiter enforces moment realizability point-wise by damping towards the realizable cell average.  
Numerical experiments are presented to demonstrate the accuracy and realizability-preserving property of the DG-IMEX scheme.  
The applicability of the PD-ARS scheme is not restricted to the fermionic two-moment model.  
It may therefore be a useful option in other applications of kinetic theory where physical constraints confine solutions to a convex set and capturing the diffusion limit is important.  

Realizability of the fermionic two-moment model depends sensitively on the closure procedure.  
For the algebraic closures adapted in this work, realizability of the scheme demands that lower and upper bounds on the Eddington factor are satisfied \cite{levermore_1984,lareckiBanach_2011}.  
The Eddington factors deriving from the maximum entropy closures of Cernohorsky \& Bludman \cite{cernohorskyBludman_1994} and Larecki \& Banach \cite{lareckiBanach_2011}, and the Kershaw-type closure of Larecki \& Banach \cite{banachLarecki_2017a} all satisfy these bounds and are suitable for the fermionic two-moment model.  
Further approximations of the closure procedure (e.g., employing the low occupancy limit, which results in the Minerbo closure \cite{minerbo_1978} when starting with the maximum entropy closure of \cite{cernohorskyBludman_1994}) is not compatible with realizability of the fermionic two-moment model, and we caution against this approach to modeling particle systems governed by Fermi-Dirac statistics; particularly if the low occupancy approximation is unlikely to hold (e.g., when modeling neutrino transport in core-collapse supernovae).  

In this work, we started with a relatively simple kinetic model.  
In particular, we adopted Cartesian coordinates, and assumed a linear collision operator and a fixed material background.  
Scattering with energy exchange and relativistic effects (e.g., due to a moving material and the presence of a strong gravitational field) were not included.  
To solve more realistic problems of scientific interest, some or all of these physical effects will have to be included.  
In the context of developing realizability-preserving schemes, these extensions will provide significant challenges suitable for future investigations, for which the scheme presented here may serve as a foundation.  
\appendix

\section{Butcher Tableau for IMEX Schemes}
\label{app:butcherTables}

For easy reference, we include the Butcher tableau for the IMEX schemes considered in this paper, which can be written in the standard double Butcher tableau
\begin{equation}
  \begin{array}{c | c}
    \tilde{\vect{c}} & \tilde{A} \\
    \hline
    & \tilde{\vect{w}}^{T}
  \end{array}
  \qquad
  \begin{array}{c | c}
    \vect{c} & A \\
    \hline
    \alpha & \vect{w}^{T}
  \end{array}.  
  \label{eq:butcher}
\end{equation}
The \emph{explicit tableau} (left; components adorned with a tilde) represents the explicit part of the IMEX scheme, and the \emph{implicit tableau} (right; unadorned components) represents the implicit part of the IMEX scheme.  
For $s$ stages, $\tilde{A}=(\tilde{a}_{ij})$, $\tilde{a}_{ij}=0$ for $j\ge i$, and $A=(a_{ij})$, $a_{ij}=0$ for $j>i$, are $s\times s$ matrices, and $\tilde{\vect{w}}=(\tilde{w}_{1},\ldots,\tilde{w}_{s})^{T}$ and $\vect{w}=(w_{1},\ldots,w_{s})^{T}$.  
The vectors $\tilde{\vect{c}}=(\tilde{c}_{1},\ldots,\tilde{c}_{s})^{T}$ and $\vect{c}=(c_{1},\ldots,c_{s})^{T}$, used for non autonomous systems, satisfy $\tilde{c}_{i}=\sum_{j=1}^{i-1}\tilde{a}_{ij}$ and $c_{i}=\sum_{j=1}^{i}a_{ij}$.  
For the implicit tableau, we have included the scalar $\alpha$, used for the correction step in Eq.~\eqref{eq:imexCorrection}.  

For the analysis of convex-invariant IMEX schemes, additional coefficients are defined \cite{hu_etal_2018} (cf. Eq.~\eqref{eq:imexStagesRewrite}).  
First, let
\begin{equation}
  b_{ii} = \f{1}{a_{ii}}, \quad
  b_{ij} = -\f{1}{a_{ii}}\sum_{l=j}^{i-1}a_{il}b_{lj}, \quad
  \tilde{b}_{ij} = -\f{1}{a_{ii}}\Big(\tilde{a}_{ij}+\sum_{l=j+1}^{i-1}a_{il}\tilde{b}_{lj}\Big).  
\end{equation}
Then, for IMEX schemes of Type~A \cite{dimarcoPareschi2013},
\begin{equation}
  \begin{aligned}
    c_{i0} &= 1-\sum_{j=1}^{i-1}\sum_{l=j}^{i-1}a_{il}b_{lj}, \quad &
    c_{ij} &= \sum_{l=j}^{i-1}a_{il}b_{lj}, \\
    \tilde{c}_{i0} &= 0, \quad &
    \tilde{c}_{ij} &= \tilde{a}_{ij} + \sum_{l=j+1}^{i-1}a_{il}\tilde{b}_{lj};
  \end{aligned}
  \label{eq:positivityCoefficientsA}
\end{equation}
for IMEX schemes of Type~ARS \cite{ascher_etal_1997},
\begin{equation}
  \begin{aligned}
    c_{i0} &= 1-\sum_{j=2}^{i-1}\sum_{l=j}^{i-1}a_{il}b_{lj}, \quad &
    c_{ij} &= \sum_{l=j}^{i-1}a_{il}b_{lj} \\
    \tilde{c}_{i0} &= \tilde{a}_{i1}+\sum_{j=2}^{i-1}a_{ij}\tilde{b}_{j1}, \quad &
    \tilde{c}_{ij} &= \tilde{a}_{ij}+\sum_{l=j+1}^{i-1}a_{il}\tilde{b}_{lj}.  
  \end{aligned}
  \label{eq:positivityCoefficientsARS}
\end{equation}
Note that $c_{i1}=\tilde{c}_{i1}=0$ in Eq.~\eqref{eq:positivityCoefficientsARS} so that $\sum_{j=0}^{i-1}c_{ij}=1$.  
Also note the difference between the matrix coefficients in Eqs.~\eqref{eq:positivityCoefficientsA} and \eqref{eq:positivityCoefficientsARS} and the vector components defined below Eq.~\eqref{eq:butcher}.  

\paragraph{IMEX PA2}

A second-order accurate, convex-invariant IMEX scheme of type $A$ (the matrix $A$ is invertible) with four implicit solves was given in \cite{hu_etal_2018}.  
We refer to this scheme as IMEX PA2.  
For this scheme, the non-zero components of $\tilde{A}$ and $A$ are given by
\begin{align*}
  \tilde{a}_{21} &= 0.7369502715, \\
  \tilde{a}_{31} &= 0.3215281691, \quad \tilde{a}_{32} = 0.6784718309, \\
  a_{11} &= 0.6286351712, \\
  a_{21} &= 0.2431004655, \quad a_{22} = 0.1959392570, \\
  a_{31} &= 0.4803651051, \quad a_{32} = 0.0746432814, \quad a_{33} = 0.4449916135. 
\end{align*}
The coefficient in the correction step is $\alpha = 0.2797373792$ and the CFL constant is $c_{\Sch} = 0.5247457524$.
This scheme is globally stiffly accurate (GSA), so that $\tilde{w}_{i}=\tilde{a}_{3i}$ and $w_{i}=a_{3i}$ for $i\le3$.

\paragraph{IMEX PA2+}

We have found another second-order accurate, convex-invariant IMEX scheme of type $A$ with four implicit solves, which we refer to as IMEX PA2+.  
This scheme allows for a larger value of $c_{\Sch}$ than IMEX PA2 (i.e., a larger time step while maintaining admissible solutions).  
The scheme was found by random sampling of the parameter space spanned by the IMEX coefficients and selecting the scheme with the largest $c_{\Sch}$.  
For IMEX PA2+, $c_{\Sch} = 0.895041066934$. 
The non-zero components of $\tilde{A}$ and $A$ are given by
\begin{align*}
  \tilde{a}_{21} &= 0.909090909090909, \\
  \tilde{a}_{31} &= 0.450000000000000, \quad \tilde{a}_{32} = 0.550000000000000, \\
  a_{11} &= 0.521932391842510, \\
  a_{21} &= 0.479820781424967, \quad a_{22} = 0.002234534340252, \\
  a_{31} &= 0.499900000000000, \quad a_{32} = 0.001100000000000, \quad a_{33} = 0.499000000000000.
\end{align*}
The coefficient in the correction step is $\alpha = 0.260444263529413$.  
This scheme is also GSA; $\tilde{w}_{i}=\tilde{a}_{3i}$ and $w_{i}=a_{3i}$ for $i\le3$.  

The rest of the IMEX schemes we consider here do not include the correction step in Eq.~\eqref{eq:imexCorrection}; i.e., $\alpha=0$.  

\paragraph{IMEX PC2}

Another IMEX scheme was given in \cite{mcclarren_etal_2008} (referred to there as a semi-implicit predictor-corrector method).  
This scheme has two implicit solves and can be written in the double Butcher tableau form, and we refer to this scheme as IMEX PC2.  
The non-zero components of $\tilde{A}$ and $A$ are given by
\begin{align*}
  \tilde{a}_{21} &= 0.5, \quad \tilde{a}_{32} = 1, \\
  a_{22} &= 0.5, \quad a_{33} = 1.0,
\end{align*}
$\alpha=0$, and $\tilde{w}_{i}=\tilde{a}_{3i} = w_{i}=a_{3i}$ for $i\le3$.  
IMEX PC2 is not convex-invariant, since $c_{\Sch} = 0$ (cf. discussion in Section~\ref{sec:imex}).  

\paragraph{IMEX PD-ARS}

We have found a family of convex-invariant, diffusion accurate IMEX schemes of type ARS that are second-order accurate in the streaming limit, which we refer to as IMEX PD-ARS; see \ref{app:PD-ARS}.  
For these schemes, $c_{\Sch}= 1 - 2\epsilon$ with $\epsilon \in [0, 1/2)$.
Here we give an example by setting $\epsilon=0.1$:
\begin{align*}
  \tilde{a}_{21} & = 1.0, \\
  \tilde{a}_{31} & = 0.5, \quad \tilde{a}_{32} = 0.5, \\
  a_{22} & = 1.0, \nonumber \\
  a_{32} & = 0.4 \,( = 0.5 - \epsilon\,), \quad a_{33} = 0.6 \,( = 0.5 + \epsilon\,). 
\end{align*}
This scheme is GSA, $\alpha=0$, and requires two implicit solves per time step (same as IMEX PC2).  

\paragraph{IMEX RKCB2}

We compare the performance of the convex-invariant IMEX schemes with two other (not convex-invariant) IMEX schemes.  
The first one is the second-order accurate IMEX scheme given in \cite{cavaglieriBewley2015} with two implicit solves.  
We refer to this scheme as IMEX RKCB2.  
The non-zero components of $\tilde{A}$ and $A$ are given by
\begin{align*}
  \tilde{a}_{21} &= 2/5, \quad \tilde{a}_{32} = 1, \\
  a_{22} &= 2/5, \nonumber \\
  a_{32} &= 5/6, \quad a_{33} = 1/6,
\end{align*}
$\alpha=0$, and $w_{i} = a_{3i} = \tilde{w}_{i}$ (stiffly accurate \cite{pareschiRusso_2005}).

\paragraph{IMEX SSP2332}

Another scheme that we use for comparison is the second-order accurate IMEX scheme given in \cite{pareschiRusso_2005} with three implicit solves.  
We refer to this scheme as IMEX SSP2332.  
The non-zero components of $\tilde{A}$ and $A$ are given by
\begin{align*}
  \tilde{a}_{21} &= 1/2, \\
  \tilde{a}_{31} &= 1/2, \quad \tilde{a}_{32} = 1/2, \\
  a_{11} &= 1/4, \\
  a_{22} &= 1/4, \\
  a_{31} &= 1/3, \quad a_{32} = 1/3, \quad a_{33} = 1/3, 
\end{align*}
$\alpha=0$, and $w_{i} = a_{3i} = \tilde{w}_{i}$ (stiffly accurate).

\paragraph{SSPRK2 and SSPRK3}

To compare the performance of the IMEX schemes in the streaming limit (no collisions), we also compute results with explicit strong stability-preserving Runge-Kutta methods \cite{gottlieb_etal_2001}.  
(All elements of the implicit Butcher tableau are zero.)  
The optimal second-order accurate, strong-stability-preserving Runge-Kutta scheme (SSPRK2) has the following non-zero components:
\begin{align}
  \tilde{a}_{21} &= 1, \nonumber \\ 
  \tilde{w}_{1}  &= 1/2, \quad \tilde{w}_{2} = 1/2. \nonumber 
\end{align}
The optimal third-order accurate, strong-stability-preserving Runge-Kutta scheme (SSPRK3) has the following non-zero components:
\begin{align}
  \tilde{a}_{21} &= 1, \nonumber \\
  \tilde{a}_{31} &= 1/4, \quad \tilde{a}_{32} = 1/4, \nonumber \\
  \tilde{w}_{1} &= 1/6, \quad \tilde{w}_{2} = 1/6, \quad \tilde{w}_{3} =2/3. \nonumber
\end{align}

\section{Construction of IMEX Scheme PD-ARS}
\label{app:PD-ARS}

Here we construct a three-stage PD-IMEX scheme of Type~ARS, conforming to Definition~\ref{def:PD-IMEX}.  
We refer to the resulting IMEX scheme as PD-ARS.  
For a 3-stage scheme, the double Butcher tableau is
\begin{equation}
  \begin{array}{c | c c c}
  	         0           & 0                 & 0                   & 0  \\
  	\tilde{c}_{2} & \tilde{a}_{21} & 0                   & 0  \\
  	\tilde{c}_{3} & \tilde{a}_{31} & \tilde{a}_{32} & 0  \\ \hline
  	                   & \tilde{a}_{31} & \tilde{a}_{32} & 0
  \end{array}
  \qquad
  \begin{array}{c | c c c}
  	     0  & 0  & 0         & 0          \\
  	c_{2} & 0 & a_{22} & 0          \\
  	c_{3} & 0 & a_{32} & a_{33}  \\ \hline
  	         & 0 & a_{32} & a_{33}
  \end{array}
\end{equation}
The problem is then to find the coefficients $\{ \tilde{a}_{21}, \tilde{a}_{31}, \tilde{a}_{32}, a_{22}, a_{32}, a_{33} \}$ satisfying the constraints in Definition~\ref{def:PD-IMEX} while maximizing
\begin{equation}
  c_{\Sch} = \min \Big\{\, \dfrac{c_{20}}{\tilde{c}_{20}},\, \dfrac{c_{30}}{\tilde{c}_{30}},\, \dfrac{c_{32}}{\tilde{c}_{32}} \,\Big\}.  
\end{equation}
By imposing the equality constraints (i.e., Eqs.~\eqref{eq:diffusionCondition}, \eqref{eq:orderConditionsEx}, and \eqref{eq:implicitConsistency}), the double Butcher tableau can be written in terms of two independent parameters ($x,y\in\bbR$) as
\begin{equation}
  \begin{array}{c | c c c}
  	     0       & 0            & 0 & 0 \\
  	\frac{1}{2x} & \frac{1}{2x} & 0 & 0 \\
  	     1       & 1-x          & x & 0 \\ \hline
  	             & 1-x          & x & 0
  \end{array}
  \qquad
  \begin{array}{c | c c c}
  	     0       & 0 & 0            & 0 \\
  	\frac{1}{2x} & 0 & \frac{1}{2x} & 0 \\
  	     1       & 0 & 1-y          & y \\ \hline
  	             & 0 & 1-y          & y
  \end{array}
\end{equation}
Computing the relevant coefficients in Eq.~\eqref{eq:positivityCoefficientsARS}, we find $c_{20}=1$, $c_{30}=1-2x(1-y)$, $c_{32}=2x(1-y)$, $\tilde{c}_{20}=\f{1}{2x}$, $\tilde{c}_{30}=(y-x)$, and $\tilde{c}_{32}=x$, so that
\begin{equation}
  c_{\Sch} = \min \Big\{\, 2x,\, \f{1-2x(1-y)}{y-x},\, 2(1-y) \,\Big\}.  
\end{equation}
The convex-invariant property requires imposing the inequality constraints $a_{22},a_{33}>0$, $c_{20},c_{30},c_{32}\ge0$, and $\tilde{c}_{20},\tilde{c}_{30},\tilde{c}_{32}\ge0$, which imply that
\begin{equation}
  0 < x \le y
  \quad\text{and}\quad
  0 < y \le 1.
\end{equation}
We chose $x=\f{1}{2}$, so that the explicit part of the IMEX scheme is equivalent to the optimal second-order SSP-RK scheme in \cite{gottlieb_etal_2001} (SSPRK2 in \ref{app:butcherTables}).  
Then, $y=\f{1}{2} + \epsilon$, where $\epsilon \in [0, \frac{1}{2})$, and $c_{\Sch} = 1 - 2\epsilon$ results in the PD-ARS IMEX scheme.  
Setting $\epsilon = 0$ gives the optimal scheme with $c_{\Sch} = 1$.  

\section{Nonexistence of Three-Stage PD-IMEX Scheme of Type~A}
\label{app:noTypeA}

Here we prove that a PD-IMEX scheme  of type~A (i.e., conforming to Definition~\ref{def:PD-IMEX}, but with Eq.~\eqref{eq:positivityConditionsTypeA} replacing Eq.~\eqref{eq:positivityConditionsTypeARS} in item 3) does not exist.  
First, for a three-stage, GSA IMEX scheme of type~A the double Butcher tableau is
\begin{equation*}
  \begin{array}{c | c c c}
  	      0       & 0              & 0              & 0 \\
  	\tilde{c}_{2} & \tilde{a}_{21} & 0              & 0 \\
  	\tilde{c}_{3} & \tilde{a}_{31} & \tilde{a}_{32} & 0 \\ \hline
  	              & \tilde{a}_{31} & \tilde{a}_{32} & 0
  \end{array}
  \qquad
  \begin{array}{c | c c c}
    c_{1} & a_{11} & 0      & 0      \\
  	c_{2} & a_{21} & a_{22} & 0      \\
  	c_{3} & a_{31} & a_{32} & a_{33} \\ \hline
  	      & a_{31} & a_{32} & a_{33}
  \end{array}.
\end{equation*}
First we consider the equality constraints.  
Consistency of the implicit coefficients and second-order accuracy in the streaming limit (Eqs.~\eqref{eq:implicitConsistency} and \eqref{eq:orderConditionsEx}, respectively) give
\begin{align}
  a_{31} + a_{32} + a_{33} = 1, \quad \tilde{a}_{31} + \tilde{a}_{32} = 1, \quad \text{and}\quad \tilde{a}_{32}\,\tilde{a}_{21} =  \f{1}{2}.
  \label{eq:1stOrderGeneral}
\end{align}
Accuracy in the diffusion limit (Eq.~\eqref{eq:diffusionCondition}) requires 
\begin{align}
  \f{\tilde{a}_{21}}{a_{22}} = 1
  \quad\text{and}\quad 
  -\f{a_{32}\,\tilde{a}_{21}}{a_{22}\,a_{33}} + \f{\tilde{a}_{31}+\tilde{a}_{32}}{a_{33}} = 1.
\label{eq:diffusionAccurate}
\end{align}
Eq.~\eqref{eq:diffusionAccurate} together with the second constraint in Eq~\eqref{eq:1stOrderGeneral} gives $a_{32}+a_{33}=1$, which together with the first constraint in Eq.~\eqref{eq:1stOrderGeneral} gives
\begin{align}
  a_{31} = 0.
  \label{eq:a31is0}
\end{align}
Next we consider the inequality constraints.  
The convex-invariant property in Eq.~\eqref{eq:positivityConditionsTypeA} requires $a_{11}, a_{22}, a_{33}>0$, and 
\begin{equation}
  c_{21} = \frac{a_{21}}{a_{11}} \geq 0, \quad
  c_{31} = \frac{a_{31}}{a_{11}} - \frac{a_{32}\,a_{21}}{a_{22}\,a_{11}} \geq 0, \quad\text{and}\quad
  c_{32} = \frac{a_{32}}{a_{22}} \geq 0.
  \label{eq:constraintsTypeA}
\end{equation}
As a consequence, $a_{21}, a_{32} \geq 0$.  
However, since $a_{31} = 0$, 
\begin{equation*}
  c_{31} = - \frac{a_{32}a_{21}}{a_{22}a_{11}} \le 0.  
\end{equation*}
Thus the inequality constraints in Eq.~\eqref{eq:constraintsTypeA} hold only for $c_{31} = 0$, which gives $c_{\Sch} = \min \Big\{\,\f{c_{21}}{\tilde{c}_{21}}, \f{c_{31}}{\tilde{c}_{31}}, \f{c_{32}}{\tilde{c}_{32}}\Big\} = 0$.
Therefore, a three-stage PD-IMEX scheme (Definition~\ref{def:PD-IMEX}) of type~A does not exist.

\section*{References}

\bibliography{./references/references.bib}

\begin{thebibliography}{10}
\expandafter\ifx\csname url\endcsname\relax
  \def\url#1{\texttt{#1}}\fi
\expandafter\ifx\csname urlprefix\endcsname\relax\def\urlprefix{URL }\fi
\expandafter\ifx\csname href\endcsname\relax
  \def\href#1#2{#2} \def\path#1{#1}\fi

\bibitem{zhangShu_2010a}
X.~{Zhang}, C.-W. {Shu}, {On maximum-principle-satisfying high order schemes
  for scalar conservation laws}, Journal of Computational Physics 229 (2010)
  3091--3120.

\bibitem{zhangShu_2010b}
X.~Zhang, C.-W. Shu, On positivity preserving high order discontinuous galerkin
  schemes for compressible euler equations on rectangular meshes, Journal of
  Computational Physics 229 (2010) 8918--8934.

\bibitem{cernohorskyBludman_1994}
J.~{Cernohorsky}, S.~A. {Bludman}, Maximum entropy distribution and closure for
  bose-einstein and {F}ermi-{D}irac radiation transport, Astrophysical Journal
  433~(1) (1994) 450 -- 455.

\bibitem{banachLarecki_2017a}
Z.~Banach, W.~Larecki, {Kershaw-type transport equations for fermionic
  radiation}, Zeitschrift f{\"u}r angewandte Mathematik und Physik 68~(4)
  (2017) 100.

\bibitem{shuOsher_1988}
C.-W. {Shu}, O.~S., {Efficient Implementation of Essentially Non-oscillatory
  Shock-Capturing Schemes}, Journal of Computational Physics 77 (1988)
  439--471.

\bibitem{braginskii_1965}
S.~I. {Braginskii}, {Transport Processes in a Plasma}, Reviews of Plasma
  Physics 1 (1965) 205.

\bibitem{chapmanCowling_1970}
S.~{Chapman}, T.~{Cowling}, The Mathematical Theory of Non-uniform Gases,
  Cambridge Mathematical Library, Cambridge University Press, 1970.

\bibitem{lifshitzPitaevskii_1981}
E.~M. {Lifshitz}, L.~P. {Pitaevskii}, {Physical Kinetics}, no.~10 in Course of
  Theoretical Physics, Pergamon Press, 1981.

\bibitem{lindquist_1966}
R.~W. {Lindquist}, {Relativistic transport theory}, Annals of Physics 37 (1966)
  487--518.

\bibitem{andersonSpiegel_1972}
J.~L. {Anderson}, E.~A. {Spiegel}, {The Moment Method in Relativistic Radiative
  Transfer}, Astrophysical Journal 171 (1972) 127.

\bibitem{thorne_1981}
K.~S. {Thorne}, {Relativistic radiative transfer - Moment formalisms}, MNRAS
  194 (1981) 439--473.

\bibitem{shibata_etal_2011}
M.~{Shibata}, K.~{Kiuchi}, Y.~{Sekiguchi}, Y.~{Suwa}, {Truncated Moment
  Formalism for Radiation Hydrodynamics in Numerical Relativity}, Progress of
  Theoretical Physics 125 (2011) 1255--1287.

\bibitem{cardall_etal_2013a}
C.~Y. {Cardall}, E.~{Endeve}, A.~{Mezzacappa}, {Conservative 3+1 general
  relativistic variable Eddington tensor radiation transport equations},
  Physical Review D 87 (2013) 103004.

\bibitem{brunnerHolloway_2005}
T.~A. {Brunner}, J.~P. {Holloway}, {Two-dimensional time dependent Riemann
  solvers for neutron transport}, Journal of Computational Physics 210 (2005)
  386--399.

\bibitem{mcclarrenHauck_2010}
R.~G. {McClarren}, C.~D. {Hauck}, {Robust and accurate filtered spherical
  harmonics expansions for radiative transfer}, Journal of Computational
  Physics 229 (2010) 5597--5614.

\bibitem{laboure_etal_2016}
V.~M. {Laboure}, R.~G. {McClarren}, C.~D. {Hauck}, Implicit filtered p n for
  high-energy density thermal radiation transport using discontinuous galerkin
  finite elements, Journal of Computational Physics 321 (2016) 624--643.

\bibitem{mihalasMihalas_1999}
D.~{Mihalas}, B.~W. {Mihalas}, {Foundations of radiation hydrodynamics}, Dover
  (New York), 1999.

\bibitem{kershaw_1976}
D.~{Kershaw}, {Flux limiting nature's own way --- a new method for numerical
  solution of the transport equation}, Tech. Rep. UCRL-78378, Lawrence
  Livermore Laboratory (1976).

\bibitem{minerbo_1978}
G.~N. {Minerbo}, {Maximum entropy Eddington factors.}, JQSRT 20 (1978)
  541--545.

\bibitem{olbrant_etal_2013}
E.~{Olbrant}, C.~D. {Hauck}, M.~{Frank}, {Perturbed, entropy-based closure for
  radiative transfer}, Kinetic and Related Models 6 (2013) 557--587.

\bibitem{leveque_1992}
R.~J. {Leveque}, {Numerical Methods for Conservation Laws}, Lectures in
  Mathematics. ETH Z{\"u}rich, Birkh{\"a}user, 1992.

\bibitem{levermore_1984}
C.~D. {Levermore}, {Relating Eddington factors to flux limiters.}, JQSRT 31
  (1984) 149--160.

\bibitem{levermore_1996}
C.~D. {Levermore}, {Moment closure hierarchies for kinetic theories}, Journal
  of Statistical Physics 83 (1996) 1021--1065.

\bibitem{junk_1998}
M.~{Junk}, Domain of definition of levermore's five-moment system, Journal of
  Statistical Physics 93 (1998) 1143--1167.

\bibitem{hauck_2008}
C.~D. Hauck, C.~D. Levermore, A.~L. Tits, {Convex duality and entropy-based
  moment closures: Characterizing degenerate densities}, SIAM Journal on
  Control and Optimization 47 (2008) 1977--2015.

\bibitem{olbrant_etal_2012}
E.~Olbrant, C.~D. Hauck, M.~Frank, A realizability-preserving discontinuous
  galerkin method for the m1 model of radiative transfer, Journal of
  Computational Physics 231~(17) (2012) 5612--5639.

\bibitem{lareckiBanach_2011}
W.~{Larecki}, Z.~{Banach}, {Entropic Derivation of the Spectral Eddington
  Factors}, JQSRT 112 (2011) 2486--2506.

\bibitem{banachLarecki_2013}
Z.~Banach, W.~Larecki, {Spectral maximum entropy hydrodynamics of fermionic
  radiation: a three-moment system for one-dimensional flows}, Nonlinearity 26
  (2013) 1667--1701.

\bibitem{banachLarecki_2017b}
Z.~Banach, W.~Larecki, {Entropy-based mixed three-moment description of
  fermionic radiation transport in slab and spherical geometries}, Kinetic \&
  Related Models 10~(4) (2017) 879--900.

\bibitem{cockburnShu_2001}
B.~{Cockburn}, C.-W. {Shu}, {Runge-Kutta Discontinuous Galerkin Methods for
  Convection-Dominated Problems}, Journal of Scientific Computing 16 (2001)
  173--261.

\bibitem{hesthavenWarburton_2008}
J.~S. {Hesthaven}, T.~{Warburton}, {Nodal discontinuous {G}alerkin methods:
  {A}lgorithms, analysis and applications}, Springer, 2008.

\bibitem{klockner_etal_2009}
A.~{Kl{\"o}ckner}, T.~{Warburton}, J.~{Bridge}, J.~S. {Hesthaven}, {Nodal
  discontinuous Galerkin methods on graphics processors}, Journal of
  Computational Physics 228 (2009) 7863--7882.

\bibitem{teukolsky_2016}
S.~A. {Teukolsky}, {Formulation of discontinuous Galerkin methods for
  relativistic astrophysics}, Journal of Computational Physics 312 (2016)
  333--356.

\bibitem{larsenMorel_1989}
E.~W. {Larsen}, J.~E. {Morel}, {Asymptotic Solutions of Numerical Transport
  Problems in Optically Thick, Diffusive Regimes II}, Journal of Computational
  Physics 83 (1989) 212--236.

\bibitem{adams_2001}
M.~L. Adams, Discontinuous finite element transport solutions in thick
  diffusive problems, Nuclear science and engineering 137~(3) (2001) 298--333.

\bibitem{guermondKanschat_2010}
J.-L. {Guermond}, G.~{Kanschat}, {Asymptotic Analysis of Upwind Discontinuous
  Galerkin Approximation of the Radiative Transport Equation in the Diffusive
  Limit}, SIAM J. Numer. Anal. 48 (2010) 53--78.

\bibitem{reedHill_1973}
W.~{Reed}, T.~{Hill}, {Triangular mesh methods for the neutron transport
  equation}, Tech. Rep. LA-UR-73-479, Los Alamos National Laboratory (1973).

\bibitem{shu_2016}
C.-W. {Shu}, {High order WENO and DG methods for time-dependent
  convection-dominated PDEs: A brief survey of several recent developments},
  Journal of Computational Physics 316 (2016) 598--613.

\bibitem{xing_etal_2010}
Y.~{Xing}, X.~{Zhang}, C.-W. {Shu}, Positivity-preserving high order
  well-balanced discontinuous galerkin methods for the shallow water equations,
  Advances in Water Resources 33 (2010) 1476--1493.

\bibitem{zhangShu_2011}
X.~{Zhang}, C.-W. {Shu}, Positivity-preserving high order discontinuous
  {G}alerkin schemes for compressible {E}uler equations with source terms,
  Journal of Computational Physics 230 (2011) 1238--1248.

\bibitem{cheng_etal_2013}
Y.~{Cheng}, F.~{Li}, J.~{Qiu}, L.~{Xu}, Positivity-preserving dg and central dg
  methods for ideal mhd equations, Journal of Computational Physics 238 (2013)
  255--280.

\bibitem{zhang_etal_2013}
Y.~{Zhang}, X.~{Zhang}, C.-W. {Shu}, Maximum-principle-satisfying second order
  discontinuous galerkin schemes for convection-diffusion equations on
  triangular meshes, Journal of Computational Physics 234 (2013) 295--316.

\bibitem{endeve_etal_2015}
E.~{Endeve}, C.~D. {Hauck}, Y.~{Xing}, A.~{Mezzacappa}, {Bound-Preserving
  Discontinuous Galerkin Methods for Conservative Phase Space Advection in
  Curvilinear Coordinates}, Journal of Computational Physics 287 (2015)
  151--183.

\bibitem{wuTang_2015}
K.~{Wu}, H.~{Tang}, {High-order accurate physical-constraint-preserving finite
  difference WENO schemes for special relativistic hydrodynamics}, Journal of
  Computational Physics 298 (2015) 539--564.

\bibitem{ascher_etal_1997}
U.~{Ascher}, S.~{Ruuth}, R.~{Spiteri}, {Implicit-explicit Runge-Kutta methods
  for time-dependent partial differential equations}, Applied Numerical
  Mathematics 25 (1997) 151--167.

\bibitem{pareschiRusso_2005}
L.~{Pareschi}, G.~{Russo}, {Implicit-Explicit Runge-Kutta Schemes and
  Application to Hyperbolic Systems with Relaxation}, Journal of Scientific
  Computing 25 (2005) 129--155.

\bibitem{gottlieb_etal_2001}
E.~{Gottlieb}, C.-W. {Shu}, E.~{Tadmor}, {Strong Stability-Preserving
  High-Order Time Discretization Methods}, SIAM Review 43 (2001) 89--112.

\bibitem{chertock_etal_2015}
A.~{Chertock}, S.~{Cui}, A.~{Kurganov}, T.~{Wu}, {Steady State and Sign
  Preserving Semi-Implicit Runge-Kutta Methods for ODEs with Stiff Damping
  Term}, SIAM J. Numer. Anal. 53 (2015) 2008--2029.

\bibitem{hu_etal_2018}
J.~{Hu}, R.~{Shu}, X.~{Zhang}, {Asymptotic-preserving and positivity-preserving
  implicit-explicit schemes for the stiff BGK equation}, SIAM Journal on
  Numerical Analysis 56~(2) (2018) 942--973.

\bibitem{roberts_etal_2016}
L.~F. {Roberts}, C.~D. {Ott}, R.~{Haas}, E.~P. {O'Connor}, P.~{Diener},
  E.~{Schnetter}, {General-Relativistic Three-Dimensional Multi-group Neutrino
  Radiation-Hydrodynamics Simulations of Core-Collapse Supernovae},
  Astrophysical Journal 831 (2016) 98.

\bibitem{foucart_etal_2015}
F.~{Foucart}, E.~{O'Connor}, L.~{Roberts}, M.~D. {Duez}, R.~{Haas}, L.~E.
  {Kidder}, C.~D. {Ott}, H.~P. {Pfeiffer}, M.~A. {Scheel}, B.~{Szilagyi},
  {Post-merger evolution of a neutron star-black hole binary with neutrino
  transport}, Phys. Rev. D 91~(12) (2015) 124021.

\bibitem{janka_etal_1992}
H.-T. {Janka}, R.~{Dgani}, L.~J. {van den Horn}, {Fermion angular distribution
  and maximum entropy Eddington factors}, Astronomy \& Astrophysics 265 (1992)
  345--354.

\bibitem{pons_etal_2000}
J.~A. {Pons}, J.~M. {Ib{\'a}{\~n}ez}, J.~A. {Miralles}, {Hyperbolic character
  of the angular moment equations of radiative transfer and numerical methods},
  MNRAS 317 (2000) 550--562.

\bibitem{smit_etal_2000}
J.~M. {Smit}, L.~J. {van den Horn}, S.~A. {Bludman}, {Closure in flux-limited
  neutrino diffusion and two-moment transport}, Astronomy \& Astrophysics 356
  (2000) 559--569.

\bibitem{just_etal_2015}
O.~{Just}, M.~{Obergaulinger}, H.-T. {Janka}, {A new multidimensional,
  energy-dependent two-moment transport code for neutrino-hydrodynamics}, MNRAS
  453 (2015) 3386--3413.

\bibitem{murchikova_etal_2017}
E.~M. {Murchikova}, E.~{Abdikamalov}, T.~{Urbatsch}, {Analytic closures for M1
  neutrino transport}, MNRAS 469 (2017) 1725--1737.

\bibitem{shohatTamarkin_1943}
J.~Shohat, J.~Tamarkin, The Problem of Moments, Mathematical Surveys and
  Monographs, American Mathematical Society, 1943.

\bibitem{dimarcoPareschi2013}
G.~Dimarco, L.~Pareschi, Asymptotic preserving implicit-explicit runge--kutta
  methods for nonlinear kinetic equations, SIAM Journal on Numerical Analysis
  51~(2) (2013) 1064--1087.

\bibitem{jinLevermore_1996}
S.~{Jin}, C.~{Levermore}, {Numerical Schemes for Hyperbolic Conservation Laws
  with Stiff Relaxation Terms}, Journal of Computational Physics 126 (1996)
  942--973.

\bibitem{mcclarren_etal_2008}
R.~{McClarren}, T.~{Evans}, R.~{Lowrie}, J.~{Densmore}, {Semi-implicit time
  integration for PN thermal radiative transfer}, Journal of Computational
  Physics 227~(16) (2008) 7561--7586.

\bibitem{radice_etal_2013}
D.~{Radice}, E.~{Abdikamalov}, L.~{Rezzolla}, C.~D. {Ott}, {A new spherical
  harmonics scheme for multi-dimensional radiation transport I. Static matter
  configurations}, Journal of Computational Physics 242 (2013) 648--669.

\bibitem{liuOsher_1996}
X.~D. {Liu}, S.~{Osher}, Nonoscillatory high order accurate self-similar
  maximum principle satisfying shock capturing schemes {I}, SIAM J. Numer.
  Anal. 33~(2) (1996) 760--779.

\bibitem{cavaglieriBewley2015}
D.~{Cavaglieri}, T.~{Bewley}, {Low-storage implicit/explicit Runge–Kutta
  schemes for the simulation of stiff high-dimensional ODE systems}, Journal of
  Computational Physics 286 (2015) 172 -- 193.

\bibitem{skinnerOstriker_2013}
M.~A. {Skinner}, E.~C. {Ostriker}, {A Two-moment Radiation Hydrodynamics Module
  in Athena Using a Time-explicit Godunov Method}, Astrophysical Journal
  Supplement Series 206 (2013) 21.

\bibitem{brunner_2002}
T.~A. {Brunner}, {Forms of approximate radiation transport}, Technical Report,
  Sandia National Laboratories SAND2002-1778 (2002) 1--43.

\bibitem{garrettHauck_2013}
K.~C. {Garrett}, C.~D. {Hauck}, {A Comparison of Moment Closures for Linear
  Kinetic Transport Equations: The Line Source Benchmark}, Transport Theory and
  Statistical Physics 42 (2013) 203--235.

\bibitem{smit_etal_1997}
J.~M. {Smit}, J.~{Cernohorsky}, C.~P. {Dullemond}, {Hyperbolicity and critical
  points in two-moment approximate radiative transfer.}, Astronomy \&
  Astrophysics 325 (1997) 203--211.

\end{thebibliography}
\end{document}